\newtheorem{theorem}{Theorem}
\newtheorem{claim}{Claim}
\newtheorem{definition}{Definition}
\newtheorem{lemma}{Lemma}
\newtheorem{proposition}{Proposition}
\newtheorem{remark}{Remark}
\newenvironment{proof}[1][Proof]{\noindent\textbf{#1.} }{\ \rule{0.5em}{0.5em}}
\DeclareMathOperator\supp{supp}
\begin{document}

\title{Implementation with Uncertain Evidence\thanks{%
The authors wish to thank Satoru Takahashi, Yifei Sun, and seminar
participants at the Society for Advancement of Economic Theory (SAET 2024)
and the Fourth PKU-NUS Annual International Conference 2022. Financial
support from the Singapore Ministry of Education Academic Research Fund Tier
1 (R-122-000-328-115) is gratefully acknowledged.}}
\author{Soumen Banerjee\thanks{%
China Center for Behavioural Economics and Finance, Southwestern University
of Finance and Economics, Email: soumen08@gmail.com} \and Yi-Chun Chen%
\thanks{%
Department of Economics and Risk Management Institute, National University
of Singapore. Email: ecsycc@nus.edu.sg}}
\maketitle

\begin{abstract}
We study a full implementation problem with a state unknown to the designer
but known to agents, where agents have uncertain evidence privately drawn
from state-dependent distributions. Stochastic evidence enables ``perfect
deceptions,'' where agents' reports can mimic the evidence distribution of a
false state, making differentiation impossible for any mechanism. This
yields our main result: a necessary and sufficient condition, No Perfect
Deceptions (NPD), for implementation in (mixed-strategy) Bayesian Nash
equilibria. The solution requires novel techniques like belief elicitation
via competing scoring rules, and an endogenous ``test allocation'' using the
evidence structure. For informationally small agents (\cite{MP2002}), a
generalized condition (GNPD) is sufficient. Our mechanisms work for two or
more agents, avoid integer/modulo games, and use limited liability transfers
that vanish in equilibrium.
\end{abstract}

\pagebreak

\section{Introduction}

Consider a government allocating the rights to an oil drilling project
amongst some firms. Depending on the scale of the deposits, the government
wants to allocate the drilling rights to one of the firms, prioritizing
larger firms for larger deposits. The firms receive private but correlated
signals about the deposits through their prospecting operations, and may
also be able to provide some (stochastically available or \emph{uncertain})
evidence regarding these signals. The government wishes to estimate the
scale of the deposits based on the signals received by the firms, but firms
do not necessarily have the incentive to report their signals truthfully
since each firm wants the drilling rights regardless of the scale of the
deposits.

Alternately, consider a personal injury trial, in which a judge must
determine appropriate compensation from the defendant to the plaintiff. The
judge seeks to match compensation to actual damages but lacks direct
knowledge of the damages. Both parties know the true damage, but have
opposing interests: the plaintiff wants higher compensation, the defendant
lower. Evidence comes from investigation and is therefore uncertain, though
both parties may anticipate what evidence might emerge.

These examples share several key features. First, the designer lacks
knowledge of a payoff-relevant state, $s$. Second, agents have preferences
that are independent of the state and potentially misaligned with those of
the designer. Third, there exists hard evidence in the form of messages that
agents can send, which are only available in certain states of the world and
thus \emph{refute} other states. However, the availability of this evidence
is uncertain -- it is stochastic -- so agents are unsure about the evidence
others may possess. Finally, monetary incentives are feasible. Such settings
are widespread, making it essential to develop robust mechanisms for
credible information elicitation.

The main contribution of this paper is to propose well-behaved implementing
mechanisms which work for settings with uncertain evidence while allowing
for state-independent preferences. In studying full implementation when hard
evidence is available, the literature has so far focused mainly on the case
of deterministic evidence (\cite{KT2012}, \cite{BL2012}, \cite{BCS2020}) and
complete information about the decision-relevant state.\footnote{\cite{P2019}
studies Bayesian implementation in a setting with evidence, but focuses on
direct mechanisms. We allow for natural and bounded (but potentially
indirect) mechanisms.} While this yields positive results under weak
conditions, evidence is often the result of research or investigation which
are fundamentally uncertain in nature, so the assumption that agents have
complete information in evidence endowments is unrealistic. The mechanisms
we propose will not rely on preference variation, but will be robust to
preference variation, an approach we share with \cite{BL2012} and \cite%
{BCS2020}.

When studying implementation with uncertain evidence, the information
structure regarding the decision-relevant state is crucial. While the
designer is uninformed, agents often possess significant information. This
can range from complete information, as in the trial example where damages
are commonly known, to environments with \textquotedblleft informationally
small\textquotedblright\ agents \cite{MP2002}, where aggregating all but one
agent's information largely reveals the remaining agent's type.\footnote{%
Loosely speaking, information smallness refers to the notion that the
influence of any one agent on the posterior belief about the state (in this
case the profile of state types) is small.} A stronger condition,
Nonexclusive Information (NEI, \cite{PS1986}), posits that such aggregation
yields perfect revelation. This paper studies how to achieve full
implementation -- ensuring the socially desired outcome in every equilibrium
-- across these informational settings when evidence is stochastic and
preferences may be state-independent.

We obtain the following results. First, when $s$ is common knowledge, a
condition we term \emph{No Perfect Deceptions} (NPD) characterizes
implementable SCFs without transfers. An SCF satisfies NPD if it prescribes
identical outcomes in any two states $s$ and $s^{\prime }$ whenever there is
a profile of strategies (a \emph{deception}) in $s$ which mimic the
equilibrium strategies in $s^{\prime }$ while there is no article of
evidence at $s$ which refutes $s^{\prime }$. Surprisingly, NPD is stronger
than a stochastic version of the measurability condition in \cite{BL2012},
since it is possible for perfect deceptions to exist even when the
distribution of evidence differs between states.\footnote{%
Pure strategy implementation requires the weaker No pure perfect deceptions,
which is still stronger than a stochastic version of the measurability
condition in \cite{BL2012}, clarifying that it is harder to implement in
mixed strategies than in pure strategies in this setting. The reader is
referred to Appendix \ref{NOPUREDECEPTION} for more details.} Second, with
informationally small agents, a generalization, \emph{Generalized NPD}
(GNPD) suffices for implementation with zero expected transfers in large
economies. Third, under NEI, GNPD characterizes all implementable SCFs
without transfers. We also extend our analysis to scenarios where agents
face uncertainty about the state, and evidence draws may be correlated. In
this context, we demonstrate that implementation in rationalizable
strategies is achievable with small transfers alongside when at least one
agent can differentiate between states through their belief hierarchies
(which we term \textquotedblleft higher-order
measurability\textquotedblright\ or HOM), enabling implementation of
non-constant social choice functions even without preference variation.

The mechanisms we construct make several important contributions. First, we
introduce uncertainty in the evidence dimension to the theory of full
implementation of SCFs with evidence presented initially in \cite{KT2012}
and \cite{BL2012}, and later continued in \cite{BCS2020}. We provide a
characterization of SCFs which can be implemented relying on (uncertain)
evidence alone, since many practical scenarios (e.g. in the examples above)
where the designer must rely on evidence variation also feature
state-independent preferences. While the existing literature has largely
focused on complete information in all dimensions, we also allow for limited
but nontrivial uncertainty along the payoff-relevant state dimension,
focusing on information structures such as nonexclusive information, and
information smallness. Further, our mechanisms allow for two or more agents,
while many Bayesian mechanisms require at least three agents.

Second, in contrast to the classical implementation literature, we do not
invoke integer or modulo games. As demonstrated in \cite{J1992}, these
constructions are not theoretically well grounded as agents' best responses
are often not well defined when such constructions are used. Thus, the
non-existence of equilibria of such games (which is often used to eliminate
undesirable strategy profiles) is a property of the definition of equilibria
rather than a fundamental property of the designed incentives yielded by the
mechanism.\footnote{%
The reader is referred to \cite[pg 209]{M1992} for a more detailed
description of \textquotedblleft esoteric\textquotedblright\ constructions
such as the absence of best-response strategies and integer games and why
these are undesirable.} Instead, when agents deviate from truth-telling
messages, our mechanisms detect this through properties of the message
profile and use transfer rules to realign incentives toward truthful
reporting. In each case, our mechanisms are designed to ensure that best
responses always exist. This approach is closer to the intended spirit of
mechanism design, and is increasingly seen in more recent implementation
papers. For instance, \cite{EN2025} present a dynamic mechanism titled
\textquotedblleft Price and Choose\textquotedblright\ which exploits
preference variation to fully implement the set of efficient outcomes
through a two-stage process.\footnote{%
While we focus on static implementation, we conjecture that it is possible
to extend the techniques we propose for dynamic implementation by extracting
all evidence in the second stage (e.g. upon a bet being placed) and then
using scoring rules and crosschecks to make lying a suboptimal choice in the
first stage.}

Third, while our mechanisms use off-equilibrium transfers, these are
uniformly bounded in magnitude across every evidence structure to be of the
order of the \textquotedblleft advantage from lying\textquotedblright , i.e.
the transfers satisfy a limited liability constraint.\footnote{%
We note that fines of this order of magnitude are common. For instance, the
VW group was fined about \$34 billion in 2015 for the \textquotedblleft
clean diesel\textquotedblright\ case, while its operating profits were
\$12.8 billion in the same year.} This adds a realistic constraint to our
mechanisms and ensures that simple techniques such as imposing large
penalties on agents when the designer detects a problem with a message
profile are not effective. Instead, we use a system of competing scoring
rules, crosschecks and bets to ensure agents only present truthful claims in
equilibrium.

In what follows, we briefly outline a few major challenges in the design of
these mechanisms and the novel solutions proposed in this paper to overcome
these challenges. First, conditions such as Maskin monotonicity or Bayesian
monotonicity directly provide a \textquotedblleft test
allocation\textquotedblright , i.e. an allocation that can be used by a test
agent to challenge a consensus on a lie -- when the consensus is the truth,
the test allocation is worse for the test agent than the SCF outcome, and
when the consensus is a lie, the test allocation dominates the SCF outcome.
This allows the test agent to credibly blow the whistle on an incorrect
consensus. Instead, in our setting, NPD provides no such test allocation,
and the mechanism must construct one endogeneously. To do this, we note that
when the message profile is consistent on a lie, it may be the case that no
agent has the evidence (with positive probability) to refute such a lie,
rather, some agent does not have the evidence to induce the necessary
perfect deception. The support of the distributions may be identical, an 
\emph{imperfect} deception cannot be detected by directly evaluating the
evidence submissions of agents. Instead, we construct a test allocation as
an application of the Farkas' lemma -- the test allocation is a bet that
loses money for the whistleblower if every agent has told the truth and
submitted maximal evidence, and wins money against a bad coordination
because he can arbitrage the difference between the imperfect distribution
of evidence and the target (perfect) distribution.\footnote{\cite{CML1988}
(CML) features a similar technique where it is used as an augmentation to a
mechanism for partial efficient implementation. Our use of this technique
differs in two significant ways. First, since our mechanism uses this
technique only for whistleblowing, we do not use the large penalties for
lying which feature in CML and the technique works across independent
evidence distributions while CML rely on correlation. Second, we use this
technique for \emph{full} implementation, i.e. every equilibrium of the
induced game yields the desired outcome, as opposed to only the focal point
equilibria in CML.}

Second, we use scoring rules for type revelation given perfect evidence
revelation by other agents. Scoring rules are only effective when they are
active for each realization. This poses a direct challenge to one of our
fundamental requirements -- that transfers be inactive in equilibrium. We
solve this issue by creating an offsetting scoring rule designed to cancel
the original scoring rule \emph{realization by realization. }This is
achieved by using a crosscheck to discipline the agent whose evidence is
being predicted by the original scoring rule, and then using this
crosschecked report to offset the original scoring rule.

Finally, a key innovation is a \emph{reflection} transfer which finds use
across our mechanisms, but is especially useful for implementation under
uncertain state types, where higher-order beliefs are crucial since social
outcomes in states $s$ and $s^{\prime }$ must be identical under GNPD when
agents cannot distinguish between them at any belief order. We develop a
technique to embed higher-order beliefs directly into the message space,
enabling evidence-based incentivization even when bets are expected only at
high belief orders.\footnote{%
This technique is useful in settings where the designer wishes to
incentivize certain actions on the part of the agents based on their (higher
order) beliefs rather than their actions because these actions (refutations,
bets etc.) only happen with positive probability rather than with
probability 1.} For technical details, we refer the reader to Section \ref%
{MECHANISM}.

The rest of the paper is organized as follows. In Section \ref{Example}, we
formally describe an illustrating example to provide context for the
analysis in later sections. Section \ref{Model} defines the setup formally
and introduces some properties of the evidence structure. Section \ref%
{BayesianImplementation} deals with the Bayes-Nash implementation result.
Section \ref{NCKS} discusses implementation with informational smallness and
nonexclusive information. Section \ref{IST} details rationalizable
implementation in the more general setup. Finally, Section \ref{RL} reviews
and compares our results with the literature.

\section{\label{Example}Illustrating Example: Drilling for Oil}

Consider a government allocating drilling rights for an oil tract. The
deposit contains either a low ($L$), medium ($M$), or high ($H$) quantity of
oil. Three firms are competing for the rights: firm $A$ (a small
wildcatter), firm $B$ (an established local firm), and firm $C$ (a large
corporation). Each firm conducts its own preliminary assessment, receiving a
signal that could be low ($l$), medium ($m$), or high ($h$). The states of
the world are given by triads like $lll$, $lml$ etc. These signals are
correlated, but the correlation structure varies depending on the
information environment.

Each firm's assessment provides a signal about the deposit size and also
generates evidence of varying conclusiveness. While a firm receiving a low
signal can only present inconclusive test results, a firm receiving a medium
or high signal might obtain evidence definitive enough to rule out a small
deposit. The government's objective is to allocate the drilling rights
efficiently: to firm A if the deposit is small (due to their cost
advantages), to firm B if medium-sized (given their local expertise), and to
firm C if large (reflecting their large-scale capabilities). The government
plans to use majority rule on the signals \emph{reported} to a mechanism to
infer the deposit size, defaulting to treating it as small if no majority
emerges. Each firm wants the rights regardless of the state of the world.

The evidence structure reflects the varying conclusiveness of assessments.
When a firm's tests indicate a high deposit (signal $h$), there is a 60\%
chance these tests are definitive enough to rule out the possibility of a
small deposit. In this case, they can present evidence $E=\{m,h\}$,
indicating their tests show the deposit must be either medium or large. With
40\% chance, their tests are less conclusive, yielding evidence $E=\{l,m,h\}$%
, consistent with any deposit size.

Similarly, when a firm receives a medium signal ($m$), there is a 40\%
chance their assessment can definitively rule out a small deposit (yielding
evidence $E=\{m,h\}$), and a 60\% chance their tests are inconclusive
(yielding evidence $E=\{l,m,h\}$). When a firm's tests indicate a small
deposit (signal $l$), their evidence is always inconclusive ($E=\{l,m,h\}$),
reflecting the technical difficulty of definitively ruling out larger
deposits through preliminary testing. Formally, we can represent this
evidence structure as shown below.

\begin{equation*}
\begin{tabular}{|l|l|}
\hline
\text{Signal Received} & \text{Evidence Distribution} \\ \hline
$h$ & $0.6\times \{\{m,h\},\{l,m,h\}\}+0.4\times \{\{l,m,h\}\}$ \\ \hline
$m$ & $0.4\times \{\{m,h\},\{l,m,h\}\}+0.6\times \{\{l,m,h\}\}$ \\ \hline
$l$ & $\{\{l,m,h\}\}$ \\ \hline
\end{tabular}%
\end{equation*}

Note that the articles of evidence in the above example have been associated
with subsets of the signal space. In general, we could have abstract
articles of evidence, such as $e_{1}$ and $e_{2}~$instead of $\{l,m,h\}$,
and $\{m,h\}$ respectively. These \textquotedblleft names\textquotedblright\
are obtained by associating an article of evidence with the states in which
they appear with positive probability. This form of nomenclature is used in 
\cite{BL2012}, albeit in a deterministic setting. We use this nomenclature
to simplify the exposition (especially within examples) in several places in
the paper. Note that in this setting, types contain the signal received and
the evidence endowment, e.g. $(m,\{l,m,h\})$.

From the above intuition of evidence, it is clear that no agent can
eliminate state $mmm$ at state $hhh$. This poses a major challenge in
achieving implementation. To see this, suppose the mechanism were direct,
requiring only a state and an evidentiary claim. Further, consider the
following strategies for any firm A of type $(h,\{\{m,h\},\{l,m,h\}\})$:
Report $\{\{m,h\},\{l,m,h\}\}$ with probability $\frac{2}{3}$ and report $%
\{\{l,m,h\}\}$ with probability $\frac{1}{3}$ in the evidence dimension; In
the state dimension, always claim that the received signal is $m$. With this
strategy, the agents would induce the same distribution on their reports as
if the true state were $mmm$, and they were reporting truthfully. Such a
strategy is later dubbed a \emph{perfect} deception. If perfect deceptions
exist, it is not possible to separate states in any mechanism (direct or
indirect) which relies on Bayesian Nash Implementation. This result is
formally proved in Section \ref{NPD} and followed by an intuitive
description. Note that there is also a perfect deception from state $mmm$ to
state $lll$, but $lll$ is refutable at $mmm$ with positive probability (by
the article of evidence $\{m,h\}$), so that states $mmm$ and $lll$ are
separable by a mechanism.\footnote{%
Note that it is possible for states to be mutually non-refutable and for a
perfect deception to exist in neither direction. For example, in state $h$,
an agent may have the evidence distribution $0.4\times e_{1}+0.6\times e_{2}$
while in state $m$, the distribution may be $0.6\times e_{1}+0.4\times e_{2}$%
. Neither $e_{1}$ nor $e_{2}$ are a subset of each other. In this case,
although the support of the evidence distribution is the same in both
states, a perfect deception is impossible because either $e_{1}$ or $e_{2}$
is not available with sufficient probability.}

With the above evidence structure then, it is not possible for the
government to allot the rights to some firm in state $hhh$\ and a different
firm in state $mmm$. Consider though the following mild perturbation of the
evidence structure: An additional article of evidence $\{\{h\}\}$ is
available to a firm with a small probability when its type is $(h,.)$. For
instance, the evidence distribution may be $0.1\times \{\{h\}\}+0.5\times
\{\{m,h\},\{l,m,h\}\}+0.4\times \{\{l,m,h\}\}$. In this case, the government
can actually implement different outcomes in states $mmm$ and $hhh\ $- while
in state $mmm$, $hhh$ is not refutable, the article $\{\{h\}\}$ is not
available, so that a perfect deception is impossible; in state $hhh$, while
it is possible to perfectly match the evidence distribution for state $mmm$, 
$mmm$ is actually refutable with positive probability.\footnote{%
The reader is referred to Appendix \ref{ERD} for a discussion of possible
cases of refutation and perfect deceptions for a pair of states $s$ and $%
s^{\prime }$.}

Turning to the informational structure of the received signals, there may be
several possibilities. Following \cite{MP2002} or \cite{MP2004}, if agents
are informationally small, and if a majority of the firms receive the same
signal ($l$ for instance), then with high probability, the remaining firm
also receives the same signal $l$. Nonexclusive information (\cite{PS1986}, 
\cite{V1999}) holds that if two firms receive the same signal, the third
firm \emph{must} receive the same signal. Finally, the notion of complete
information in this setting implies that the scale of the deposits and the
signals are perfectly correlated, so that if one firm gets a signal ($m$ for
instance), then the remaining firms must also get the same signal. We will
study full implementation for each of these information structures alongside
uncertain evidence. Finally, note that while in this example it is likely
that firms have evidence only for their own signals, in other settings it
may be the case that agents may have evidence about the state types
(signals) of other agents as well, and the model we study allows for this.

\section{\label{Model}Model and Preliminaries}

\subsection{Setup}

There is a set of agents $\mathcal{I=}\left\{ 1,...,I\right\} ,$ $I\geq 2$,
a set of outcomes $A$ and a finite set of states $S.$ Agents have bounded
utility over the outcomes given by $\bar{u}_{i}:A\times S\rightarrow \mathbb{%
R}$. Their preferences are quasilinear, so that $u_{i}(a,\bar{\tau}_{i},s)=%
\bar{u}_{i}(a,s)+\bar{\tau}_{i}$, where $\bar{\tau}_{i}$ is the transfer of
money to agent $i$. The bound of $\bar{u}_{i}$ can be normalized so that it
is strictly less than $1$, that is, an agent can be persuaded to accept any
outcome if the alternative were any other outcome with a penalty of $1$
dollar.

Each agent $i$ is endowed with a collection of articles of evidence $E_{i}$
which can vary from state to state. A profile of collections (one for each
agent) is denoted by $E$. The set of all possible collections of evidence
for agent $i$ is denoted by $\mathbb{E}_{i}$ and we denote by $\mathbb{E}$ ($%
=\Pi _{i\in \mathcal{I}}\mathbb{E}_{i}$), the set of all possible profiles
of evidence collections. As in the illustrating example, each collection of
evidence can be associated with a set of subsets of the state space - namely
those subsets in which the article occurs with positive probability.

We also assume that it is possible for an agent to submit the entire
collection of evidence she is endowed with. This condition, which is
commonly called \emph{normality}, is present in \cite{BL2012} as well, and
is, generally speaking, a common feature of single stage mechanisms in such
settings.\footnote{\cite{KT2012} is a notable exception.} In case of
uncertain evidence though, there emerges a subtle difference. In \cite%
{BL2012}, normality is formulated by requiring that a single
\textquotedblleft most informative\textquotedblright\ evidence message be
available in any state to any agent.\footnote{%
In context of the example for instance, the article $\{M,H\}$ is more
informative than the article $\{L,M,H\}$. In general, the most informative
article of evidence is given by the intersection of all the available
articles.} In this setting however, it is essential that agents be allowed
to submit an entire collection of evidence, rather than just the most
informative message. Intuitively, several different evidence collections can
produce identical \textquotedblleft most informative\textquotedblright\
messages, though these collections might be distinguishable when the
designer observes each piece individually. We refer the reader to Appendix %
\ref{Conn} for more details.

Agents have common knowledge of $s$, i.e. they have a diagonal prior on $%
S\times S\times ...\times S$, and the case of uncertainty in the state is
considered in Section \ref{NCKS}. Evidence at state $s$ is distributed
according to the commonly known prior $p:S\rightarrow \Pi _{i\in \mathcal{I}%
}\Delta (\mathbb{E}_{i})$. Denote the marginals of $p(s)$ on $\mathbb{E}_{i}$
and $\Pi _{j\neq i}\mathbb{E}_{j}$ by $p_{i}(s)$ and $p_{-i}(s)$,
respectively. Deterministic evidence is a special case of this (the above
distribution is degenerate). Note that $p$ is defined as a product
distribution, eschewing the possibility of the evidence draws being
correlated. This assumption is relaxed in Section \ref{IST}.

Let $P_{i}$ denote the set of all possible evidence distributions for agent $%
i$, i.e. $P_{i}=\{p_{i}(s):s\in S\}$. We denote by $P$, the set $\Pi _{i\in 
\mathcal{I}}P_{i}$.

\subsection{Mechanisms and Implementation}

Any mechanism $\mathcal{M}$ is represented by the tuple $(M,g,\tau )$ where $%
M$ is the message space, $g:M\rightarrow A$ is the outcome function, and $%
\tau =\left( \tau _{i}\right) _{i\in \mathcal{I}}$ is the profile of
transfer rules with $\tau _{i}:M\rightarrow 
\mathbb{R}
$. At state $s$, and with a prior $p$, a mechanism $\mathcal{M}$ induces a
Bayesian game $G(\mathcal{M},u,s,p)=\left\langle \mathcal{I}%
,s,p,\left\langle M_{i},u_{i},S\times \mathbb{E}_{i}\right\rangle _{i\in 
\mathcal{I}}\right\rangle $ with the following properties:

\begin{itemize}
\item The type space of a typical agent $i$ is: $S\times \mathbb{E}_{i}$.

\item Every type of every agent forms beliefs on $\mathbb{E}_{-i}$ according
to $p_{-i}$.

\item The action space of a type $(s,E_{i})$ is given by $M_{i}(E_{i})$ (as
some messages may be contingent on evidence availability).

\item A typical (mixed) strategy for an agent $i$ at the state $s$ is a
function: $\sigma _{i}:S\times \mathbb{E}_{i}\rightarrow \Delta (M_{i})$
such that $Supp(\sigma _{i}(s,E_{i}))\subseteq M_{i}(E_{i})$.
\end{itemize}

The common prior type space is specified by $p$. Given a game $G(\mathcal{M}%
,u,s,p)$, say $\sigma _{i}(s,\cdot )$ is a best response to $\sigma
_{-i}(s,.)$ if 
\begin{gather*}
\sigma _{i}(s,E_{i})\left( m_{i}\right) >0\Rightarrow \\
m_{i}\in \arg \max_{m_{i}^{\prime }\in M_{i}(E_{i})}\sum_{E_{-i}\in \mathbb{E%
}_{-i}}p_{-i}(s)(E_{-i})\sum_{m_{-i}\in M_{-i}(E_{-i})}\sigma
_{-i}(s,E_{-i})(m_{-i})[u_{i}(g(m_{i}^{\prime },m_{-i}),\tau
_{i}(m_{i}^{\prime },m_{-i}),s)].
\end{gather*}%
Next, we define the notion of a Bayesian Nash equilibrium in this particular
setting.

\begin{definition}
A Bayesian Nash equilibrium (henceforth BNE) of the game $G(\mathcal{M}%
,u,s,p)$ is defined as a profile of strategies $\sigma $ such that $\forall
i,s$, $\sigma _{i}(s,\cdot )$ is a best response to $\sigma _{-i}(s,.)$.
\end{definition}

A pure-strategy BNE is a BNE $\sigma $ such that for any $i$ and $s$, $%
\sigma _{i}\left( s,E_{i}\right) \left( m_{i}\right) =1$ for some $m_{i}$.
We then define the main notion of implementation we work with in this paper.

\begin{definition}
An SCF $f$ is implementable in mixed-strategy BNE if
there exists a mechanism $\mathcal{M}=(M,g,\tau )$ such that for any
profile of bounded utility functions $\bar{u}$, at any state $s$, we have $%
g(m)=f(s)$ and $\tau (m)=0$ for every message $m\in M$ in the support of $%
\sigma (s)$ of any BNE $\sigma $ of the Bayesian game $G(\mathcal{M},u,s,p)$.
\end{definition}

That is, for any BNE of the Bayesian game induced by the mechanism, the
outcome is correct, and there are no transfers. We stress here that
implementation should obtain regardless of the realized profile of evidence
collections. Since we wish to implement by depending solely on evidence,
rather than depending on preference variation, the notion of implementation
above requires that implementation obtain regardless of the profile of
bounded utility functions.

\subsection{Lies and their classification}

At the heart of the Bayesian implementation result is a classification of
lies (states not identical to the truth), which we illustrate below.

\begin{definition}
A collection of evidence $E_{i}$ refutes a state $s$ (denoted $%
E_{i}\downarrow s$) if agent $i$ does not have $E_{i}$ or a superset of $%
E_{i}$ with positive probability at $s$. Otherwise, $E_{i}\not\downarrow s$.
\end{definition}

That is, $E_{i}\downarrow s$ if $p_{i}(s)(E_{i}^{\prime })=0$ $\forall
E_{i}^{\prime }\supseteq E_{i}$. In a mechanism, agents will always have the
opportunity to withhold some articles from a collection. Therefore, even if
a collection does not occur at $s$ with positive probability, being
presented with this collection does not directly imply that $s$ is false.
Indeed the evidence presented may be part of a larger collection which can
occur. Therefore, only when a collection \emph{or any superset} does not
occur at $s$, can we claim that it refutes $s$.

In the context of the leading example with complete information, in state $%
hhh$ or $mmm$, the article $\{m,h\}$ refutes the state $lll$. A difference
from the deterministic evidence setting is that it is not available with
probability 1. For instance, if some firm claims that another firm's signal
is $l$ (when it is actually $m$), the latter firm cannot necessarily prove
this claim to be false by presenting the article $\{m,h\}$ because it might
be unavailable. This is different from a setting of deterministic evidence,
where the necessary condition (measurability) entails that the realization
of the evidence vary between states, so that the article $\{m,h\}$ would
have to be available with probability 1.\footnote{%
In a setting of complete information (e.g. \cite{BL2012}) , an SCF $f$
satisfies measurability if whenever $f(s)\neq f(s^{\prime })$, there is an
agent $i$ whose evidence endowment varies between $s$ and $s^{\prime }$.}

Also, if a firm, when possessed with the evidence $\{h\}$ claims to have
received the signal $m$, then it may be the only agent who can prove this
claim to be false.\footnote{%
Note that in context of this example, it might not be to the advantage of
this firm to make such a claim, but since we do not rely on preference
variation, in the general case, such issues may easily arise.} This
scenario, referred to as a self-refutable lie, also exists under
deterministic evidence, but under deterministic evidence, the firm would be
able to refute a lie about itself with probability 1, rather than being able
to do so only with some probability (as in the above example).

The following conditions arise as a result of the definition of refutation
used in our setup:

\begin{enumerate}
\item (se1) For any agent $i$,$\ p_{i}(s)(E_{i}\mathcal{)}=0$ if $%
E_{i}\downarrow s$. That is, at any state, the probability that collections
of evidence which refute the truth are available is zero. In short, proof is
true.

\item (se2) If $E_{i}\not\downarrow s^{\prime }$ then $p_{i}(s^{\prime
})(E_{i}^{\prime })>0$ for some $E_{i}^{\prime }\supseteq E_{i}$. That is,
if a collection of evidence $E_{i}$ does not refute $s^{\prime }$, then it
(or a superset) is available in state $s^{\prime }$ as well.
\end{enumerate}

In Appendix \ref{Conn}, we illustrate the connections between this setup and
the deterministic evidence setup from \cite{BL2012}.

\begin{definition}
A lie $s^{\prime }\in S$ is said to be\ refutable by $i$ at $s$ if $\exists
E_{i}\in \mathbb{E}_{i}$ s.t. $p_{i}(s)(E_{i})>0$ and $E_{i}\downarrow
s^{\prime }$.
\end{definition}

That is, at $s$, $s^{\prime }$ is refutable by $i$ if $i$ might have a
collection of evidence $E_{i}$ such that $E_{i}\downarrow s^{\prime }$.
Clearly from (se1), $E_{i}$ is not available at $s^{\prime }$ (if it were
available, then $E_{i}$ cannot refute $s^{\prime }$). Then, the support of
the evidence distribution for agent $i$ contains a collection under the
truth which is missing under a refutable lie.

\begin{definition}
A lie $s^{\prime }\in S$ is said to be refutable at $s$ if there is an agent 
$i$ so that $s^{\prime }$ is refutable by $i$ at $s$. Otherwise, it is
nonrefutable.
\end{definition}

Returning to the leading example, $hhh$ is nonrefutable at $mmm$. This
brings up another difference from the deterministic evidence setting, which
is that it is not possible for a firm to prove that its signal is $h$ rather
than $m$. This is because there is no additional article of evidence
available at state $hhh$ which is not available at state $mmm$. This
scenario, called a nonrefutable lie (e.g. $hhh$ is nonrefutable at $mmm$),
also occurs under deterministic evidence, but under deterministic evidence,
it is necessary that some agent has additional evidence under a nonrefutable
lie than under the truth with probability 1. Indeed the treatment of
nonrefutable lies is one of the main differences between the deterministic
and uncertain evidence setups.

We note that if $s^{\prime }$ is a nonrefutable lie at $s$, then no agent
may possess (with positive probability) any articles of evidence in state $s$
which they may not possess at state $s^{\prime }$, since these articles
would refute $s^{\prime }$. Therefore, if $s^{\prime }$ is nonrefutable at $%
s $, there are two possibilities with respect to the support of the evidence
distribution at $s^{\prime }$. Either it strictly includes the support under 
$s$, so that $s$ is refutable (with positive probability) at $s^{\prime }$,
or the supports are identical, so that only the probabilities with which
various collections of evidence are available have changed. The latter
scenario poses a major challenge in the implementation result we pursue.

\section{\label{BayesianImplementation}Mixed-Strategy Bayesian Nash
Implementation}

\subsection{\label{NPD}A Necessary Condition: No Perfect Deceptions}

Given that under deterministic evidence, the implementing condition (called
measurability) entails only that at least some agent is endowed with a
different set of evidence, it would be natural to expect that a stochastic
version of this condition, which entails requiring $p(s)\neq p(s^{\prime })$
whenever $f(s)\neq f(s^{\prime })$ would suffice as the implementing
condition for Bayesian implementation. We term this condition \emph{%
stochastic measurability}. It turns out, however, that not every
nonrefutable lie can be eliminated by a mechanism, necessitating a stronger
implementing condition. We establish the condition and prove its necessity
after some preliminaries below.

First, we define a deception for an agent of type $(s,E_{i})$.

\begin{definition}
A deception for an agent $i$ is a function $\alpha _{i}:S\times \mathbb{E}%
_{i}\rightarrow S\times \Delta \mathbb{E}_{i}$ such that if $(s^{\prime
},E_{i}^{\prime })\in \supp\alpha _{i}(s,E_{i})$, then $E_{i}^{\prime
}\subseteq E_{i}$.
\end{definition}

We interpret $\alpha _{i}$ as being a function from the agent's true type to
a distribution over types she can mimic (which requires that true type has a
weakly larger collection of evidence than the mimicked type).\footnote{%
A critical difference between the notion of deception in \cite{J1991} and
the one used here is that deceptions do not permit mixing in the state type.
The mechanism we propose later will be able to detect and prevent such
mixing, so that we need not allow for such deceptions in the formulation of
the necessity condition.} A profile of deceptions is denoted by $\alpha
=(\alpha _{i})_{i\in \mathcal{I}}$. This notion is similar to the one with
the same name in \cite{J1991}.

Note that a deception can also be interpreted as a strategy for an agent in
a direct mechanism (a mechanism where agents only report their types, i.e. $%
M_{i}=S\times \mathbb{E}_{i}$ for each agent).

\begin{definition}
A deception $\alpha _{i}$ is perfect for state $s^{\prime }$ at state $s$
for agent $i$ if, 
\begin{equation*}
\forall E_{i}^{\prime }\text{,}\sum_{E_{i}\in \mathbb{E}_{i}}p_{i}(s)(E_{i})%
\alpha _{i}(s,E_{i})(s^{\prime },E_{i}^{\prime })=p_{i}(s^{\prime
})(E_{i}^{\prime })\text{.}
\end{equation*}
\end{definition}

Likewise, a profile of deceptions $\alpha =(\alpha _{i})_{i\in \mathcal{I}}$
is perfect for state $s^{\prime }$ at state $s$ if $\alpha _{i}$ is perfect
for state $s^{\prime }$ at state $s$ for each agent $i$.

That is, a deception is perfect for state $s^{\prime }$ at state $s$ if
every agent induces the same distribution over her types as she would if the
true state were $s^{\prime }$ and if she were reporting truthfully.

We denote by $s\rightarrow _{i}s^{\prime }$ the notion that agent $i$ has a
perfect deception from $s$ to $s^{\prime }$. Therefore, if there is a
perfect deception from $s$ to $s^{\prime }$, then, $s\rightarrow
_{i}s^{\prime }$ $\forall i\in \mathcal{I}$. Further, $s\not\rightarrow
_{i}s^{\prime }$ denotes the notion that $i$ does not have a perfect
deception from $s$ to $s^{\prime }$ and $s\not\rightarrow s^{\prime }$
denotes that there is no perfect deception from $s$ to $s^{\prime }$.

As mentioned earlier, it is not possible to eliminate nonrefutable lies such
that there is a perfect deception for the nonrefutable lie at the true
state. We now define the necessary condition for implementation.

\begin{definition}
(\textbf{Condition NPD - No Perfect Deceptions}) We say that a social choice
function $f$ satisfies \textbf{NPD }whenever for two states $s$ and $%
s^{\prime }$, if at $s$, state $s^{\prime }$ is nonrefutable, and\ there
exists a perfect deception for $s^{\prime }$, then $f(s)=f(s^{\prime })$.
\end{definition}

That is, to have different desirable outcomes between $s$ and $s^{\prime }$,
either more evidence must be available at $s$ than is available at $%
s^{\prime }$, or it must be impossible to have a perfect deception for state 
$s^{\prime }$ at state $s$. We note that one way for a perfect deception to
be impossible is if there are articles of evidence at $s^{\prime }$ which
are not available at $s$. We can therefore also interpret the above
condition as follows: we are able to eliminate all kinds of lies other than
nonrefutable lies so that agents can mimic the lie both in the state and the
evidence dimensions perfectly.\footnote{%
For instance, if any pair of states $s$ and $s^{\prime }$ are mutually
refutable, then any SCF\ satisfies NPD. Further Appendix \ref{ERD} for a
discussion of possible cases of refutation and perfect deceptions for a pair
of states $s$ and $s^{\prime }$.}

We also wish to stress the fact that the existence of a perfect deception
from $s$ to $s^{\prime }$ does not by itself preclude our ability to
separate the states by a mechanism. Indeed it must be the case that $%
s^{\prime }$ is nonrefutable at $s$. To see this, consider a scenario where
the evidence structure is identical between $s$ and $s^{\prime }$ $\emph{%
except}$ that each collection at $s$ is appended with an additional article
of evidence $e_{i}$ for each agent $i$. Clearly, a perfect deception
involves each agent claiming the state is $s^{\prime }$ and withholding the
article $e_{i}$. However, it is clear that these states can be separated by
incentivizing refutation of a state claim by presenting $e_{i}$.\footnote{%
For instance, this is true of the motivating example with the perturbation
including the article of evidence $\{h\}$ at the state $hhh$.}

For an example of a social choice function and an evidence structure under
which the evidence distribution varies between two states, we refer the
reader to the leading example in Section \ref{Example}. We reproduce the
evidence structure below:

\begin{equation*}
\begin{tabular}{|l|l|}
\hline
\text{Signal Received} & \text{Evidence Distribution} \\ \hline
$h$ & $0.6\times \{\{m,h\},\{l,m,h\}\}+0.4\times \{\{l,m,h\}\}$ \\ \hline
$m$ & $0.4\times \{\{m,h\},\{l,m,h\}\}+0.6\times \{\{l,m,h\}\}$ \\ \hline
$l$ & $\{\{l,m,h\}\}$ \\ \hline
\end{tabular}%
\end{equation*}

It is clear that the distribution of evidence does change between every pair
of states, so that every SCF is stochastically measurable. However, it can
be seen that it is possible to mimic the distribution at $mmm$ when the true
state is $hhh$, as the article $\{m,h\}$ is available with extra probability
at $hhh$. For instance, the deception $\alpha
_{A}(h,\{\{m,h\},\{l,m,h\}\})\rightarrow \frac{1}{3}(m,\{\{l,m,h\}\})+\frac{2%
}{3}(m,\{\{m,h\},\{l,m,h\}\})$ suffices to do so. However, it is not
possible to mimic the distribution at $hhh$ when the state is $mmm$, as the
article $\{m,h\}$ is not available often enough. This establishes an
important property of the NPD condition, that is, it is not bidirectional.
Therefore, the structure induced by the social choice function on the state
space is not partitional. We stress here that this is a stark point of
differentiation between deterministic and uncertain evidence, since the
necessary condition under deterministic evidence - measurability, induces a
partitional structure on the state space.

We present the main result of this section below.

\begin{theorem}
\label{MAIN} A social choice function is implementable in BNE if and only if
it satisfies NPD.
\end{theorem}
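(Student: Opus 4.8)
The plan is to prove the two directions separately: necessity is a short deception-copying argument, while sufficiency requires constructing one finite mechanism (refutation plus monetary bets) and then ruling out bad Bayes--Nash equilibria.

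\textbf{Necessity.} Suppose $\mathcal{M}=(M,g,\tau)$ implements $f$ and, toward a contradiction, that $s'$ is nonrefutable at $s$, there is a perfect deception $\alpha$ from $s$ to $s'$, and $f(s)\neq f(s')$. Fix any bounded $\bar u$ with $\bar u_i(\cdot,s)=\bar u_i(\cdot,s')$ for every $i$. The game induced at $s'$ is finite, so it has a BNE $\sigma(s',\cdot)$, and implementation forces $g=f(s')$, $\tau=0$ on its support. At state $s$, have every type $(s,E_i)$ draw a ``pretend type'' $(s',E_i')$ from $\alpha_i(s,E_i)$ --- feasible since $E_i'\subseteq E_i$, hence $M_i(E_i')\subseteq M_i(E_i)$ --- and then play $\sigma_i(s',E_i')$. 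Perfectness of $\alpha$ together with the product structure of the prior makes the marginal over $E_i'$ equal $p_i(s')$ for each $i$ independently, so this deceived profile induces the identical distribution over message profiles as $\sigma(s',\cdot)$, hence outcome $f(s')$ with zero transfers. It is a BNE at $s$: each agent faces the same belief over opponents' messages as at $s'$, and since $\bar u_i(\cdot,s)=\bar u_i(\cdot,s')$ her payoff from any message coincides with the $s'$-payoff, so $\sigma_i(s',E_i')$ remains a best reply among $M_i(E_i')$; for any extra message in $M_i(E_i)\setminus M_i(E_i')$, nonrefutability of $s'$ at $s$ yields some $\widetilde E_i\supseteq E_i$ with $p_i(s')(\widetilde E_i)>0$, and type $(s',\widetilde E_i)$, which could send that message at $s'$, is by implementation not strictly better off than $f(s')$ at zero transfer, so that message is not profitable at $s$ either. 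This gives a BNE at $s$ with outcome $f(s')\neq f(s)$, a contradiction; hence $f$ satisfies NPD.

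\textbf{Sufficiency: the mechanism.} Given $f$ satisfying NPD, the plan is a single finite mechanism in which each agent $i$ reports a triple $(\widehat s_i,\widehat E_i,z_i)$: a state claim $\widehat s_i\in S$, a feasible evidence report $\widehat E_i\subseteq E_i$, and an auxiliary report $z_i$ from a finite menu encoding either ``no bet'' or a bounded, finitely valued monetary bet on an opponent's reported collection. Roughly: if all agents report a common state $s$, no report refutes $s$, and all $z_i$ are ``no bet'', the outcome is $f(s)$ with zero transfers; if some $\widehat E_i$ refutes some $\widehat s_j$, agent $j$ pays a fine exceeding the utility bound, agent $i$ gets a small reward, and the outcome is $f(\widehat s_i)$; in the remaining regions bets are settled zero-sum between bettor and target over a fixed default outcome. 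The bet menu is calibrated so that any available bet has non-positive expected value whenever the target's report distribution equals that of truthful play at the claimed state, and so that some available bet has strictly positive expected value otherwise; by construction every transfer is zero on the truthful path, and normality guarantees an agent can always report her full endowment, which is what makes refutation and the bet calibration work.

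\textbf{Sufficiency: equilibrium analysis.} Fixing a BNE $\sigma$ at the true state $s$, the plan is to purge lies from its support by successive elimination, as in \cite{BCS2020}. First, a state claim refutable at $s$ cannot be in the support: the refuting agent has a positive-probability collection that triggers the fine, and the analysis of the resulting off-path region shows the fined agent strictly prefers the truth; removing such claims and iterating eliminates every lie that becomes refutable once more-refutable lies are gone. The surviving lies $s'$ are nonrefutable at $s$; those with $f(s')=f(s)$ do not affect the outcome, while for $f(s')\neq f(s)$ NPD supplies the absence of a perfect deception, i.e.\ the agents cannot reproduce the report distribution $p(s')$, so some agent has a bet of strictly positive expected value --- impossible in equilibrium. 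Hence in equilibrium every agent reports $s$; true evidence, even partially withheld, never refutes $s$ by (se1); the bets are then unprofitable and trivial; and the outcome is $f(s)$ with zero transfers, as required.

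\textbf{The main obstacle.} The delicate step is the mixed-strategy/two-agent interaction in sufficiency. A BNE may mix over several incorrect target states simultaneously, so that no single conditional report distribution matches any one $p(s')$; the bet menu must be rich enough --- yet finite --- to detect any departure of the realized report distribution from the convex hull of the truthful distributions, and it must do so even though, with only two agents, the very agent who misreports is the one who would place the detecting bet. Equally delicate is closing off spurious equilibria generated by the off-path fines and bets --- for instance both agents jointly endorsing a common nonrefutable lie and each declining to refute --- while keeping every transfer exactly zero on the equilibrium path.
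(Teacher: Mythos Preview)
Your necessity argument is correct and essentially matches the paper's: compose the perfect deception with an equilibrium at $s'$, and use nonrefutability to argue that any message available to a type at $s$ is also available to some positive-probability type at $s'$, who already found it unprofitable.

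The sufficiency sketch, however, has a genuine gap --- precisely the one you flag as the ``main obstacle'' but do not close. Your mechanism has each agent report only a state, evidence, and a bet. The paper's mechanism is richer in a way that matters: each agent additionally reports a \emph{distribution} $p_{i,i}\in P_i$ about her own evidence and a distribution $p_{i,i+1}\in P_{i+1}$ about the agent to her right. On top of this sit a proper scoring rule on $p_{i,i+1}$ (scored against agent $i{+}1$'s evidence report and offset by $p_{i+1,i+1}$ so that it vanishes on path) and a cross-check penalty forcing $p_{i,i}=p_{i-1,i}$. The key lemma is: \emph{if agent $i$ submits her full evidence collection, then every distribution report about $i$ is truthful}. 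This is what dissolves the mixed-strategy problem. Whenever the distribution reports are inconsistent, or whenever a bet is active, a small reward $\tau^1$ makes maximal evidence submission optimal for everyone; the lemma then forces all distribution reports to the truth, contradicting inconsistency. Hence every BNE has \emph{pure}, consistent distribution reports pinning down a single state $s$, and only then do the refutation penalty and the bet apply --- exactly the situation your analysis handles correctly for a pure joint lie.

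Your state-report mechanism cannot replicate this step. If agents mix over state claims there is no single target $p(s')$; you would need bets that detect departure from a convex hull of report distributions, and, more fundamentally, you have no device that forces maximal evidence submission once a bet is placed, so the bettor is wagering against a strategically chosen evidence report rather than the true endowment. The ``successive elimination of lies'' logic from \cite{BCS2020} does not transplant directly, because under uncertain evidence the refuting collection is available only with positive probability and the residual mixed play over nonrefutable claims is not pinned to any one state. The scoring-rule-and-crosscheck layer is not an optional refinement; it is the paper's resolution of the obstacle you correctly identified, and without it (or an equivalent device) your equilibrium analysis for nonrefutable lies does not go through in mixed BNE.
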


We present the proof of necessity below, while the implementing mechanism
which constitutes the sufficiency proof is presented in Section \ref%
{MECHANISM}.

\begin{proof}
\textbf{(Necessity of NPD)} Suppose a mechanism $\mathcal{M}=(M,g,\tau )$
implements $f$ and consider a pair of states $s$ and $s^{\prime }$ so that $%
s^{\prime }$ is nonrefutable at $s$ and there is a perfect deception from $s$
to $s^{\prime }$. Further, suppose the preferences are state independent.

First, we claim that

\begin{equation}
	\parbox{0.9\linewidth}{\centering
		\textit{For any collection $E_{i}$ in the support of the evidence distribution at $s$, $E_{i}$, or a superset thereof must be in the support of the evidence distribution at $s'$.}
	}
	\tag{*} \label{eq:nonrefutable_claim}
\end{equation}
This follows from the nonrefutability of $s^{\prime }$ at $s$.

Now, consider any equilibrium $\sigma $ of the mechanism and the following
strategy for agent $i$ at state $s$ when endowed with the collection $E_{i}$%
. First, the agent plays according to the deception $\alpha _{i}(s,E_{i})$.
This yields another type realization, which we denote by $(s^{\prime
},E_{i}^{\prime })$. Now, the agent plays $\sigma _{i}(s^{\prime
},E_{i}^{\prime })$. Since there is no additional collection of evidence at $%
s$, this object is well defined. With some abuse of notation, we denote the
profile of such strategies by $\sigma \circ \alpha $.

We claim that $\sigma \circ \alpha $ forms an equilibrium at $s$ with
outcome $f(s^{\prime })$. To see this, consider an arbitrary agent $i$.
First, given the strategies of other agents $j\neq i$, the belief induced
over the actions of other agents $j$ is the same under $\sigma \circ \alpha $
at $s$ as that under $\sigma $ at the state $s^{\prime }$. Given these
beliefs, at state $s^{\prime }$, $\sigma $ induces the outcome $f(s^{\prime
})$ with no transfers. Each type of agent $i$ can obtain this outcome in
state $s$ by playing according to $\sigma _{i}\circ \alpha _{i}$. If any
type $E_{i}$ of agent $i$ at $s$ has a profitable deviation, then there is
another type of agent $i$ at $s^{\prime }$ which has weakly more evidence
than $E_{i}$ (from the claim (\ref{eq:nonrefutable_claim}) above) and can also deviate
profitably. This contradicts the optimality of $\sigma _{i}$ against $\sigma
_{-i}$ at $s^{\prime }$. Therefore, playing according to $\sigma _{i}\circ
\alpha _{i}$ continues to be a best response for the type. Thus, if there is
a perfect deception $\alpha $ from $s$ to $s^{\prime }$, then any
implementable social choice function must have $f(s)=f(s^{\prime })$.
\end{proof}

Informally, under $\alpha $, agents pretend to be other types and play the
equilibrium strategy under $\sigma $ of the mimicked type. The reason a type
is best responding even though she may be playing the equilibrium strategy
of another type is that the outcome is the same and it is possible that the
preferences of agents do not change from state to state (i.e. they are state
independent). That is, even though the concerned type may have extra
evidence, and therefore be able to play other messages, there was another
type at $s^{\prime }$ who also had that extra evidence and that type's best
response still led to the outcome $f(s^{\prime })$ which this agent is also
getting from her strategy under $\sigma \circ \alpha (s)$, so that she
continues to best respond. Therefore, since an equilibrium outcome at $s$ is 
$f(s^{\prime })$, then $f$ is implementable only if $f(s)=f(s^{\prime })$.

We note here that whereas the discussion above has centered around
deceptions where an agent pretends to mimic herself at another state, this
applies to mimicking a profile of distributions as well. More precisely,
suppose that in the true state $s^{\ast }$, which is characterized by the
profile of distributions $(p_{i}^{\ast })_{i\in \mathcal{I}}$, agents
pretend to mimic themselves at another state $s^{\prime }$, which is
characterized by a profile of distributions $(p_{i}^{\prime })_{i\in 
\mathcal{I}}$. If they can perfectly mimic this other state (in terms of
playing their equilibrium strategies for state $s^{\prime }$), then by
Theorem \ref{MAIN}, $f$ is implementable only if $f(s^{\ast })=f(s^{\prime
}) $.

This yields an interesting insight: since it is always possible to mimic any
cheap talk messages in the message space, nonrefutable lies $s^{\prime }$
for which agents do not have perfect deceptions from the truth must require
presentation of evidence which is impossible in the true state. Then, if an
agent $i$ does not have a perfect deception at state $s$ for $s^{\prime }$,
then she cannot mimic the distribution $p_{i}^{^{\prime }}$, as otherwise it
would be possible for her to play the equilibrium strategies for the state $%
s^{\prime }$. This highlights the role of evidence in this setup, which is
twofold. Either agents can refute distributional claims using evidence, or
agents are unable to mimic incorrect states perfectly owing to their
inability to mimic the distribution of evidence.

\subsection{\label{CHALLENGE}A Challenge Scheme}

The above discussion established that when agents report certain
nonrefutable lies, they are unable to mimic the distribution of evidence for
these lies. This brings up the question of how we can exploit this fact in
the construction of mechanisms to facilitate the elimination of nonrefutable
lies. To elucidate the underlying idea, we first consider an analogous
situation.

Consider a setting with two agents. The first agent has a deck of cards,
from which she draws a card and presents it. The second agent knows that the
first's deck of cards is missing one black card. How does she convey this
information credibly to the designer if it is not possible to check the
entire deck? If she simply says there is a card missing, this claim may not
be credible. Rather, the second agent can place a bet of the following form
- if you draw a black card, take a dollar from me. If you draw a red card,
give me 99 cents. This bet clearly loses the second agent money if there is
no card missing. If she believes there is a black card missing though, then
this wins her money, as $\frac{26}{51}\times 0.99+\frac{25}{51}\times (-1)>0$%
. The fact that the second agent is willing to take this bet credibly
signals to the designer that there is something wrong with the deck.

Note that the deck of cards is analogous to agent 1's mixed-strategy.
Therefore, it is impossible to check the deck of cards and the second agent
must find this alternate solution to inform the designer that the first
agent is not playing the truthtelling strategy. Individual cards are
analogous to collections of evidence, and the inability to induce the
correct distribution of evidence is analogous to not having a black card.
For the implementation problem then, agents can blow the whistle on an
incorrect (nonrefutable) state report by identifying which agent does not
have a perfect deception from the truth to the nonrefutable lie, and then
betting sums of money on that agent's plausible collections of evidence so
that - (i) If the nonrefutable lie were to be the truth, then the bet loses
money, and (ii) Since this is not the case, the bet actually wins money. As
we will discuss later, this can be thought of as analogous to a monotonicity
like condition, yielding a \textquotedblleft reversal\textquotedblright\
between true and false states. Returning to the illustrating example, if a
firm tried to claim $hhh$ when the true state was $mmm$, it would, at best,
be able to induce probability $0.4$ on the article $\{m,h\}$ as that is the
maximum probability with which it is available in state $mmm$. Then,
consider the following bet: $1$ dollar on $\{\{l,m,h\}\}$ and $1$ dollar
against $\{\{l,m,h\},\{m,h\}\}$. In state $hhh$ under truthful reporting, it
yields an expected loss of 20 cents, while in state $mmm$, it yields a
minimum expected profit of 20 cents. We will show below that this is
possible whenever there is no perfect deception. Before that, we will
provide a simple way for an agent to place such a bet.

Asking agents to place bets on (say) agent $i$'s evidence would mean that
agents would have to provide the designer a vector of numbers (one for each
collection in $\mathbb{E}_{i}$). This yields a complex message space. In
fact, the designer can make the appropriate bets for agents. All that she
needs to know is what state (say $s^{\prime }$) is being claimed, and what
is the true state. With this information, the designer can deduce which
agent has no perfect deception for the claimed state and place an
appropriate bet against the evidence of that agent on behalf of the agent
who wishes to raise the challenge. We will denote these bets by $%
b_{i}:S\times S\rightarrow \mathbb{R}^{|\mathbb{E}_{i}|}$.

\begin{remark}
The above can also be viewed as the building block of a monotonicity like
condition. That is, for a pair of states $s$ and $s^{\prime }$ such that $%
s\not\rightarrow _{i}s^{\prime }$, if all agents claim that the state is $%
s^{\prime }$, then there is an agent $j$ for whom $(f(s^{\prime })+b_{\hat{%
\imath}}(s^{\prime },s)(E_{\hat{\imath}}))$ is strictly preferred to $%
f(s^{\prime })$ in state $s$ while $f(s^{\prime })$\ is weakly preferred to $%
(f(s^{\prime })+b_{\hat{\imath}}(s^{\prime },s)(E_{\hat{\imath}}))$ in state 
$s^{\prime }$. The allocation $f(s^{\prime })+b_{\hat{\imath}}(s^{\prime
},s)(E_{\hat{\imath}})$ therefore, forms a test allocation which yields a
reversal for an agent $j$ with $f(s^{\prime })$ between the states $s$ and $%
s^{\prime }$.
\end{remark}

We formalize this notion below. First, denote by $\mathcal{A}_{i}(s,E_{i})$
the set of all deceptions $\alpha _{i}(s,E_{i})$ for agent $i$ when endowed
with evidence $E_{i}$ at state $s$. Let $\alpha _{i}(s)$ be defined as $%
\alpha _{i}(s)=(\alpha _{i}(s,E_{i}))_{E_{i}^{\prime }\in \mathbb{E}_{i}}$,
and analogously, let $\mathcal{A}_{i}(s)$ be the set of all possible
profiles $\alpha _{i}(s)$. Further, denote by $p_{\alpha
_{i},s}(E_{i}^{\prime })$ the probability induced over evidence collection $%
E_{i}^{\prime }$ by agent $i$ under $\alpha _{i}(s)$. That is, 
\begin{equation*}
p_{\alpha _{i},s}(E_{i}^{\prime })=\sum_{E_{i}\in \mathbb{E}%
_{i}}p_{i}(s)(E_{i})\alpha _{i}(s,E_{i})(s^{\prime },E_{i}^{\prime })\text{.}
\end{equation*}

Defining probability distributions over evidence, i.e. $(p_{\alpha
_{i},s}(E_{i}^{\prime }))_{E_{i}^{\prime }\in \mathbb{E}_{i}}$by $p_{\alpha
_{i},s}$, we then denote by $P_{i}(s)$ the set $\{p_{\alpha _{i},s}:\alpha
_{i}\in \mathcal{A}_{i}\}$, which is the set of all possible such
probability distributions which agent $i$ can induce by playing any
deception in state $s$. Among these distributions is also $p_{\alpha
_{i}^{\ast },s}$, the distribution over agent $i$'s evidence at $s$, which
corresponds to the \textquotedblleft no deception\textquotedblright\
scenario. At state $s^{\prime }$ let $p_{\alpha _{i}^{\ast },s^{\prime }}$
denote the induced distribution over evidence by the \textquotedblleft no
deception\textquotedblright\ scenario.

With these preliminaries established, we now present the challenge scheme in
the following lemma.

\begin{lemma}
\label{SEP_HYP}For every agent $i$, there is a finite set $%
B_{i}=\{b_{i}(s,s^{\prime }):s\not\rightarrow _{i}s^{\prime }\}$ such that
for any pair of states $s$ and $s^{\prime }$ for which $s\not\rightarrow
_{i}s^{\prime }$, we have $b_{i}(s,s^{\prime })\cdot p_{\alpha _{i}^{\ast
},s^{\prime }}<0$ and $b_{i}(s,s^{\prime })\cdot p>0$ $\forall p\in P_{i}(s)$%
.
\end{lemma}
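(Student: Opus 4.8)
The plan is to recast the claim as a separating-hyperplane statement in the finite-dimensional space $\mathbb{R}^{|\mathbb{E}_i|}$. Fix an agent $i$ and a pair of states $s, s'$ with $s \not\rightarrow_i s'$. The vectors we care about live in the simplex $\Delta(\mathbb{E}_i)$. I would first argue that $P_i(s)$, the set of evidence distributions agent $i$ can induce at $s$ by some deception reporting state $s'$, is a nonempty, compact, convex polytope: it is the image of the (compact, convex) set of deception profiles $\mathcal{A}_i(s)$ under the affine map $\alpha_i(s) \mapsto p_{\alpha_i, s}$, and this map is linear in the deception probabilities, so convexity and compactness are preserved. The key structural fact is that every $p \in P_i(s)$ is supported on collections $E_i'$ that are (weakly contained in) collections available at $s$; by nonrefutability of $s'$ at $s$ (claim (*) in the necessity proof), every such $E_i'$ or a superset is available at $s'$, but crucially the converse fails — that is exactly what $s \not\rightarrow_i s'$ buys us.

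\textbf{The separation argument.} The point $p_{\alpha_i^\ast, s'}$ — the truthful evidence distribution at $s'$ — is \emph{not} in $P_i(s)$: if it were, the deception achieving it would be perfect for $s'$ at $s$, contradicting $s \not\rightarrow_i s'$. Since $P_i(s)$ is a nonempty compact convex set and $p_{\alpha_i^\ast, s'} \notin P_i(s)$, the separating hyperplane theorem gives a vector $b_i(s,s') \in \mathbb{R}^{|\mathbb{E}_i|}$ and a scalar $c$ with $b_i(s,s') \cdot p_{\alpha_i^\ast, s'} < c \le b_i(s,s') \cdot p$ for all $p \in P_i(s)$. Now I would normalize: replace $b_i(s,s')$ by $b_i(s,s') - c\cdot \mathbf{1}$ (subtracting a constant from every coordinate), noting that $\mathbf{1} \cdot q = 1$ for any probability vector $q$, so this shifts both sides of the inequality by exactly $c$ and yields $b_i(s,s') \cdot p_{\alpha_i^\ast, s'} < 0 \le b_i(s,s')\cdot p$ for all $p \in P_i(s)$. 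To get the strict inequality $b_i(s,s') \cdot p > 0$ on all of $P_i(s)$, I would then perturb: since $P_i(s)$ is compact the minimum $\min_{p\in P_i(s)} b_i(s,s')\cdot p$ is attained and equals some $\delta \ge 0$; subtract a further small $\varepsilon > 0$ from every coordinate so the left side becomes $< -\varepsilon < 0$ while the right side stays $\ge \delta - \varepsilon$, and choose $\varepsilon < \delta$ if $\delta>0$ — but if $\delta = 0$ this fails, so instead I should get strictness from the start by separating $p_{\alpha_i^\ast, s'}$ from the compact convex $P_i(s)$ with a \emph{strictly} separating hyperplane (available precisely because one set is a point and the other compact convex and they are disjoint), which directly gives $b_i(s,s')\cdot p_{\alpha_i^\ast,s'} < c < b_i(s,s')\cdot p$ for all $p\in P_i(s)$, and then the constant-shift normalization by $-c\mathbf{1}$ finishes it.

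\textbf{Assembling the finite set.} Having produced one such $b_i(s,s')$ for each ordered pair $(s,s')$ with $s\not\rightarrow_i s'$, and since $S$ is finite, the collection $B_i = \{b_i(s,s') : s\not\rightarrow_i s'\}$ is automatically finite, which is all that is claimed. I would remark in passing that these bets can be rescaled freely (multiplying $b_i(s,s')$ by a positive constant preserves both inequalities), which is what lets the mechanism designer later tune the magnitudes so that the resulting transfers dominate the bounded utility differences — though that tuning belongs to the mechanism construction, not to this lemma.

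\textbf{Main obstacle.} I expect the only real subtlety is confirming that $P_i(s)$ is genuinely convex and compact and, more importantly, that $p_{\alpha_i^\ast, s'} \notin P_i(s)$ is \emph{equivalent} to $s\not\rightarrow_i s'$ rather than merely implied by it — one must check that "agent $i$ can induce the distribution $p_{\alpha_i^\ast,s'}$ over her reported collections while reporting state $s'$" is exactly the definition of a perfect deception, unwinding the definition of $p_{\alpha_i,s}$ and the perfectness condition $\sum_{E_i} p_i(s)(E_i)\alpha_i(s,E_i)(s',E_i') = p_i(s')(E_i')$. Once that dictionary is in place, everything else is the standard finite-dimensional separation toolkit.
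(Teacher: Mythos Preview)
Your proposal is correct and follows essentially the same approach as the paper: both invoke the separating hyperplane theorem on the compact convex set $P_i(s)$ and the excluded point $p_{\alpha_i^\ast,s'}$, then collect the resulting vectors over finitely many state pairs. You supply more detail than the paper does---in particular the strict-separation step and the $-c\mathbf{1}$ normalization exploiting that both sides are probability vectors---which the paper simply asserts; your aside about nonrefutability in the first paragraph is unnecessary (the lemma only assumes $s\not\rightarrow_i s'$), but it does no harm since you never use it.
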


\begin{proof}
Since $s\not\rightarrow _{i}s^{\prime }$, $p_{\alpha _{i}^{\ast },s^{\prime
}}\notin P_{i}(s)$. Further, $P_{i}(s)$ is closed and convex. Then, from the
separating hyperplane theorem, there is a hyperplane $b_{i}$ such that $%
b_{i}\cdot p_{\alpha _{i}^{\ast },s^{\prime }}<0$ and $b_{i}\cdot p>0$ $%
\forall p\in P_{i}(s)$.

For a pair of states $s$ and $s^{\prime }$ one such $b_{i}$ is sufficient.
Iterating over all possible pairs of states which satisfy the premise of no
perfect deception yields the set $B_{i}$. Therefore, each $B_{i}$ is finite.
\end{proof}

\begin{remark}
The above lemma is interpreted as follows: there is a vector of real numbers
(one for each possible collection of evidence for $i$) so that under
truthful revelation in state $s^{\prime }$, the expectation of this vector
is negative whereas for any possible strategy of agent $i$ in state $s$, the
expectation is positive. Accordingly, if an agent $j$ were to receive/give
sums of money based on $b_{i}$ according to agent $i$'s evidence
presentation in the direct mechanism, then in state $s^{\prime }$ under
truthful revelation by agent $i$, agent $j$ loses money while under any
strategy for agent $i$ in state $s$, agent $j$ gains money. This yields a
monotonicity like interpretation - if $s^{\prime }$ is a lie, the agent $j$
can "blow the whistle" and indeed chooses to do so since it is profitable.
\end{remark}

We remark here that the set $B_{i}$ is a set of possible bets, one for each
pair of states such that there is no perfect deception for one to another.
For the set to be usable in a mechanism, we need a way for an agent to
utilize members of this set to challenge unanimous but incorrect state
claims. This operationalization is achieved as follows. First, define a
function $\hat{\imath}:S\times S\rightarrow \mathcal{I}$ such that if $%
s\not\rightarrow s^{\prime }$, $s\not\rightarrow _{\hat{\imath}(s,s^{\prime
})}s^{\prime }$. That is, $\hat{\imath}$ identifies the agent who does not
have a perfect deception from state $s$ to $s^{\prime }$. Then, $b_{\hat{%
\imath}}:S\times S\rightarrow B_{\hat{\imath}}$ is defined such that $b_{%
\hat{\imath}}(s,s^{\prime })\cdot p_{\sigma _{\hat{\imath}}^{\ast
},s^{\prime }}<0$ and $b_{\hat{\imath}}(s,s^{\prime })\cdot p>0$ $\forall
p\in P_{\hat{\imath}}(s)$. The existence of such functions follows
immediately from Lemma \ref{SEP_HYP}.

That is, when an agent $j$ wants to bet against a report of $s^{\prime }$ in
state $s$, then merely telling the designer that the true state is $s$
suffices. This is because the designer can use the function $\hat{\imath}$
as defined above to infer which agent has no perfect deception from $s$ to $%
s^{\prime }$ and can appropriately place a bet against $\hat{\imath}$'s
evidence on behalf of $j$ using the function $b_{\hat{\imath}}$ so that the
bet loses money for $j$ if $s^{\prime }$ were indeed true and $\hat{\imath}$
were reporting truthfully, and wins money for $j$ if $s$ were true
irrespective of what strategy $\hat{\imath}$ employs.

\subsection{\label{SKETCH}A Sketch of the Proof}

In Section \ref{NPD}, we have established that NPD\ is necessary for
implementation. In what follows, we will construct an implementing mechanism
to establish that it is also sufficient. To fix ideas, we begin with a
sketch of the proof.

 \begin{figure}[h!]
 \centering
 \includegraphics[width=\linewidth]{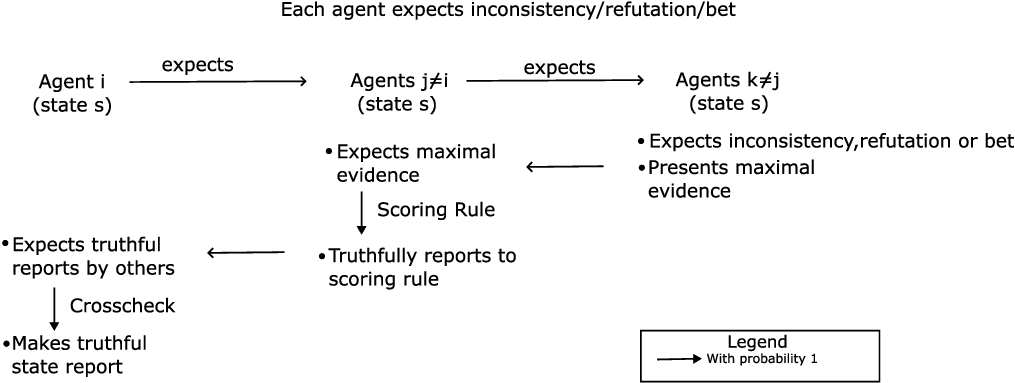}
 \caption{An illustration of the mechanism}
 \label{BASE_MECH}
 \end{figure}
\textbf{An intermediate step.} The mechanism asks each agent for the
following reports: A distribution report about herself, a distribution
report for the agent to her right, a number in $[0,1]$, an evidence report,
and a state report (used for placing bets). The proof utilizes the following
steps. The first step is an intermediate result that is used frequently, and
thus forms one of the main building blocks of the proof. In essence, the
result states that if each type of a particular agent (say $i$) submits all
the evidence it is endowed with, then every agent ($i$ herself, and all
others) present truthful reports about $i$'s evidence distribution. This
result is obtained as follows. First, we use a proper scoring rule ($\tau
^{2}$ in the proof) to incentivize other agents to predict the evidence
report by agent $i$. Since agent $i$ presents maximal evidence in all her
types, the optimal predictor for other agents is the true distribution for
agent $i$. This truthful report is used (by means of a crosschecking
penalty) to discipline agent $i$'s report about herself. This is denoted by $%
\tau ^{3}$ in the proof. To reiterate, agent $i$'s maximal evidence
presentation incentivizes other agents to make truthful predictions about
her, whereupon a crosscheck incentivizes agent $i$ to make truthful reports
about her own distribution. A direct corollary of this result is that if all
agents present maximal evidence, then the entire profile of reports is
truthful.

\textbf{Competing Scoring Rules.} A key novelty of this argument is in the
solution to the central tension when using scoring rules - on the one hand,
the scoring rule must be active on all realizations of evidence to elicit
the true distribution (otherwise it elicits the distribution conditional on
being active, which may not yield truthful elicitation), but on the other
hand, we require zero transfers on equilibrium. This is solved by breaking
the scoring rule transfer into two parts - a direct part, which incentivizes
predictions, and an offset part, which is used to cancel the direct part in
equilibrium. When an agent makes a prediction, she only controls the direct
part, so that from her point of view, the scoring rule is always active.
This incentivizes her to make truthful predictions about others. However,
her truthful report is used to discipline other agent's reports about
themselves, and this disciplined report is then used to offset the scoring
rule, so that it does not result in a transfer on equilibrium.

\textbf{Consistency.} Armed with this intermediate result, the second step
argues that the report profile must be pure (recall that agents are allowed
to mix) and must be consistent with some state $s$. This step depends on an
incentive for evidence presentation which is triggered if agents disagree
with each other. This is denoted by $\tau ^{1}$ in the proof. Broadly, if
agents mix, then agents expect disagreement in the distribution reports, and
if the message profile is inconsistent (due to such disagreements), then
each agent is incentivized to present their maximal evidence. Then, from the
above intermediate result, we can argue that the message profile cannot be
inconsistent, since agents must make truthful reports if all evidence
presentation is maximal and truthful reports must, by definition, be
consistent with the true state. Figure \ref{BASE_MECH} presents an overview
of this step to help fix ideas.

\textbf{Refutable Lies.} The third step establishes that there cannot be a
consensus on a refutable lie. There is a small reward for refuting a lie
(this is built into $\tau ^{1}$), so that the agent who has the refuting
evidence always has the incentive to present it. The approach used is to
incentivize every agent to present all their evidence upon a refutation and
then use the intermediate result to argue that there cannot be a consensus
on a refutable lie. This runs into the following problem however -- because
the evidence is uncertain, an agent (say $i$) may not always be endowed with
the evidence required to refute the lie. While the other agents ($j$) still
expect refutation with positive probability (and thus present maximal
evidence), agent $i$ may not always have the incentive to present all her
evidence. This is resolved by a key innovation that finds use in the rest of
the paper: a \emph{reflection} transfer. We ask the agent to the right of
agent $i$ to predict whether agent $i$ will refute a consensus using a
scoring rule. Since agent $i$ will indeed refute a consensus with positive
probability, the agent to her right will report a positive probability to
the scoring rule. We then activate the evidence elicitation transfer for
agent $i$ ($\tau _{i}^{1}$) contingent on agent $i+1$ reporting a positive
value. This \emph{reflects} a low probability action on the part of agent $i$
(refuting a consensus) via the agent to her right to agent $i$ herself and
incentivizes her to present maximal evidence. Since agent $i$ now presents
her maximal evidence as well, the intermediate result realigns the profile
to the truth. This reflection transfer is a key part of our strategy which
allows us to use transfers with limited liability since otherwise we would
have to scale up a refutation penalty based on the probability of refutation.

\textbf{Nonrefutable Lies.} The fourth and final step establishes that there
cannot be a consensus on a nonrefutable lie which has a different outcome
than the truth. If this were to be the case, then from Condition NPD, there
is an agent (say $i$) who does not have a perfect deception from the true
state to $s$. Then, the mechanism provides a way for agents to make a bet,
as discussed in Section \ref{CHALLENGE}. Further $\tau ^{1}$ is activated
when there is an active bet as well. This leads to the submission of maximal
evidence by all agents, which by the first step means that the profile of
reports must have been true, so that there could not have been a consensus
on a nonrefutable lie in the first place.

Thus, we are left only the possibility of a joint pure report of the true
state (or a state which is outcome equivalent to the truth), so that
implementation obtains. In Section \ref{IMPL}, we also show that there are
no transfers in equilibrium.

\subsection{\label{MECHANISM}Mechanism}

\subsubsection{Message Space}

Suppose the agents are seated around a circular table, so that for each
agent $i$, there is an agent to her right (this is agent $i+1$). For agent $%
I $, the agent considered as \textquotedblleft being to the right
of\textquotedblright\ her is agent $1$. With this, the message space is
defined as follows,

\begin{equation*}
M_{i}=P_{i}\times P_{i+1}\times \lbrack 0,1]\times \mathbb{E}_{i}\times S%
\text{,}
\end{equation*}%
and a typical message ($m_{i}$) is given by $%
m_{i}=(p_{i,i},p_{i,i+1},c_{i,i+1},E_{i},s_{i})$.

That is, each agent makes a claim about her own evidence distribution ($%
p_{i,i}$) , a claim about the evidence distribution of the agent to her
right ($p_{i,i+1}$), a prediction about whether the agent to her right
refutes a consensus, submits a collection of evidence and makes a state
claim for (possibly) betting against other's evidence reports.

\subsubsection{Outcome}

The outcome is given by $f(s)$ if there is a state $s$ such that $%
p_{i}(s)=p_{i,i}$, and an arbitrary outcome $a\in A$ otherwise. That is, the
outcome is controlled solely by the agents' distribution reports about
themselves.

\subsubsection{Transfers}

The mechanism uses five transfers. There are scaling parameters with most of
the transfers, and we will define the appropriate scaling in Section \ref%
{SCALING}. A message profile $m$ is said to be \emph{consistent} with state $%
s$ if each distributional claim matches the true distribution in state $s$.
That is, $m=(p_{i,i},p_{i,i+1},E_{i},s_{i})_{i\in \mathcal{I}}$ is
consistent with $s$ if $\forall i,j$, $p_{i,j}=p_{j}(s)=p_{j,j}$. Otherwise,
it is \emph{inconsistent}.

The first transfer provides an agent the incentive to submit evidence when
there is inconsistency, or an active bet, or if a consistent profile has
been refuted by an agent. That is,

\begin{equation*}
\tau _{i}^{1}\left( m\right) =\left\{ 
\begin{tabular}{ll}
$\varepsilon \times |E_{i}|$, & if $m$ involves inconsistency, or an active
bet, \\ 
& or if $c_{i-1,i}>0$, or if $m$ is consistent with $s$ and $E_{i}\downarrow
s$; \\ 
$0$ & otherwise.%
\end{tabular}%
\right.
\end{equation*}

where $\varepsilon >0$ is a small positive number.

For defining the second transfer (which is a modified proper scoring rule),
we first define a general scoring rule which will find use in the rest of
this paper. A scoring rule requires two inputs to score a prediction - (i)
A\ claimed distribution (say $p$) over the possible realizations of a random
variable (say $X$); and (ii) The actual realization (say $x$). Accordingly, $%
Q(p,x)$ is defined as $[2p(x)-p\cdot p]$. It is well known that this scoring
rule elicits truthful belief elicitation. With this, define

\begin{equation*}
\tau _{i,i+1}^{2}(m)=\gamma \times \lbrack
Q(p_{i,i+1},E_{i+1})-Q(p_{i+1,i+1},E_{i+1})]\text{,}
\end{equation*}

Where $\gamma >0$. This transfer provides agent $i$ the incentive to
truthfully report agent $i+1$'s distribution (due to the first term) when
agent $i+1$ presents all her evidence. In equilibrium, the reports are
consistent and therefore the transfer is off.

\begin{equation*}
\tau _{i}^{3}(m)=\left\{ 
\begin{tabular}{ll}
$-1$, & if $p_{i,i}\neq p_{i-1,i}$ \\ 
$0$ & otherwise.%
\end{tabular}%
\right.
\end{equation*}

Every agent's report about herself is cross-checked with the report about
her by the agent to her left. If the reports are not in agreement, then the
agent is fined $1$ dollar.

The fourth transfer is a \emph{reflection} transfer which asks an agent to
predict whether the agent to her right will refute a consensus.\footnote{%
Agent $i$'s action (refutation) is predicted by agent $i-1$ which in turn 
\emph{is reflected} back to give agent $i$ the incentive to present maximal
evidence (in $\tau ^{1}$).} First, define by $R_{i+1}=1$ if $m$ is
consistent with $s$ and $E_{i+1}\downarrow s$ and $0$ otherwise. Then,

\begin{equation*}
\tau _{i}^{4}(m)=\varepsilon \times \lbrack Q(c_{i,i+1},R_{i+1})-Q(0,0)]%
\text{.}
\end{equation*}

Note that in equilibrium, there are no refutations and therefore $%
c_{i,i+1}=0 $ and $R_{i+1}=0$ so that this transfer is inactive in
equilibrium.

$\tau ^{5}$ is as defined below.

\begin{equation*}
\tau _{i}^{5}\left( m\right) =\left\{ 
\begin{tabular}{ll}
$\varepsilon \times b_{j}(s^{\prime },s_{i})(E_{j})$, where $j=\hat{\imath}%
(s^{\prime },s_{i})$ & if $m$ is consistent with $s^{\prime }$ and $%
s_{i}\not\rightarrow _{j}s^{\prime }$ \\ 
$0$ & otherwise.%
\end{tabular}%
\right.
\end{equation*}

where $\varepsilon >0$ is a small positive number as defined earlier. Note
that $\tau ^{1},\tau ^{4}$ and $\tau ^{5}$ are scaled using the same number $%
\varepsilon $.

That is, if agent $i$ bets against a unanimous report consistent with state $%
s^{\prime }$ with a bet claiming state $s_{i}$, the designer evaluates the
challenge in accordance with the set $B_{j}$, where $j=\hat{\imath}%
(s^{\prime },s_{i})$ and the function $b_{j}$ defined in Section \ref%
{CHALLENGE} and pays out the revenue from the bet to agent $i$.

\subsubsection{\label{SCALING}Scaling}

In this section, we will define some parameters used for appropriately
scaling the transfers, and then establish that there are values for $%
\varepsilon $ and $\gamma $ so that the appropriate scaling is possible.

First, we provide an intuitive overview of the scaling based on the proof
sketch in Section \ref{SKETCH}. Recall that we are working in a normalized
setup in which the utility of the outcome has been normalized to a number
less than 1. Since an arbitrarily small $\varepsilon $ provides sufficient
incentive for agents to submit evidence, make bets, and appropriately
predict refutations, we can choose $\varepsilon $ small enough so that a
penalty of 1 dollar is enough to dominate not only any change in outcome,
but also any gains or losses from $\tau ^{1}$, $\tau ^{4}$ and $\tau ^{5}$.
The scoring rule $\tau ^{2}$ is at the medium scale ($\gamma $) and
dominates any effect from $\tau ^{1}$ and $\tau ^{5}$. The crosscheck $\tau
^{3}$ is the only large transfer in this mechanism as it is designed to
dominate the effect of any change in the outcome along with dominating any
losses from $\tau ^{1}$ and $\tau ^{5}$. However, the crosscheck only needs
to be of the order of the outcome since $\tau ^{1}$ and $\tau ^{5}$ are
themselves of the scale $\varepsilon $. We define the transfer scales
formally below.

Since $\tau ^{2}$ is a proper scoring rule, if an agent $i+1$ presents all
her evidence, agent $i$ maximizes her revenues from $\tau ^{2}$ by
truthfully reporting agent $i+1$'s evidence distribution. We require that
the minimum expected loss for an agent $i$ from lying about agent $i+1$'s
distribution dominates any effect from $\tau ^{1}$ and $\tau ^{5}$. That is,

\begin{equation}
\min_{s\in S}\min_{p_{i,j}\neq p_{j}(s)}\sum_{E_{j}\in \mathbb{E}%
_{j}}p_{j}(s)(E_{j})\gamma \lbrack
Q_{j}(p_{j}(s),E_{j})-Q_{j}(p_{i,j},E_{j})]>\zeta \varepsilon  \label{B}
\end{equation}

where $\zeta \varepsilon $ represents the maximum possible loss from $\tau
^{1}$ and $\tau ^{5}$ combined when an agent contemplates changing their
report $p_{i,j}$. Since $\varepsilon $ is arbitrarily small, it is clear
that $\gamma $ can be chosen to satisfy inequality \ref{B}. Turning to the
crosscheck $\tau ^{3}$, it must dominate any incentive from the outcome
alongside any losses from $\tau ^{1}$ and $\tau ^{5}$. Thus, a penalty of $1$
dollar suffices to meet these requirements when $\varepsilon $ is chosen to
be sufficiently small.

\subsection{Proof of Implementation}

In what follows, we assume that the true state is $s^{\ast }$ and that it
induces the profile of evidence distributions $(p_{i}^{\ast })_{i\in 
\mathcal{I}}$. In Appendix \ref{app:UST}, we show that a generalized version
of this mechanism is continuous with compact message spaces, which
guarantees the existence of an equilibrium. The rest of the proof
establishes that such an equilibrium can only involve truthtelling.

\subsubsection{Preliminary Results}

\begin{lemma}
\label{main}If agent $i$ presents all her evidence in $E_{i}$, then each
agent $j\in \mathcal{I}$ presents the truth in each message $p_{j,i}$.
\end{lemma}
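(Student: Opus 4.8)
The plan is to work through the two transfers that involve reports about agent $i$'s distribution, namely $\tau^{2}$ and $\tau^{3}$, and show that when agent $i$ presents all her evidence, every report $p_{j,i}$ must coincide with the true distribution $p_{i}^{\ast}$. I would split the argument into two parts corresponding to the two kinds of reporters: the agent $i-1$ (who reports $p_{i-1,i}$ and is paid by the scoring rule $\tau^{2}$), and agent $i$ herself (who reports $p_{i,i}$ and faces the crosscheck $\tau^{3}$).

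First I would handle agent $i-1$. Fix any BNE $\sigma$ and any evidence realization $E_{i-1}$ of agent $i-1$. The only transfer whose value depends on $p_{i-1,i}$ and is not itself controllable by changing other coordinates of $i-1$'s own message in a way that cancels it is the direct term $\underline{\tau}\cdot Q_{i}(p_{i-1,i},E_{i})$ in $\tau^{2}_{i-1,i}$; the offset term $-\underline{\tau}\cdot Q_{i}(p_{i,i},E_{i})$ does not depend on $p_{i-1,i}$, and all other transfers ($\tau^{1}$, $\tau^{3}$, $\tau^{4}$, $\tau^{5}$) and the outcome $g$ are unaffected by the $p_{i-1,i}$ coordinate alone (here I use that $\tau^{4}$ for agent $i-1$ depends on $p_{i-1,i}$ only through whether the $(p_{j,j+1})$ profile is consistent, and the $\varepsilon$-scale terms are dominated — this is exactly what the scaling in Section \ref{SCALING} buys us, but I should double-check $\tau^{4}$ does not create a perverse incentive here). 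So, holding the realized $E_{i}$ fixed, agent $i-1$'s problem over the coordinate $p_{i-1,i}$ is to maximize $\mathbb{E}[\underline{\tau}\cdot Q_{i}(p_{i-1,i},E_{i})]$ where the expectation is over $E_{i}$ according to $i-1$'s belief about $i$'s evidence report. Since agent $i$ presents all her evidence in every type, the distribution of $i$'s evidence report equals $p_{i}^{\ast}=p_{i-1}(s^{\ast})$ exactly (no withholding). Because $Q_{i}$ is a strictly proper scoring rule — $\mathbb{E}_{E_{i}\sim q}[Q_{i}(p,E_{i})]$ is uniquely maximized over $p$ at $p=q$ — the unique best response is $p_{i-1,i}=p_{i}^{\ast}$. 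Hence $p_{i-1,i}=p_{i}^{\ast}$ almost surely in the equilibrium.

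Second, agent $i$'s own report $p_{i,i}$ enters $\tau^{3}_{i}$, which fines her $\underline{\tau}$ whenever $p_{i,i}\neq p_{i-1,i}$. By the previous step $p_{i-1,i}=p_{i}^{\ast}$ with probability one, so the crosscheck fine is avoided if and only if $p_{i,i}=p_{i}^{\ast}$. The only countervailing incentives are: the offset term $-\underline{\tau}\cdot Q_{i}(p_{i,i},E_{i})$ inside $\tau^{2}_{i,i+1}$ (whose magnitude is bounded by $\bar\tau^{2}$ in the worst case, but crucially the scaling \eqref{B} was chosen so that the scoring-rule swing cannot outweigh a fine of $\underline\tau$ when the comparison distribution is the true one — I would verify the exact inequality here, since this is the one genuinely quantitative point), the possibility that changing $p_{i,i}$ flips consistency of the profile and hence changes the outcome $f(\cdot)$ or triggers $\tau^{1}$/$\tau^{5}$ (bounded in total by $1+$ something of $\varepsilon$-order and dominated by $\underline\tau>1$), and $\tau^{4}$ (which does not depend on $p_{i,i}$ at all). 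So deviating from $p_{i,i}=p_{i}^{\ast}$ incurs a certain loss of $\underline{\tau}$ from $\tau^{3}$ that strictly exceeds every possible gain, and the unique best response is $p_{i,i}=p_{i}^{\ast}$. Combining, every agent $j\in\mathcal{I}$ reports $p_{j,i}=p_{i}^{\ast}=p_{i}(s^{\ast})$, i.e., the truth.

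The main obstacle I anticipate is not any single step but the bookkeeping of transfer scales: I must be careful that when agent $i-1$ or agent $i$ contemplates a deviation in the distribution-about-$i$ coordinate, that deviation does not simultaneously and favorably move $\tau^{4}$ or the outcome by breaking/creating consistency of the $(p_{j,j+1})$-profile. The clean way to organize this is to note that $\tau^{2}$'s direct term scales in $\underline\tau$ and is strictly proper, $\tau^{3}$ scales in $\underline\tau$, both dominate the $\varepsilon$-scale terms and the outcome (normalized below $1$) by the choices in Section \ref{SCALING}, while $\tau^{4}$ at scale $\overline\tau$ affects agent $i-1$ only through the consistency indicator — and a deviation that destroys consistency can only be worse for $i-1$ given the fine structure, while one that were to create consistency is not available to a single agent acting alone once $i$'s truthful report is pinned down. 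I would state this domination once at the start and then invoke it in both halves.
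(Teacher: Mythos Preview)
Your approach is essentially identical to the paper's: first pin down $p_{i-1,i}=p_i^{\ast}$ via the strictly proper scoring rule $\tau^2$ (using that $i$ presents all her evidence, so the distribution of her evidence report is exactly $p_i^{\ast}$), then pin down $p_{i,i}=p_i^{\ast}$ via the crosscheck $\tau^3$, invoking $\underline\tau>1$ to dominate the outcome and the $\varepsilon$-scale terms $\tau^1,\tau^5$. The paper's proof is a terse version of this same two-step argument and mentions exactly the same countervailing transfers ($\tau^1$ and $\tau^5$ from creating consistency).

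One point of caution: your final paragraph on $\tau^4$ is muddled. You write that ``a deviation that destroys consistency can only be worse for $i-1$,'' but $\tau^4$ is a \emph{fine} that is active only when the $(p_{j,j+1})$-profile is consistent, so destroying consistency is weakly \emph{better} for $i-1$ on that term, not worse. The genuine worry is the opposite direction---whether setting $p_{i-1,i}=p_i^{\ast}$ could \emph{create} consistency with some refutable state and thereby expose $i-1$ to an expected $\bar\tau$-scale penalty that swamps the $\underline\tau$-scale scoring gain. The paper's own proof simply does not raise $\tau^4$ at this step (it lists only $\tau^1$ and $\tau^5$ as possible losses from creating consistency), so you are not omitting anything relative to the paper; but your attempted resolution reverses the sign of the effect and should be dropped or rewritten.
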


\begin{proof}
If agent $i$ presents all her evidence in $E_{i}$, then for agent $i-1$,
setting $p_{i-1,i}=p_{i}^{\ast }$ maximizes her payoff from $\tau ^{2}$
(since truthful reporting is optimal under a scoring rule). However, this
could also cause losses from $\tau ^{1}$ (by creating consistency) and $\tau
^{5}$ (by activating a bet via creating consistency). However, by the choice
of $\gamma $ in inequality \ref{B}, the gains from $\tau ^{2}$ dominate any
possible losses from $\tau ^{1}$ and $\tau ^{5}$. Then, $p_{i,i}=p_{i}^{\ast
}$ is the best response for agent $i$ since otherwise she suffers a loss of $%
1$ dollar from $\tau ^{3}$.
\end{proof}

\subsubsection{\label{CONS}Consistency}

\begin{claim}
\label{CONSISTENCY}In any equilibrium, there is a state $s$ such that the
message profile is consistent with $s$.
\end{claim}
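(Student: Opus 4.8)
The plan is to argue by contradiction: suppose there is an equilibrium $\sigma$ in which, with positive probability, the realized message profile is inconsistent with every state. I would first observe that inconsistency of a message profile is a property that every agent can detect from her own information only probabilistically — an agent $i$ knows her own report and the distribution of others' reports, so she can compute the probability that the realized profile is inconsistent. The key is that if this probability is positive given some type $(s^*, E_i)$ of agent $i$, then $\tau^1_i$ is "active" for her with positive probability, and since $\tau^1_i(m) = \varepsilon \times |E_i|$ on exactly the inconsistent profiles (among others), agent $i$ strictly increases her expected $\tau^1$ payoff by enlarging $E_i$ — i.e. by presenting \emph{all} the evidence she is endowed with. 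I would need to check that this incentive is not overturned by the other transfers: presenting more evidence can only (i) change the outcome, (ii) affect $\tau^2$ (but $\tau^2_{i,i+1}$ depends on $E_{i+1}$, not $E_i$, and $\tau^2_{i-1,i}$ is controlled by agent $i-1$, so agent $i$'s own evidence report enters $\tau^2$ only through the scoring of $i-1$'s prediction, which from $i$'s perspective is a sunk randomization), (iii) affect $\tau^3$ (again not through $i$'s own evidence report), (iv) trigger $\tau^4$ (but $\tau^4$ \emph{penalizes} refuting a consensus, and here there is no consensus since the profile is inconsistent; moreover presenting evidence when the profile is already inconsistent cannot create a consensus), and (v) affect $\tau^5$ only through creating/breaking consistency. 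Since $\varepsilon$ is chosen small enough that a one-dollar penalty dominates all $\tau^1$ and $\tau^5$ effects and all outcome changes, but here the comparison is \emph{within} the $\varepsilon$-scale and the outcome-scale (no one-dollar penalty is in play when the profile stays inconsistent), I would argue directly that the marginal $\tau^1$ gain from one extra article is $\varepsilon$, which strictly dominates any outcome change (utility bound $<1$, but we need the finer statement that adding one article does not change the outcome once the profile is inconsistent — which holds because an inconsistent profile yields the arbitrary outcome $a$ regardless) and any $\tau^5$ change of the same $\varepsilon$-order, by the ordering of scaling constants established in Section \ref{SCALING}. Hence in equilibrium every type of every agent presents maximal evidence.

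Once every agent presents all her evidence, Lemma \ref{main} applies to \emph{every} agent $i$: each agent $j$ reports the truth in $p_{j,i}$, i.e. $p_{j,i} = p_i^*$ for all $i,j$. In particular $p_{i,i} = p_i^* = p_{i,i+1}$-component matches, so $p_{i,j} = p_j(s^*) = p_{j,j}$ for all $i,j$, which is exactly the definition of the message profile being consistent with the true state $s^*$. This contradicts the supposition that the profile is inconsistent with every state. Therefore in any equilibrium the realized message profile is (almost surely) consistent with some state $s$.

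The main obstacle I anticipate is the bootstrapping step: to invoke Lemma \ref{main} I need \emph{all} agents to present maximal evidence, but the incentive to present maximal evidence (via $\tau^1$) is contingent on inconsistency occurring with positive probability, which is what I am assuming for contradiction — so the argument must be careful that the "present all evidence" conclusion holds for every agent \emph{simultaneously under the hypothesized inconsistent equilibrium}, not merely conditionally. The clean way to handle this is: fix the hypothesized equilibrium; since the profile is inconsistent with positive probability, there is at least one agent $i$ and one type for whom $\tau^1_i$ is active with positive probability, forcing that type to present maximal evidence; but then I must propagate this. The resolution is that the $\tau^1$ incentive is actually unconditional in the following sense — if agent $i$ does \emph{not} present all her evidence in some type, deviating to present it weakly increases her $\tau^1$ payoff (never decreases it, since $\tau^1 \geq 0$ and is monotone in $|E_i|$ on the active region, and the active region only grows) and does not hurt any other transfer by more than an outcome-scale amount, which is dominated once the increment is strictly positive on a positive-probability event; and in any inconsistent equilibrium that positive-probability event exists for at least one agent, whose maximal-evidence play (via Lemma \ref{main}) forces truthful distribution reports about that agent, which tightens consistency and must eventually be checked against the $\tau^1$/$\tau^4$ structure. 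I would therefore organize the proof as: (1) show maximal evidence presentation is a best response for every agent in any equilibrium where inconsistency has positive probability, arguing transfer-by-transfer with the scaling bounds; (2) apply Lemma \ref{main} to all agents; (3) derive full consistency with $s^*$, contradicting the hypothesis.
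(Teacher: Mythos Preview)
You have correctly located the crux of the argument --- the bootstrapping step --- but your proposed resolution does not close the gap. The problem is the case in which exactly one agent, say $i$, randomizes in her distribution reports while all $j\neq i$ play pure distribution reports. Conditional on a message $m_i$ in her support whose distribution-report components happen to make the full profile consistent with some state $s$ (given the others' pure reports), agent $i$ faces probability zero of inconsistency; if in addition no bet is active and her reported evidence does not refute $s$, then $\tau^1_i$ is inactive for that message and submitting more evidence yields only a \emph{weak} improvement. Your own argument shows precisely this: you obtain that maximal-evidence presentation weakly increases $\tau^1$, but weak dominance is not enough to force it in equilibrium. Hence agent $i$ may withhold evidence on that message, Lemma~\ref{main} cannot be applied to $i$, and you cannot conclude that $p_{i-1,i}=p_i^{\ast}$. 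Your ``propagation'' sentence in the final paragraph gestures at a fix but does not supply one.

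The paper's proof handles this case with a short extra step that uses the cross-check transfer $\tau^3$ rather than $\tau^1$. First it case-splits on the number of agents randomizing in distribution reports: with zero or at least two randomizers, every agent expects inconsistency with positive probability in every message, so your $\tau^1$/Lemma~\ref{main} argument goes through. With exactly one randomizer $i$, every $j\neq i$ still expects inconsistency with positive probability and hence presents maximal evidence; Lemma~\ref{main} applied to each such $j$ gives $p_{k,j}=p_j^{\ast}$ for all $k$ and all $j\neq i$, which in particular forces $p_{i,i+1}=p_{i+1}^{\ast}$ to be pure and truthful, so $i$ can only be randomizing in $p_{i,i}$. But $p_{i-1,i}$ is pure (agent $i-1$ does not randomize), so any message of $i$ with $p_{i,i}\neq p_{i-1,i}$ triggers the $\underline{\tau}$-sized $\tau^3$ penalty; deviating to match $p_{i-1,i}$ avoids it at a cost bounded by $1$ (outcome and $\varepsilon$-scale effects), and since $\underline{\tau}>1$ this is a strictly profitable deviation --- the desired contradiction. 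The ingredient you are missing is this $\tau^3$ step; without it the one-randomizer sub-case remains open.
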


\begin{proof}
If all agents present all their evidence, then Lemma \ref{main} yields that
all agents present true distributional claims in their reports and thus the
profile of reports must be consistent with the truth, so that $s=s^{\ast }$.

Now, suppose that there is inconsistency but not all agents present all
their evidence. We claim that this is only possible if precisely one agent $%
i $ randomizes in her reports $p_{i,i}$ or $p_{i,i+1}$. To see this, notice
that if two or more agents randomize in their distribution reports, then
each type of each agent expects inconsistency with positive probability and
presents all their evidence. So suppose that only agent $i$ mixes. Then in
each message, each agent $j\neq i$ expects inconsistency with positive
probability and presents all her evidence ($\varepsilon >0$). Then, from
Lemma \ref{main}, $p_{k,j}$ is the truth for all $k\in \mathcal{I}$ and $%
j\neq i$. Therefore, $i$ can only mix in $p_{i,i}$. However, by hypothesis,
each agent $j\neq i$ presents a pure report in $p_{j,i}$. Therefore, in any
message $m_{i}$ where $p_{i,i}\neq p_{j,i}$, she incurs a loss from $\tau
^{3}$. Consider a deviation for agent $i$ to match $p_{i,i}$ to $p_{j,i}$.
This avoids the loss from $\tau ^{3}$, but may cause losses from $\tau ^{1}$
(by creating consistency), from $\tau ^{5}$ (by activating a bet via
creating consistency), and from a possible change in the outcome. However, a
gain of $1$ dollar is enough to offset any losses at the epsilon scale along
with a change in the outcome. Therefore, we have a contradiction.
\end{proof}

\subsubsection{\label{norefutable}Eliminating Refutable Lies}

\begin{claim}
\label{NOORL}There is no equilibrium where the distribution reports are
consistent with a refutable lie.
\end{claim}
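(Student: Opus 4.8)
The plan is to argue by contradiction: suppose there is an equilibrium $\sigma$ whose distribution reports are consistent with a refutable lie $s$ (so $f(s)$ is the outcome being implemented, even though the true state is $s^\ast$ and $s$ is refutable at $s^\ast$). By Claim \ref{CONSISTENCY} the report profile is consistent with $s$, meaning every agent reports $p_{j,j}=p_{i,j}=p_j(s)$. Since $s$ is refutable at $s^\ast$, there is an agent $j$ and a collection $E_j$ with $p_j^\ast(E_j)>0$ and $E_j\downarrow s$. The first thing I would pin down is what happens along the equilibrium path: because the profile is consistent with $s$, the transfer $\tau^1$ is inactive only as long as no agent refutes $s$ and no bet is active; but as soon as some agent presents evidence refuting $s$, both $\tau^1$ (the refutation branch) and $\tau^4$ switch on. The heart of the argument is that agent $j$, who with positive probability holds refuting evidence $E_j$, has a strict incentive to \emph{present} that evidence rather than withhold it.

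Next I would trace through the incentives for the type $(s^\ast, E_j)$ of agent $j$ with $E_j\downarrow s$. If this type withholds enough evidence to avoid refuting $s$, the outcome is $f(s)$ with no $\tau^4$ penalty; if instead she submits $E_j$ (or any collection refuting $s$), then $\tau^4$ imposes $-\bar\tau$ on \emph{the other} agents but, crucially, not on her — reread $\tau^4_i$: it fires for agent $i$ when $p_{i,i+1}=p_{i+1}(s)$ for all $i$ and $E_k\downarrow s$ for some $k\neq i$; so the refuter $j$ herself is exempt, and she additionally collects the small reward $\varepsilon|E_j|$ from the $\tau^1$ refutation branch. The only countervailing effects are a possible change in outcome (bounded by $1$ by normalization) and possible $\tau^2,\tau^3,\tau^5$ movements. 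Here is where the scaling does the work: I would invoke inequality (\ref{D}), which makes $\bar\tau$ large enough that the $\tau^4$ penalty, weighted by the smallest refutation probability $\min\{p_i(s)(E_i):E_i\downarrow s'\}$, dominates $1+\bar\tau^2$ — i.e. dominates any outcome change plus any $\tau^2$ swing. So once we show the refuting configuration is reached, the consensus on $s$ cannot be an equilibrium.

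The subtle point — and the step I expect to be the main obstacle — is handling the \emph{mixed-strategy} possibilities and the self-referential structure of $\tau^4$: $\tau^4$ penalizes all agents \emph{except} a refuter, so each agent would like \emph{someone else} to refute, creating a potential coordination issue in who refutes. The clean way through is to observe that $\tau^4$ penalizes an agent whenever the consensus $s$ is refuted by \emph{some} other agent, so for a non-refuting agent the penalty $-\bar\tau$ is unavoidable once any single refutation occurs; hence each agent $i$ for whom refuting evidence is available at $s^\ast$ strictly prefers to be (one of) the refuter(s), since refuting costs her nothing from $\tau^4$ and earns $\varepsilon|E_i|$ from $\tau^1$, whereas not refuting risks $-\bar\tau$ if another agent refutes and gains only the outcome $f(s)$ otherwise. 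I would therefore argue: either no refuting evidence is ever realized at $s^\ast$ — impossible, since $s$ refutable at $s^\ast$ means some agent holds $E_i\downarrow s$ with positive probability and, by (se1), such $E_i$ cannot be presented in a way consistent with $s$ — or the realized refuting type strictly prefers to present that evidence, breaking consistency with $s$ and contradicting that $\sigma$ sustains the refutable lie. A short case check confirms that the $\varepsilon$-scale transfers ($\tau^1,\tau^5$) and the $\underline\tau$-scale transfers ($\tau^2,\tau^3$) are all dominated by the $\bar\tau$-scale $\tau^4$ via the scaling inequalities of Section \ref{SCALING}, so none of them can rescue the putative equilibrium.
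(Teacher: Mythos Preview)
Your proposal has a genuine gap: you conflate the evidence report $E_j$ with the distributional reports $(p_{i,i},p_{i,i+1})$, but consistency with $s$ is defined purely in terms of the latter. When the refuting type of agent $j$ presents $E_j\downarrow s$, this does \emph{not} ``break consistency with $s$'' --- the profile of distributional reports remains consistent with $s$, and the outcome is still $f(s)$. So establishing that the refuter wants to present her evidence does not contradict the hypothesis that the distribution reports are consistent with a refutable lie; that refutation could simply be part of the putative equilibrium. You never identify an agent with a profitable deviation \emph{in her distributional report}, which is what is needed to dislodge consistency with $s$.

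The paper's argument supplies exactly this missing piece, and it relies essentially on Lemma \ref{main}, which your proposal never invokes. If two or more agents can refute $s$, every agent expects refutation with positive probability, so $\tau^1$ is on and every agent presents maximal evidence; Lemma \ref{main} then forces all distributional reports to equal the truth, contradicting consistency with the refutable lie $s$ (since $p_j(s)\neq p_j^\ast$ for any $j$ who can refute $s$). If only agent $i$ can refute $s$, then all agents $j\neq i$ expect $i$'s refutation and present maximal evidence, whence Lemma \ref{main} pins $p_{\cdot,j}=p_j^\ast$ for every $j\neq i$; the only off-truth report about others is $p_{i-1,i}=p_i(s)\neq p_i^\ast$. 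The profitable deviation is then by agent $i-1$, who switches $p_{i-1,i}$ to $p_i^\ast$: this makes the vector $(p_{k,k+1})_k$ consistent with the nonrefutable truth, so she avoids the $\tau^4$ penalty she would otherwise incur with probability at least $\min\{p_i(s^\ast)(E_i):E_i\downarrow s\}$. Inequality (\ref{D}) is calibrated precisely to make this deviation dominate her $\tau^2$ loss --- it governs the trade-off for agent $i-1$, not for the refuter. Your discussion of ``coordination issues'' among potential refuters is also off: refuting does not shield an agent from $\tau^4$ if some \emph{other} agent also refutes, and in any case the lever that breaks the equilibrium is a distributional report, not an evidence report.
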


\begin{proof}
Suppose instead that $m$ is consistent with a refutable lie $s$. If it is
refutable by two or more agents, then $\tau ^{1}$ provides the incentive to
refute the lie, and so all agents expect the lie to be refuted with positive
probability so that they present maximal evidence. In turn, Lemma \ref{main}
implies the profile must be consistent with the truth, a contradiction.

So suppose $s$ can only be refuted by some agent $i$. Then, all agents other
than $i$ expect agent $i$ to refute $s$ with positive probability and
therefore present maximal evidence. Thus, from Lemma \ref{main}, $%
p_{.,j}=p_{j}^{\ast }$ $\forall j\neq i$. Further, when agent $i$ has the
evidence required to refute $s$, she presents it (owing to the incentive in $%
\tau ^{1}$). Thus, agent $i-1$ expects refutation with positive probability
and thus optimally presents $c_{i-1,i}>0$, so that agent $i$ is incentivized
to present maximal evidence (owing to $\tau ^{1}$). But then, Lemma \ref%
{main} yields that $p_{.,i}=p_{i}^{\ast }$ so that $s=s^{\ast }$, which is
not refutable by definition, a contradiction.
\end{proof}

\subsubsection{\label{NOBET}No Bets in Equilibrium}

\begin{claim}
\label{NONRL}In any equilibrium, the distribution reports are not consistent
with a nonrefutable lie.
\end{claim}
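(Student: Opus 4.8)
The plan is to argue by contradiction, using Lemma \ref{SEP_HYP} to manufacture a profitable bet. Suppose some equilibrium $\sigma$ has its distribution reports consistent with a nonrefutable lie. By Claim \ref{CONSISTENCY} these reports are consistent with a single state, which by Claim \ref{NOORL} is not a refutable lie, so it is a nonrefutable lie $s'$; it suffices to treat the case $f(s')\neq f(s^{\ast})$ (when a perfect deception exists, Condition NPD gives $f(s')=f(s^{\ast})$, so the outcome $f(s')$ is still correct). By Condition NPD there is then no perfect deception from $s^{\ast}$ to $s'$, so I fix the agent $\hat{\imath}=\hat{\imath}(s^{\ast},s')$ with $s^{\ast}\not\rightarrow_{\hat{\imath}}s'$ and the vector $b_{\hat{\imath}}(s^{\ast},s')$ of Lemma \ref{SEP_HYP}. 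The fact I would record first is that, at the true state $s^{\ast}$, any (pure or mixed) evidence‑reporting strategy of $\hat{\imath}$ induces a distribution $p^{\mathrm{eq}}_{\hat{\imath}}\in P_{\hat{\imath}}(s^{\ast})$ — an agent can only submit subsets of her realized endowment, so every such strategy is the evidence marginal of a deception, and $P_{\hat{\imath}}(s^{\ast})$ is convex — whence Lemma \ref{SEP_HYP} yields $b_{\hat{\imath}}(s^{\ast},s')\cdot p^{\mathrm{eq}}_{\hat{\imath}}>0$: betting against $\hat{\imath}$'s evidence while naming $s^{\ast}$ wins money in expectation at $s^{\ast}$.

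I would then split on whether some agent places an active bet in $\sigma$. First case: some agent does so with positive probability. Then the ``active bet'' clause of $\tau^{1}$ fires for every agent, so each agent $j$ gains $\varepsilon$ per disclosed article through $\tau^{1}_{j}$; since $E_{j}$ enters neither the consistency condition nor the outcome, and the parts of $\tau^{2},\tau^{3},\tau^{4}$ that touch $j$'s own payoff do not depend on $E_{j}$, the only competing force is an agent who happens to bet against her own evidence through $\tau^{5}_{j}$, and this I would neutralize by scaling every $b_{i}$ to have entries in $(-\tfrac12,\tfrac12)$ — harmless, since rescaling preserves the separation of Lemma \ref{SEP_HYP}, and it makes withholding even a single article cost more ($\varepsilon$) than any swing it could produce in $\tau^{5}_{j}$ ($<\varepsilon$). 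Hence every agent discloses her full endowment, so by Lemma \ref{main} one has $p_{k,i}=p_{i}^{\ast}$ for all $k,i$, i.e. the profile is consistent with $s^{\ast}$. But it is consistent with $s'$, and $p(s^{\ast})\neq p(s')$ (otherwise the identity relabeling would be a perfect deception from $s^{\ast}$ to $s'$), so a profile cannot be consistent with both — a contradiction.

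Second case: no agent ever places an active bet in $\sigma$. Then $\tau^{5}_{j}\equiv 0$, and since there is no inconsistency and no submitted collection can refute the nonrefutable $s'$, also $\tau^{1}_{j}\equiv 0$. I would then let an arbitrary agent, say agent $1$, deviate by reporting $s_{1}=s^{\ast}$ in her bet coordinate and changing nothing else. Consistency with $s'$ is preserved (the bet coordinate does not enter it), so the outcome stays $f(s')$ and $\tau^{2}_{1},\tau^{3}_{1},\tau^{4}_{1}$ are unchanged; $\tau^{1}_{1}$ weakly rises (an active bet now exists, so $\tau^{1}_{1}=\varepsilon|E_{1}|\geq 0$); and $\tau^{5}_{1}$ becomes $\varepsilon\,b_{\hat{\imath}}(s^{\ast},s')(E_{\hat{\imath}})$, whose expectation over $\hat{\imath}$'s fixed strategy is $\varepsilon\,b_{\hat{\imath}}(s^{\ast},s')\cdot p^{\mathrm{eq}}_{\hat{\imath}}>0$. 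So the deviation is strictly profitable, contradicting that $\sigma$ is an equilibrium. Either case yields a contradiction, which proves the claim.

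The step I expect to be the main obstacle is the full‑disclosure argument of the first case: one must check carefully that, once a bet is active, an agent's evidence report affects only her own $\tau^{1}$ among the quantities she controls (and never the outcome), so that maximal disclosure is strictly optimal, the one genuinely delicate subcase being an agent who bets against her own evidence, which the normalization of the bet vectors disposes of. The remaining ingredients are bookkeeping — that a deviation in the bet coordinate is inert for consistency, the outcome, and $\tau^{2},\tau^{3},\tau^{4}$, that an agent's own evidence report enters only her own $\tau^{1}$ (the scoring rule $\tau^{2}$ about $E_{j}$ is run by $j-1$, the refutation fine $\tau^{4}$ triggered by $E_{j}\downarrow s$ is borne by others, and a bet evaluated against $E_{j}$ is typically someone else's $\tau^{5}$), and that the absence of a perfect deception gives $p(s^{\ast})\neq p(s')$.
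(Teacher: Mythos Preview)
Your argument is essentially the paper's, carried out with more care: the paper's proof is the single sentence ``any agent $j\neq\hat\imath$ has the incentive to bet with probability $1$, whereupon all evidence presentation is maximal (due to $\tau^{1}$) and Lemma~\ref{main} presents a contradiction,'' and your two cases (some bet active / no bet active) are exactly the two halves of that sentence unpacked.

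One small slip in your Case~2: you pick ``an arbitrary agent, say agent $1$,'' but then compute the expected gain from $\tau^{5}_{1}$ as an \emph{expectation over $\hat\imath$'s strategy}. This is only valid if the deviating agent is not $\hat\imath$ herself; if $\hat\imath=1$, she conditions on her realized $E_{1}$ and $b_{\hat\imath}(s^{\ast},s')(E_{1})$ may well be negative for that particular endowment, so the deviation need not be profitable for every type. The paper avoids this by explicitly choosing $j\neq\hat\imath$ (possible since $I\geq 2$); you should do the same. Your normalization of the $b_{i}$ vectors to entries in $(-\tfrac12,\tfrac12)$ to guarantee full disclosure in Case~1 even when an agent happens to bet against her own evidence is a legitimate technical refinement that the paper's terse proof does not spell out.
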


\begin{proof}
From Claim \ref{CONSISTENCY}, the agents must all present pure distribution
reports which are consistent with a state $s$. From Claim \ref{NOORL}, $s$
is not refutable by any agent. If $f$ prescribes a different outcome at $s$
than at the truth, then there must be an agent $i$ such that $s^{\ast
}\not\rightarrow _{i}s$, that is, some agent $i$ does not have a perfect
deception from the truth to $s$. Then, any agent $j\neq i$ has the incentive
to bet with probability 1, since $\varepsilon >0$, whereupon all evidence
presentation is maximal (due to $\tau ^{1}$) and Lemma \ref{main} yields a
contradiction since the reported messages must be truthful.
\end{proof}

\subsubsection{\label{IMPL}Implementation}

\begin{claim}
In any equilibrium, the outcome is $f(s^{\ast })$ and there are no transfers.
\end{claim}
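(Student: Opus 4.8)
The plan is to combine the four preceding claims to pin down the equilibrium message profile, and then check each transfer term vanishes on that profile. From Claims \ref{CONSISTENCY}, \ref{NOORL}, and \ref{NONRL}, in any equilibrium $\sigma$ the agents play pure distribution reports consistent with some state $s$, and $s$ is neither a refutable lie nor a non-refutable lie with $f(s)\neq f(s^{\ast})$; hence $f(s)=f(s^{\ast})$ and the outcome function returns $f(s)=f(s^{\ast})$, giving the outcome claim immediately. It remains to show $\tau_{i}(m)=0$ for every $i$ and every $m$ in the support of $\sigma(s^{\ast})$.

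First I would dispose of $\tau^{4}$, $\tau^{2}$, and $\tau^{3}$. Since the profile is consistent with $s$ and $s$ is not refutable (Claim \ref{NOORL}), no agent $j$ can present $E_{j}\downarrow s$ with positive probability by (se1), so $\tau^{4}_{i}=0$ for all $i$. For $\tau^{3}$: consistency with $s$ means $p_{i,i}=p_{i}(s)=p_{i-1,i}$ for every $i$, so the crosscheck never fires. For $\tau^{2}$: consistency gives $p_{i,i+1}=p_{i+1}(s)=p_{i+1,i+1}$, so $Q_{i+1}(p_{i,i+1},E_{i+1})-Q_{i+1}(p_{i+1,i+1},E_{i+1})=0$ pointwise in $E_{i+1}$, hence $\tau^{2}_{i,i+1}=0$.

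Next I would handle $\tau^{1}$ and $\tau^{5}$, which is the part that needs a little care. The message profile is consistent (no inconsistency) and consistent with a non-refutable state $s$ (so the third branch of $\tau^{1}$, $E_{i}\downarrow s$, is ruled out as above). The only remaining trigger for $\tau^{1}$ is an active bet, and $\tau^{5}$ is likewise nonzero only when some agent's state report $s_{i}$ activates a bet against the consensus $s$. So it suffices to argue that in equilibrium no agent places an active bet. Here I would invoke the logic of Claim \ref{NONRL} in reverse: if some agent $i$ were betting with positive probability, then by $\tau^{1}$ every other agent expects an active bet with positive probability and thus presents maximal evidence ($\varepsilon>0$ makes this strictly optimal), whereupon Lemma \ref{main} forces all distribution reports to be truthful, i.e. consistent with $s^{\ast}$. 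But then the bet is against the true state, and by Lemma \ref{SEP_HYP} (applied through $b_{\hat\imath}$) the bettor's expected payoff from $\tau^{5}$ is $b_{\hat\imath}(s^{\ast},s_{i})\cdot p_{\sigma^{\ast}_{\hat\imath},s^{\ast}}<0$ — since $\hat\imath$ is now playing truthfully at $s^{\ast}$ — so betting is strictly unprofitable and not submitting a bet (hence avoiding the $\tau^{1}$ cost $\varepsilon|E_{i}|\ge 0$ as well) is a profitable deviation. Therefore no agent bets, $\tau^{1}_{i}=\tau^{5}_{i}=0$ for all $i$, and the equilibrium is consistent with $s^{\ast}$ itself with zero transfers.

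The main obstacle is this last step: one must be careful that ruling out bets is not circular. In particular the argument needs that once maximal evidence is triggered by the hypothesized bet, Lemma \ref{main} applies to \emph{every} agent (so the consensus state is exactly $s^{\ast}$, not merely some state with the same outcome), and that the separating-hyperplane inequality is strict precisely at the no-deception distribution $p_{\sigma^{\ast}_{\hat\imath},s^{\ast}}$, which is what Lemma \ref{SEP_HYP} delivers. One should also note the degenerate possibility that an agent submits a ``bet'' report $s_{i}$ for which no challenge is defined (e.g. $s^{\ast}\rightarrow_{\hat\imath}s_{i}$ or $s_{i}=s^{\ast}$); in that case $\tau^{5}$ is identically zero by construction and $\tau^{1}$'s ``active bet'' branch is not triggered, so such a report costs nothing and changes nothing, and we may without loss take it to be $s^{\ast}$. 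Assembling these observations gives $g(m)=f(s^{\ast})$ and $\tau(m)=0$ on the support of every equilibrium, completing the sufficiency direction of Theorem \ref{MAIN}.
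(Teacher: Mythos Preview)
Your proposal follows the same route as the paper's proof and handles the outcome and the vanishing of $\tau^{2},\tau^{3},\tau^{4}$ correctly. The one substantive gap is in the no-bet step. You argue that if agent $i$ bets with positive probability, then every \emph{other} agent presents maximal evidence, and then invoke Lemma~\ref{main} to conclude that ``all distribution reports [are] truthful, i.e.\ consistent with $s^{\ast}$.'' But Lemma~\ref{main} applied only to agents $j\neq i$ forces only the reports about those $j$ to be truthful; the reports $p_{i-1,i}$ and $p_{i,i}$ about the bettor herself are not covered, so you have not actually shown the consensus is $s^{\ast}$. You correctly flag exactly this point as an obstacle (``Lemma~\ref{main} applies to \emph{every} agent'') but do not supply the missing ingredient.

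The paper closes the gap with a different and simpler observation that bypasses the need to re-derive the consensus: the bet's expected payoff to agent $i$ depends only on the target agent $\hat\imath$'s evidence report, not on $i$'s own realization $E_{i}$. Hence if betting is profitable for $i$ at any of her evidence realizations it is profitable at all of them, so $i$ bets in every message. Then $\tau^{1}$ is active with probability one for every agent including $i$, all agents present maximal evidence, and the contradiction with Lemma~\ref{SEP_HYP} (a bet loses against a maximally-reporting target) follows directly without another pass through Lemma~\ref{main}.

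A smaller point: for $\tau^{4}$ you cite (se1), but (se1) speaks about probabilities at the state $s$ itself; what you actually need is that no agent at $s^{\ast}$ can present evidence refuting $s$, which is precisely the non-refutability of $s$ at $s^{\ast}$ delivered by Claim~\ref{NOORL}.
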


\begin{proof}
From Claims \ref{CONSISTENCY}-\ref{NONRL}, the report profile is consistent
with the true state (so that $\tau ^{3}$ and $\tau ^{2}$ are inactive).
Therefore, the outcome is $f(s^{\ast })$. Clearly, such a consensus is not
refutable, since any collection available under the truth cannot (by
definition) refute it. Optimally therefore, each agent also predicts no
reufutation in $\tau ^{4}$, so that $\tau ^{4}$ is inactive.

Now, we claim that there are no bets in equilibrium. To see this, notice
that if an agent $i$ bets in equilibrium, then it must be that agents have
not presented all their evidence in $E$, since from Lemma \ref{SEP_HYP}, a
bet yields a loss if placed against a truthful reporting strategy. If some
agent $i$ finds it profitable to place a bet in any of her messages, then
she must find it profitable to bet in all of her messages since mixing is
independent across agents. Then, each agent finds it optimal to present all
their evidence in $E$, a contradiction. Since the message profile is
consistent, and features no bets in equilibrium, $\tau ^{1}$ and $\tau ^{5}$
are inactive as well. Thus, there are no transfers.
\end{proof}

\subsection{\label{PBNI}Pure-Strategy Bayesian Nash Implementation}

In this section, we take a brief digression to discuss pure-strategy
implementation. While NPD was shown to be necessary for mixed-strategy
implementation, a weaker version, called No Pure-Perfect Deceptions (NPPD)
suffices for pure-strategy implementation. It is essentially identical to
NPD except for the additional requirement that any deception $\alpha $ be
degenerate, since playing a non-degenerate deception requires mixing by the
agents. A pure-perfect deception is defined below.

\begin{definition}
A deception $\alpha =(\alpha _{i})_{i\in \mathcal{I}}$ is pure-perfect for
state $s^{\prime }$ at state $s$ if it is a perfect deception for $s$ at $%
s^{\prime }$ and $\alpha $ is degenerate.
\end{definition}

We now define the No Pure-perfect deceptions condition.

\begin{definition}
(\textbf{Condition NPPD - No Pure-Perfect Deceptions}) We say that a social
choice function $f$ satisfies \textbf{NPPD }whenever for two states $s$ and $%
s^{\prime }$, if at $s$, state $s^{\prime }$ is nonrefutable, and\ there
exists a pure-perfect deception for $s^{\prime }$, then $f(s)=f(s^{\prime })$%
.
\end{definition}

We now state the pure-strategy implementation result.

\begin{theorem}
\label{PS_SM}A social choice function is implementable in pure-strategy BNE
if and only if it satisfies NPPD.
\end{theorem}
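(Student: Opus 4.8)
The plan is to mirror the architecture used for Theorem~\ref{MAIN}, replacing the mixed-strategy arguments with their pure-strategy analogues. For necessity, I would argue exactly as in the necessity proof of Theorem~\ref{MAIN}: suppose a finite mechanism $\mathcal{M}=(M,g,\tau)$ implements $f$ in pure-strategy BNE, and suppose $s'$ is nonrefutable at $s$ with a pure-perfect deception $\alpha$ for $s'$ at $s$. Take any pure-strategy BNE $\sigma$ of $G(\mathcal{M},u,s',p)$; because $\alpha$ is degenerate, the composition $\sigma\circ\alpha$ is again a \emph{pure} strategy profile, so it is a legitimate candidate pure-strategy equilibrium at $s$. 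The nonrefutability claim $(\ref{*})$ still holds, and the same best-response argument shows $\sigma\circ\alpha$ is a pure-strategy BNE at $s$ with outcome $f(s')$; since $\sigma$ also has zero transfers at $s'$, so does $\sigma\circ\alpha$ at $s$. Implementation then forces $f(s)=f(s')$, which is exactly NPPD.

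For sufficiency, I would reuse the mechanism of Section~\ref{MECHANISM} essentially verbatim, since a mechanism that implements in mixed-strategy BNE a fortiori implements in pure-strategy BNE — but that only covers SCFs satisfying NPD, which is strictly stronger than NPPD, so the mechanism must be weakened. The natural fix is to drop (or neutralize) the betting apparatus $\tau^5$ and the separating-hyperplane Lemma~\ref{SEP_HYP}, which were introduced precisely to defeat non-refutable lies admitting a perfect \emph{mixed} deception but no pure one; under NPPD we no longer need to defeat such lies. So I would keep $\tau^1,\tau^2,\tau^3,\tau^4$ (appropriately rescaled) and run the same four-step argument restricted to pure strategies. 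Lemma~\ref{main} goes through unchanged (it never used mixing by $i$). The consistency step (Claim~\ref{CONSISTENCY}) is actually \emph{easier}: since only pure strategies are admissible, there is no mixing to rule out, and the $\tau^1/\tau^3$ argument immediately forces the pure profile to be consistent with some state $s$. The refutable-lie step (Claim~\ref{NOORL}) carries over verbatim. That leaves a consensus on a nonrefutable lie $s$ with $f(s)\neq f(s^*)$: I must now show the agents could convert this equilibrium into a pure-perfect deception from $s^*$ to $s$, contradicting NPPD.

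The main obstacle is precisely this last step. In the mixed case, the contradiction came from $\tau^5$ and the whistle-blowing bet; without it, I have to extract the contradiction directly from the equilibrium structure. The argument I would attempt: in the hypothesized equilibrium, every agent $i$ plays a pure strategy $\sigma_i(s^*,\cdot)$ whose distributional reports are consistent with $s$ and whose evidence reports are subsets of the endowed collections; reading off the induced map from true types $(s^*,E_i)$ to reported types gives a profile of deceptions $\alpha$. Because the reports are deterministically consistent with $s$ and the outcome is $f(s)$ with zero transfers, and because (by nonrefutability and Lemma~\ref{main}-style reasoning on the evidence channel) the induced evidence report distribution cannot be distinguished from truthful play at $s$, this $\alpha$ is a pure deception that is perfect for $s$ at $s^*$ — so NPPD yields $f(s^*)=f(s)$, the desired contradiction. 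The delicate points are (i) verifying that the evidence-report distribution induced by $\sigma$ genuinely matches $p_i(s)$ rather than merely being consistent-looking — this is where I would lean on $\tau^2$ plus Lemma~\ref{main} to pin down that any deviation in evidence presentation is detected and penalized, forcing the induced distribution to coincide with $p_i(s)$; and (ii) checking that the degeneracy of $\alpha$ is not lost — which holds because we restricted to pure-strategy equilibria. I expect (i) to require the most care, since it is the pure-strategy reincarnation of the "mimic the distribution" machinery, and I would want to confirm that the weakened mechanism still has enough instruments (specifically $\tau^2$ and $\tau^3$) to force truthful evidence distributions on the reported state.
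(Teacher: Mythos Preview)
Your necessity argument is correct and matches the paper. Your sufficiency argument, however, has a genuine gap at exactly the point you flag as delicate, and the fix you propose does not work.

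The difficulty is your step (i). In a pure equilibrium whose distributional reports are consistent with a nonrefutable lie $s$, there is no incentive for maximal evidence presentation: $\tau^{1}$ is off (no inconsistency, no bet, no refutation), so Lemma~\ref{main} does not apply. The scoring rule $\tau^{2}$ only forces agent $i-1$ to report the element of $P_{i}$ that best predicts agent $i$'s \emph{actual} evidence-report distribution; since $P_{i}$ is the coarse finite set $\{p_{i}(s):s\in S\}$, the report $p_{i}(s)$ can be the optimal predictor even when the induced distribution differs from $p_{i}(s)$. Hence consistency with $s$ does not pin down the induced evidence distribution, and the map you read off from the equilibrium need not be a \emph{perfect} pure deception. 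NPPD therefore gives you nothing, and the contradiction collapses.

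The paper does not drop the betting apparatus; it modifies it. The key observation is that the set of \emph{pure} deceptions is finite (each is a map $S\times\mathbb{E}\to S\times\mathbb{E}$), so they can be enumerated $1,\ldots,Z$. The message space is augmented with an identifier $z\in\{1,\ldots,Z\}$, and a challenger now reports not merely the true state but the specific pure deception she believes is being played. Given $z$, the designer can locate an agent $i$ and collections $E_{i}^{\prime},E_{i}^{\prime\prime}$ with $p_{\alpha_{i}(s)}(E_{i}^{\prime})<p_{i}(s^{\prime})(E_{i}^{\prime})$ and $p_{\alpha_{i}(s)}(E_{i}^{\prime\prime})>p_{i}(s^{\prime})(E_{i}^{\prime\prime})$, and construct numbers $\gamma<0$, $\delta>0$ yielding a bet that is profitable against $\alpha$ but loses against truthful play at $s^{\prime}$. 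This reinstates the whistle-blowing channel: under NPPD, any pure equilibrium on a nonrefutable lie $s^{\prime}$ with $f(s^{\prime})\neq f(s^{\ast})$ corresponds to some non-perfect pure deception, so a profitable bet exists, $\tau^{1}$ switches on, and Lemma~\ref{main} delivers the contradiction. The finiteness of pure deceptions is precisely what makes the weaker condition NPPD sufficient here while NPD is needed for mixed strategies.
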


The proof of this result is derived as a simplification of our
mixed-strategy implementation result, and is presented in Appendix \ref%
{PS_SM_NECC&SUFF}.

No Pure-perfect deceptions is shown to be strictly weaker than NPD and
strictly stronger than Stochastic Measurability in Appendix \ref%
{NOPUREDECEPTION}. Therefore, in our setup, mixed-strategy implementation is
a strictly harder problem than pure-strategy implementation. We note that
this distinguishes this setup from the complete-information setup in \cite%
{BL2012} where mixed-strategy Nash Implementation does not require a
stronger condition on the social choice function. However, it is consistent
with the rest of the Bayesian implementation literature, in that
mixed-strategy Nash implementation in \cite{SV2010} requires Mixed Bayesian
Monotonicity, which is stronger than Bayesian Monotonicity which is required
for pure-strategy implementation (\cite{J1991}).

\section{\label{NCKS}Uncertain State Types}

\subsection{Setup}

In this section, we study a more general model than the one considered
above. Agents have a type $(s_{i},E_{i})$, among which the component $%
s_{i}\in S_{i}$ (hereafter the state type) represents the private
information of the agent not associated with hard evidence - that is, the
agent can arbitrarily misreport $s_{i}$ to any mechanism. $S_{i}$ is assumed
to be finite. Denote by $s_{-i}$ the state type profile of the agents other
than agent $i$, and let $S_{-i}=\Pi _{j\neq i}S_{j}$. The set of all states $%
S$ is given by $\Pi _{i\in I}S_{i}$, and a state $s$ is used to denote the
state type profile $(s_{i})_{i\in \mathcal{I}}$.

The evidence realization $E_{i}$ is chosen by nature from the set of all
possible evidence realizations for agent $i$, $\mathbb{E}_{i}$, and each $%
\mathbb{E}_{i}$ is assumed to be finite. Denote $\mathbb{E=}\Pi _{i\in 
\mathcal{I}}\mathbb{E}_{i}$. First, nature chooses a state $s=(s_{i})_{i\in 
\mathcal{I}}$ according to a prior $p^{s}\in \Delta S$.

We say that $s_{i}^{\prime }$ can be reached in $m$ steps from $s_{i}$ if
there is a sequence $\left\{ s_{j_{n}}^{n}\right\} _{n=1}^{m}$ such that $%
s_{j_{1}}^{1}=s_{i}$, $s_{j_{m}}^{m}=s_{i}^{\prime }$ and for each $n$, $%
j_{n}\in \mathcal{I}$, and for $n\geq 2$, agent $j_{n-1}$ of type $%
s_{j_{n-1}}$ believes that agent $j_{n}$ can be of type $s_{j_{n}}$ with
positive probability. Define $\bar{S}_{i}(s)=\cup _{m=1}^{\infty }\left\{
s_{i}^{\prime }:s_{i}^{\prime }\text{ can be reached in }m\text{ steps from }%
s_{i}\right\} $. Since the set of state types $S_{i}$ is assumed to be
finite, there is a finite number of steps $k$ such that either it is not
possible to reach a state type $s_{i}$ from another state type $%
s_{i}^{\prime }$, or it is possible to do so in less than $k$ steps. Let $%
\bar{S}(s)=\{\tilde{s}\in \Pi _{i\in \mathcal{I}}\bar{S}_{i}(s_{i}):p^{s}(%
\tilde{s})>0\}$. We say that a subset $\tilde{S}\subseteq S$ is \emph{belief}%
-\emph{closed} if for any $s\in \tilde{S}$, $\bar{S}(s)=\tilde{S}$.

After nature chooses $s$, evidence is realized according to a prior $%
p^{e}:S\rightarrow \Pi _{i\in \mathcal{I}}\Delta \mathbb{E}_{i}$. We denote
by $p_{i}^{e}$ the evidence distribution of agent $i$. The domain of the SCF 
$f$ is given by $S=\Pi _{i\in \mathcal{I}}S_{i}$, so that $f:S\rightarrow A$%
. The critical difference between this and the previous setup is that the
agents only know their own realization of $s_{i}$, but are unfamiliar with
the others' realizations, i.e. $s_{j}$. As before, conditional on the
realized $s$, any agent's evidence distribution is independent of the
evidence distribution of any other agent.

\subsection{Generalized NPD}

For any $s_{i}\in S_{i}$, define $p_{j,s_{i}}^{e}(E_{j})=\sum_{s_{-i}\in
S_{-i}}p^{s}(s_{i},s_{-i})p_{j}^{e}(s_{i},s_{-i})(E_{j})$. That is, $%
p_{j,s_{i}}^{e}(E_{j})$ is the probability with which agent $i$ expects
agent $j$ to have the collection $E_{j}$ when her own state type is $s_{i}$.
We denote by $p_{j,s_{i}}^{e}$the profile of probabilities $%
(p_{j,s_{i}}^{e}(E_{j}))_{E_{j}\in \mathbb{E}_{j}}$, which represents agent $%
i$'s beliefs about the possible collections agent $j$ is endowed with when
agent $i\,$'s state type is $s_{i}$. Two states $s$ and $s^{\prime }$ are
said to be equivalent ($s\sim s^{\prime }$) if $p_{j,s_{i}}^{e}=p_{j,s_{i}^{%
\prime }}^{e}$ $\forall i,j$. We identify $S$ with its quotient set $%
S\backslash \sim $ induced by the equivalence relation $\sim $ where each
point corresponds to an equivalence class. We will show that this is without
loss of generality later.

We impose the \emph{information smallness} assumption from \cite{MP2002} in
the following qualified sense - were the state types of agents other than
agent $i$ to be $s_{-i}$, there is a small $\varepsilon _{I}>0$ and a unique
state type $s_{i}$ of agent $i$ such that $p^{s}(s_{i},s_{-i})>1-\varepsilon
_{I}$. \ Further, given any $s_{i}$, there is a unique profile of opposing
state types $s_{-i}$ such that agent $i$'s posterior belief about the state
type profile $p_{s_{i}}^{s}$ satisfies $p_{s_{i}}^{s}(s_{i},s_{-i})>1-%
\varepsilon _{I}$. When these conditions are met, we say that each agent has
an information size of $\varepsilon _{I}$. The results presented for the
case where agents are informationally small will be in the context of large
economies (formally characterized by $I\rightarrow \infty $), wherein we
will further assume that $\varepsilon _{I}\rightarrow 0$ as $I\rightarrow
\infty $. In the spirit of \cite{MP2002}, this is meant to reflect the
notion that as the number of agents increases, each agent's influence on the
economy becomes progressively smaller. The main motivation behind this
assumption is the case where there is \emph{negligible aggregate
uncertainty, }i.e. the state of nature can be inferred with high precision
from all agents' signals.

Note that in \cite{MP2002}, there is an \textquotedblleft
estimation\textquotedblright\ problem where conditionally i.i.d. signals
received by agents are being used to estimate the state of the world. In
their leading example for instance, as the number of agents increases, each
agent contributes an increasingly smaller amount of information to the
estimate of the state. An analogous interpretation in our setting is that
the \textquotedblleft signals\textquotedblright\ $s_{i}$ are being used to
estimate the evidence distribution $p^{e}$, which directly corresponds to
the state ($S$) of the world. In this view, the notion that each agent's
influence on the posterior estimate is small translates to the requirement
that $\varepsilon _{I}\rightarrow 0$ as $I\rightarrow \infty $.

A strengthening of this assumption is \emph{nonexclusive information} (\cite%
{PS1986}, \cite{V1999}) which is obtained by setting $\varepsilon _{I}$ to $%
0 $ uniformly, so that aggregating the information from all but one agent
identifies the state type of the remaining agent exactly. Note that with
only two agents, nonexclusive information is identical to complete
information, and with three or more agents, it is weaker than complete
information.

In this setup, the notion of refutation is defined as follows - $%
E_{i}\downarrow s$ if $p_{i}^{e}(s)(E_{i}^{\prime })=0$ for any $%
E_{i}^{\prime }\supseteq E_{i}$. That is, whenever the state is $s$, agent $%
i $ is never endowed with $E_{i}$ or a superset thereof.\footnote{%
Note that this leverages our assumption that $p^{e}$ yields uncorrelated
evidence draws. This is because correlation among evidence draws might
require agents to submit evidence \emph{profiles} rather than evidence
collections to refute a state. In the design of an implementing mechanism,
this leads to a coordination issue since an agent may not have the incentive
to submit their component of a refuting profile of evidence if others have
not already submitted their components of this profile.} Further, $s^{\prime
}$ is \emph{nonrefutable} at $s$ if any $E_{i}$ which some agent $i$ may be
endowed with when the state is $s$ does not refute $s^{\prime }$.

As before, a deception for an agent $i$ is a function $\alpha
_{i}:S_{i}\times \mathbb{E}_{i}\rightarrow \Delta (S_{i}\times \mathbb{E}_{i}%
\mathbb{)}$. We denote the state component of the deception (which is pure),
i.e. $\alpha _{i}(s_{i},E_{i})(s_{i})$ by $\alpha _{i}^{s}:S_{i}\rightarrow
S_{i}$, and the evidence component $\alpha _{i}(s_{i},E_{i})(E_{i})$ (which
may involve mixing) by $\alpha _{i}^{e}:S_{i}\times \mathbb{E}%
_{i}\rightarrow \Delta \mathbb{E}_{i}$.\footnote{%
The implementing mechanism we will create will prevent agents from mixing in
the state dimension in any equilibrium.} That is, the state component of the
deception depends only on the state component of the type playing the
deception. Deceptions must satisfy evidence feasibility, i.e. if $%
E_{i}^{\prime }\in \supp\alpha _{i}^{e}(s_{i},E_{i})$, then $E_{i}^{\prime
}\subseteq E_{i}$. A deception can also be interpreted as a strategy in a
direct revelation mechanism, which satisfies: (i) the evidence feasibility
condition, i.e. when playing a deception, an agent cannot present articles
of evidence she does not possess; and (ii) no mixing in the state dimension.
Denote by $\sigma ^{\ast }$ the truthtelling strategy in a direct mechanism.
Now, we define a perfect deception.

\begin{definition}
(\textbf{Perfect Deception}) Let $\bar{S}(s)$ be the smallest belief-closed
set containing $s$. A deception $\alpha $ is perfect for $s^{\prime }$ at $s$
if (i) for each agent $i$, $\alpha _{i}^{s}(s_{i})=s_{i}^{\prime }$; and
(ii) for each state $\bar{s}\in \bar{S}(s)$ and for each type $\bar{s}%
_{i}\in \bar{S}_{i}(s)$ of each agent $i$, the belief induced under $\alpha $
on the messages of other agents in a direct mechanism is the same as that
induced by $\sigma ^{\ast }$ if her true state type were $\alpha _{i}^{s}(%
\bar{s}_{i})$, i.e. $\forall i,~\bar{s}_{i}\in \bar{S}_{i}(s)$,%
\begin{eqnarray*}
&&\sum_{\tilde{s}_{-i}\times \tilde{E}_{-i}}p^{s}(\bar{s}_{i},\tilde{s}%
_{-i})p_{-i}^{e}(\bar{s}_{i},\tilde{s}_{-i})(\tilde{E}_{-i})\alpha _{-i}(%
\tilde{s}_{-i},\tilde{E}_{-i})(\hat{s}_{-i},\hat{E}_{-i}) \\
&=&p^{s}(\alpha _{i}^{s}(\bar{s}_{i}),\hat{s}_{-i})p_{j}^{e}(\alpha _{i}^{s}(%
\bar{s}_{i}),\hat{s}_{-i})(\hat{E}_{-i})\text{ }\forall (\hat{s}_{-i},\hat{E}%
_{-i})\in S_{-i}\times \mathbb{E}_{-i}\text{.}
\end{eqnarray*}
\end{definition}

On the RHS, $\sigma ^{\ast }$ is used implicitly. Indeed, state type $\alpha
_{i}^{s}(\bar{s}_{i},\sim )$ assumes that each opposing type $(\hat{s}_{-i},%
\hat{E}_{-i})$ reports their type truthfully, yielding the expression on the
RHS.

We now spell out the key difference between the current setup and the base
setup studied previously. In the base setup, given the true state $s^{\ast }$%
, every opposing type of other agents has the same belief over the evidence
endowment of an agent $i$, whereas in the current setup, each opposing type
may have different beliefs. For a deception to be perfect, it must
adequately deceive each possible opposing type. This, in turn, requires us
to consider the smallest belief-closed set containing the state $s$. We can
now state the implementing condition.

\begin{definition}
(\textbf{Condition GNPD - Generalized No Perfect Deceptions}) SCF $f$
satisfies Generalized No Perfect Deceptions (GNPD) if $f(s)=f(s^{\prime })$
for any pair of states $s=(s_{i})_{i\in \mathcal{I}}$ and $s^{\prime
}=(s_{i}^{\prime })_{i\in \mathcal{I}}$ such that (i) there is a perfect
deception $\alpha $ for $s^{\prime }$ at $s$; and (ii) if $\bar{S}(s)$ is
the smallest belief-closed set containing $s$, then $\alpha ^{s}(\bar{s})$
is nonrefutable at $\bar{s}$ for any $\bar{s}\in \bar{S}(s)$.
\end{definition}

Notice that to be perfect, a deception $\alpha $ must map the entire
relevant subset of the state space (in this case $\bar{S}(s)$) to another
subset. However, we single out a pair of states $s$ and $s^{\prime }$ so as
to define the restriction on the SCF $f$.

The mechanism we use for establishing sufficiency of GNPD for implementation
will extract up to $k$ orders of beliefs from each agent, and we remind the
reader that $k$ is the minimum number of steps in which it is possible to
reach any state type $s_{i}$ from any other state type $s_{i}^{\prime }$
unless it is not possible to do so in any number of steps.\footnote{%
This is for simplicity. In practice, it may be possible to reduce the
required level $k$ by focusing only on state type pairs which also satisfy
the other requirements of the GNPD condition.}

Notice that if $s\sim s^{\prime }$, then by GPND, $f(s)=f(s^{\prime })$ so
that our decision to use the quotient set $S\backslash \sim $ instead of $S$
is without loss of generality.

If a perfect deception does not exist at $s$ for $s^{\prime }$ then for any
deception $\alpha $ there is a type $\tilde{s}_{i}$ of some agent $i$ which
can be reached in $k$ or fewer steps from $s_{j}$ for some $j\in \mathcal{I}$
such that $\tilde{s}_{i}$ does not have the same beliefs under $\alpha $ as $%
\alpha _{i}^{s}(\tilde{s}_{i})$ would under truthtelling. In this case, we
write $\tilde{s}_{i}\not\rightarrow \alpha _{i}^{s}(\tilde{s}_{i})$.

As in the model we studied earlier, the designer creates mechanisms $%
\mathcal{M}$ which are represented by tuples $(M,g,\tau )$, where $M$ is the
message space, $g$ is the outcome function, and $\tau $ represents any
transfers. The \emph{expected }transfer in a mechanism is defined as follows.

\begin{definition}
Given a mechanism $\mathcal{M}=(M,g,\tau )$, the expected transfer $E[\tau ]$
is given by $\sum_{s\in S}p^{s}(s)\sum_{E\in \mathbb{E}}p^{e}(s)(E)\sum_{m%
\in M}\sigma (s,E)(m)\tau (m)$.
\end{definition}

That is, $E[\tau ]$ is the ex-ante expected value of the (vector of)
transfers. There are two possible notions of implementation used in this
section, which we will define below.

\begin{definition}
A mechanism exactly implements an SCF $f$ in Bayesian Nash equilibrium if
for any set of bounded utility functions, every equilibrium of the game
induced by the mechanism yields the socially desirable outcome according to $%
f$ and yields zero transfers alongside.
\end{definition}

\begin{definition}
A mechanism implements an SCF $f$ in a large economy with vanishing expected
transfers if for any set of bounded utility functions, every equilibrium of
the game induced by the mechanism yields the socially desirable outcome
according to $f$ and $\lim_{I\rightarrow \infty }E[\tau ]=0$.
\end{definition}

Now, we state the main result of this section.

\begin{theorem}
\label{THM_GTS}Consider an economy with $I\geq 2$.

\begin{itemize}
\item When agents are informationally small with information size $%
\varepsilon _{I}$, an SCF is implementable in a large economy (as $%
I\rightarrow \infty $) with vanishing expected transfers if it satisfies
GNPD;

\item When information is nonexclusive, an SCF is exactly implementable in
BNE with uncertain state types if and only if it satisfies GNPD.
\end{itemize}
\end{theorem}

The proof of Theorem \ref{THM_GTS} is presented in Appendix \ref{app:UST}.
The mechanism proposed is also shown to be fully continuous, which shows
that it does not share the structure of integer or modulo games. In the
following section, we provide a sketch of the proof alongside an
illustrative figure to aid understanding.

\subsection{\label{PSUST}Proof Sketch}

Necessity follows from ideas similar to those in the base model. If a
deception as required by the GNPD\ condition exists, then under this
deception in state $s$, each type of each agent has the same beliefs and
action set available to it as it does under truthtelling at $s^{\prime }$.
Consequently, for any agent $i$ to play as if their state type were $%
s_{i}^{\prime }$ remains optimal, so that the outcome $f(s^{\prime })$ may
arise in equilibrium.

For sufficiency, while following Theorem \ref{MAIN}'s structure, uncertainty
in state types presents novel challenges. First, a key issue emerges in the
use of scoring rules and crosschecks in realigning towards the truth. In the
base model, when an agent mixes, others expect inconsistency and present
maximal evidence, enabling truthful predictions via scoring rules ($\tau
^{2} $) and deterring mixing through crosschecks ($\tau ^{3}$). However,
with uncertain types, when type $s_{i}$ mixes, while direct opponents $s_{j}$
might present maximal evidence (expecting inconsistency due to $s_{i}$),
their opponents $s_{k}$ may not, as they need not expect $s_{i}$. This
breaks the chain of truthful predictions needed for effective crosschecks --
a problem absent in the base model where mixing by \textquotedblleft state
type\textquotedblright\ $s^{\ast }$ guarantees all agents expect
inconsistency with probability 1 since $s^{\ast }$ is common across agents.

\begin{figure}[h!]
 \centering
 \includegraphics[width=\linewidth]{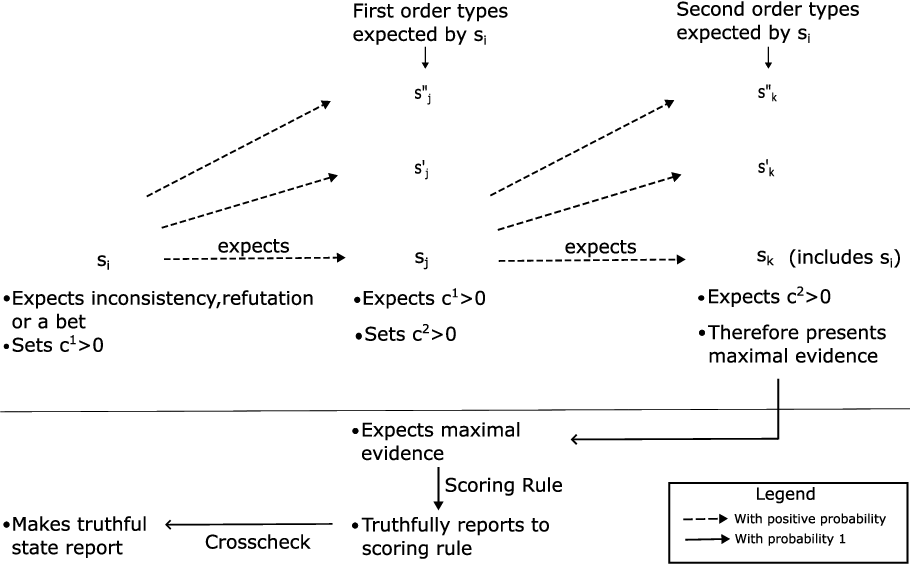}
 \caption{A graphical depiction of the mechanism from the perspective of $s_{i}$.}
 \label{MECHDES}
 \end{figure}

In this extended model, we must ensure that all types in the belief chain ($%
s_{k}$ expected by $s_{j}$ expected by $s_{i}$) present maximal evidence,
enabling truthful scoring rule reports from $s_{j}$ that ultimately
discipline $s_{i}$'s outcome-determining report.\footnote{%
A subtle detail is that $s_{j}$'s reports to the scoring rule may not be the
exact truth, but rather some state type which induces the same distribution
on opposing evidence as does the truth. Inferring the true profile of state
types from the second report also requires $s_{i}$ to reveal her own type
truthfully (in the distribution sense) in the second report. This occurs
because when $s_{i}$ sets $c_{i}^{1}>0$, the types $s_{j}$ she expects also
expect her and thus present maximal evidence, so that she predicts the
maximal evidence of these opposing types. These details are formally
outlined in Lemmas 4 and 5 in Appendix \ref{app:UST}.} This is achieved by
incorporating first and second order beliefs about inconsistency into the
message space via the \emph{reflection transfers} $c^{1}$ and $c^{2}$.%
\footnote{%
For simplicity of illustration, we largely focus on the first two steps
here, but the idea extends to third and higher order beliefs up to an order $%
k$, which is discussed above. Since $S$ is finite, $k$ is bounded.} These
reflection transfers\ about inconsistency enable us to incentivize evidence
presentation at multiple belief orders.\footnote{%
Recall that a similar technique was used in the proof of Theorem \ref{MAIN}
to incentivize an agent to \emph{always }present evidence even when they
perform a refutation with a small probability. This was achieved by
reflecting this action on the next agent via a similar reflection transfer.}
When $c^{1}$ or $c^{2}$ is positive, maximal evidence submission is
guaranteed, helping realign reports toward truth as in the base mechanism.
For clarity, this idea is also depicted graphically in Figure \ref{MECHDES},
from the perspective of type $s_{i}$ of agent $i$. The figure illustrates
the steps by which the expectation of inconsistency, a refutation or a bet
leads $s_{i}$ to reveal its state type truthfully.

Second, in this setting, perfect deceptions have an order component -- a
deception might work up to order $l$ but fail at the $k^{th}$ order of
beliefs where some type's true beliefs aren't adequately mimicked. To ensure
truthful realignment, we extract beliefs up to order $k$ using reflection
transfers $c^{3}$, $c^{4}$, ..., $c^{k}$, conditioning evidence elicitation
on positive $c^{k}$. When a type $s_{i}$ at the $k^{th}$ order isn't
perfectly deceived, it triggers a cascade: its bet causes types expecting it
to set $c^{1}>0$, types (in turn) expecting these types to set $c^{2}>0$,
and so on. This chain ensures maximal evidence submission from all types,
enabling truthful realignment. This novel approach of \textquotedblleft
carrying higher order beliefs down to the message space\textquotedblright\
using reflection transfers addresses the challenge of multi-order deceptions.

Third, since agents do not know each other's state types, we are faced with
the question of how to \textquotedblleft cancel out\textquotedblright\ the
scoring rule transfer ($\tau ^{2}$ in the base mechanism). Here, the
assumptions on information (smallness or nonexclusiveness) allow us to
(approximately, in the case of information smallness) infer the type of an
agent from the reports by other agents, which allows for the cancellation
even though an exact match might not be possible on the basis of only one
other agent's knowledge.\footnote{%
In the case of information smallness, the cancellation happens with a large
probability, which is why only the \emph{expected} transfer is zero under
information smallness.} Unlike the base model though, this requires us to
ensure that at least $I-1\,$agents are reporting their state types
truthfully. We note here that the method to cancel out the scoring rules in
equilibrium using a technique similar to that used in the proof of Theorem %
\ref{MAIN} is also used here to avoid transfers on equilibrium. This
provides a solution to the issue described above.

\section{\label{IST}Rationalizable Implementation in a general type space}

In this section, we study a more general model than the one considered
above. We allow agents to have a type $t_{i}\in T_{i}$ (where $T_{i}$ is
finite) and allow agents' types to be correlated. A type profile $t$
determines agents' preferences over the outcomes in a set $A$, so that $\bar{%
u}_{i}(a,t)\,$\ is agent $i$'s bounded utility when the outcome is $a$ and
the type profile is given by $t$. Further, $u_{i}(a,\bar{\tau}_{i},t)=$ $%
\bar{u}_{i}(a,t)\,+\bar{\tau}_{i}$ is the quasilinear extension of $\bar{u}%
_{i}$. The type profile $t$ also determines agents' evidence, which is
represented by $\hat{E}_{i}(t_{i})\in \mathbb{E}_{i}$. States of the world
are represented by type profiles $t=(t_{i})_{i\in \mathcal{I}}$. Each agent
has a belief $q_{i}:T_{i}\rightarrow \Delta (T_{-i})$ where $T_{-i}=\Pi
_{j\neq i}T_{j}$. Such a model is represented by $\mathcal{T=}(T_{i},\hat{E}%
_{i},q_{i})_{_{i\in \mathcal{I}}}$. A social choice function maps a state to
the desirable outcome in that state, so that $f:T\rightarrow A$, where $%
T=\Pi _{i}T_{i}$. Once again, we are interested in implementing $f$ without
relying on preference variation, so that we consider constant preferences as
a possibility.

In this section, we will adapt the solution concept of Interim Correlated
Rationalizability from \cite{DFM2007} to our setting. Consider a mechanism $%
\mathcal{M}=(M,g,\tau )$, which induces a static Bayesian game denoted by $%
G\left( \mathcal{M},\mathcal{T},u\right) $. Define the set of messages
feasible for an agent $i$ of type $t_{i}$ by $M_{i,t_{i}}$. Denote by $%
R_{i}^{G}\left( t_{i}\right) $ the set of rationalizable messages for agent $%
i$ of type $t_{i}$. Then, we have $R_{i,0}^{G}\left( t_{i}\right)
=M_{i,t_{i}}$; 
\begin{equation*}
R_{i,k+1}^{G}\left( t_{i}\right) =\left\{ 
\begin{array}{l|l}
m_{i}\in M_{i,t_{i}} & 
\begin{array}{l}
\text{ there exists }\pi _{i}\in \Delta \left( T_{-i}\times M_{-i}\right) 
\text{ such that } \\ 
(1)\pi _{i}\left( t_{-i},m_{-i}\right) >0\Rightarrow m_{-i}\in \Pi _{j\neq
i}R_{j,k}^{G}\left( t_{j}\right) \\ 
(2)m_{i}\in \underset{m_{i}^{\prime }\in M_{i,t_{i}}}{\arg \max }%
\sum_{E_{-i},m_{-i}}u_{i}(g\left( m_{i}^{\prime },m_{-i}\right) ,\tau
_{i}\left( m_{i}^{\prime },m_{-i}\right) ,t) \\ 
(3)\sum_{m_{-i}}\pi _{i}\left( t_{-i},m_{-i}\right) =q_{i}(t_{i})(t_{-i})%
\end{array}%
\end{array}%
\right\} \text{;}
\end{equation*}%
and $R_{i}^{G}\left( t_{i}\right) =\cap _{k=1}^{\infty }R_{i,k}^{G}\left(
t_{i}\right) $. Then, denote by $R^{G}\left( t\right) =\Pi _{i\in
I}R_{i}^{G}\left( t_{i}\right) $ the profile of all rationalizable messages.

With this solution concept, we formalize the implementation notion below.

\begin{definition}
A social choice function $f$ is rationalizably implementable with
arbitrarily small transfers if for any $\varepsilon >0$, there is a
mechanism $\mathcal{M}=(M,g,\tau )$ such that for any profile of bounded
utility functions $\bar{u}=\left( \bar{u}_{i}\right) _{i\in \mathcal{I}}$,
any type profile $t$, and any profile of rationalizable messages $m\in
R^{G}\left( t\right) $ of the game $G\left( \mathcal{M},\mathcal{T},u\right) 
$, we have $g(m)=f(t)$. Further, $|\tau _{i}(m)|\leq \varepsilon $ for every
message profile $m$.
\end{definition}

That is, given an arbitrarily small (strictly) positive number $\varepsilon $%
, it should be possible to obtain the socially optimal outcome at any
profile of rationalizable messages, and limit the transfers for any message
profile (rationalizable or otherwise) to being no higher than $\varepsilon $%
. This represents a relaxation of the prior requirement that the transfers
should be zero in equilibrium, but yields a relaxation of the solution
concept to rationalizability, allows correlation among types, and (we will
show) yields a weaker implementing condition than either NPD or NPPD.

We now describe the agent's higher order beliefs. Since we wish to implement
without relying on preference variation, we restrict agents to form beliefs
only on evidence to obtain a necessary condition which is strong enough to
allow implementation when preferences are constant. Note that the
implementing mechanism we will present later will allow for preference
variation, but will not rely on it for implementation. To this end, define $%
Z_{i}^{0}=\mathbb{E}_{-i}$, and for $k\geq 1$, define $Z_{i}^{k}=Z_{i}^{k-1}%
\times \Delta (Z_{-i}^{k-1})$. For each $i$, let $T_{i}^{\ast }$ denote the
set of $i$'s collectively coherent hierarchies (\cite{MZ85}). For each $%
t_{i}\in T_{i}$, let $\hat{\pi}_{i}^{0}(t_{i})=\hat{E}_{i}(t_{i})$;
construct mappings $\hat{\pi}_{i}^{k}:T_{i}\rightarrow \Delta (Z_{i}^{k-1})$
recursively for all $i\in \mathcal{I}$ and $k\geq 1$, such that $\hat{\pi}%
_{i}^{1}$ is the push-forward of $q_{i}(t_{i})$ given by the map from $%
T_{-i} $ to $Z_{-i}^{0}$, such that

\begin{equation*}
t_{-i}\longmapsto \hat{\pi}_{-i}^{0}(t_{-i})\text{,}
\end{equation*}

and $\hat{\pi}_{i}^{k}(t_{i})$ is the push-forward of $q_{i}(t_{i})$ given
by the map from $T_{-i}$ to $Z_{-i}^{k-1}$ such that

\begin{equation*}
t_{-i}\longmapsto (\hat{\pi}_{-i}^{0}(t_{-i}),\hat{\pi}_{-i}^{1}(t_{-i}),...,%
\hat{\pi}_{-i}^{k-1}(t_{-i}))\text{,}
\end{equation*}

where for any $k=0,1,2,..$, $\hat{\pi}_{-i}^{k}(t_{-i})=\Pi _{j\neq i}\hat{%
\pi}_{j}^{k}(t_{j})$.

The mappings thus constructed, i.e. $\hat{\pi}_{i}^{\ast }(t_{i})=(\hat{\pi}%
_{i}^{0}(t_{i}),\hat{\pi}_{i}^{1}(t_{i}),...)$ assign to each type a
hierarchy of beliefs. We now propose a new implementing condition which we
term higher-order measurability (HOM).

\begin{definition}
A social choice function $f$ satisfies higher-order measurability if for any
pair of states $t$ and $t^{\prime }$ such that $f(t)\neq f(t^{\prime })$,
there is $k\in 
\mathbb{N}
$ such that there is an agent $i$ whose $k^{th}$ order belief $\hat{\pi}%
_{i}^{k}(t_{i})\neq \hat{\pi}_{i}^{k}(t_{i}^{\prime })$.
\end{definition}

The necessity of HOM is demonstrated in Appendix \ref{VirtualImplementation}%
. Intuitively, HOM requires that the social choice function $f$ be
measurable on a partition of $T$ where each cell of the partition is such
that if $t$ and $t^{\prime }$ belong to the same cell, then each agent has
the same belief hierarchy in both $t$ and $t^{\prime }$.

We now consider the setup we work with for Bayesian Nash implementation.
Suppose an SCF $f$ satisfies stochastic measurability, so that if $f(s)\neq
f(s^{\prime })$, then there is an agent $i$ such that $p_{i}(s)\neq
p_{i}(s^{\prime })$. Then, the first order belief of any agent $j\neq i$ is
different between $s$ and $s^{\prime }$, so that $f$ must satisfy HOM.
Conversely, if there are two states $s$ and $s^{\prime }$ so that for each
agent $i$, $p_{i}(s)=p_{i}(s^{\prime })$, and evidence distributions are
independent, then $f(s)$ must be identical to $f(s^{\prime })$ if $f$ is to
satisfy HOM. Therefore, in such a setup, HOM is identical to stochastic
measurability, and therefore weaker than either NPD or NPPD.

Towards establishing that HOM is also sufficient for implementation with the
above implementation notion, we will define the relevant version of
incentive compatibility in this setting. To do so, consider the direct
mechanism with respect to $\mathcal{T=}(T_{i},\hat{E}_{i},q_{i})_{_{i\in 
\mathcal{I}}}$, where each agent $i$ makes a report $t_{i}\in T_{i}$ and the
outcome chosen is $f((t_{i})_{i\in \mathcal{I}}$. No transfers are induced.

\begin{definition}
A social choice function $f$ satisfies evidence incentive compatibility if
for each agent $i$, and each type $t_{i}\in T_{i}$,

\begin{equation*}
t_{i}\in \arg \max_{t_{i}^{\prime }\in \hat{T}_{i}(t_{i})}\sum_{t_{-i}\in
T_{-i}}q_{i}(t_{i})(t_{-i})\bar{u}_{i}(f(t_{i}^{\prime
},t_{-i}),(t_{i},t_{-i}))
\end{equation*}

\noindent where $\hat{T}_{i}(t_{i})=\{t_{i}^{\prime }:\hat{E}%
_{i}(t_{i}^{\prime })\subseteq \hat{E}_{i}(t_{i})\}$.
\end{definition}

That is, if agent $i$ is endowed with type $t_{i}$, and anticipates every
other agent to report their type truthfully, then it is among her best
responses to also report truthfully. This is consistent with \cite{DS2008},
who show that if agents can send every combination of their available
messages, then the above condition is necessary and sufficient for
truthtelling to be a Bayesian Nash equilibrium. Theorems \ref{MAIN} and \ref%
{PS_SM}, however, do not require this condition since the use of large
transfers causes it to be automatically satisfied. Now, we present the main
result of this section.

\begin{theorem}
\label{rat_imp}A social choice function is rationalizably implementable with
arbitrarily small transfers iff it satisfies higher-order measurability and
evidence incentive compatibility.
\end{theorem}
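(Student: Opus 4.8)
The plan is to prove necessity and sufficiency separately. Necessity of higher-order measurability is claimed to be relegated to an appendix in the excerpt, so I would cite that and focus the body on the forward direction; still, let me sketch it. For necessity, suppose $f(t) \neq f(t')$ but every agent has the identical belief hierarchy at $t$ and $t'$. By collective coherence (\cite{MZ85}), the belief hierarchy pins down, for each agent, a type in the universal type space, and interim correlated rationalizability depends only on this hierarchy (this is the key property of ICR from \cite{DFM2007}). Hence $R_i^G(t_i) = R_i^G(t_i')$ for all $i$ in any mechanism, so the same message profile is rationalizable at both states, forcing $g(m) = f(t) = f(t')$, a contradiction. Necessity of evidence incentive compatibility follows because truthtelling in the direct mechanism must be an equilibrium (a fortiori rationalizable), invoking \cite{DS2008}; if it failed, some type would have a profitable misreport within $\hat T_i(t_i)$ and implementation would fail.

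For sufficiency, I would construct an explicit mechanism, adapting the structure used for Theorem \ref{MAIN} but now working with belief hierarchies instead of evidence distributions. Each agent reports (i) her type $t_i$, which induces a claimed finite-order belief hierarchy truncated at some level $K$ chosen large enough that higher-order measurability "bites" — i.e. so that any two states with different $f$-values are separated by some agent's $k$-th order belief for $k \le K$ — (ii) a prediction about the neighbor's reported hierarchy, (iii) evidence $\hat E_i(t_i)$, and possibly an integer or a challenge index. The transfers combine: a proper scoring rule rewarding each agent for correctly predicting the next agent's report (which, since beliefs are pinned down by the true type, makes truthful prediction a best response once we go through the rationalizability rounds), a cross-checking penalty disciplining own reports against the left-neighbor's prediction, an evidence-presentation reward scaled by a small $\varepsilon$ to trigger maximal evidence whenever there is inconsistency or a challenge, and the bet/challenge transfers analogous to $\tau^5$ — but now all transfers must be uniformly bounded by $\varepsilon$, since we only demand implementation with arbitrarily small transfers. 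The argument then proceeds by induction on rounds of rationalizability: first show evidence is presented maximally at every rationalizable message (so $\hat T_i$ constraints are the binding ones), then show reported hierarchies must be consistent and truthful up to level $K$, then use higher-order measurability to conclude the implemented outcome matches $f$ at the true state, with evidence incentive compatibility ensuring no profitable "downward" misreport within $\hat T_i(t_i)$ survives.

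The main obstacle I anticipate is making the scoring-rule and cross-checking machinery work with transfers bounded by $\varepsilon$ rather than by large penalties. In Theorem \ref{MAIN} the parameters $\underline\tau, \bar\tau$ were allowed to exceed $1$, which let a single cross-check dominate any outcome-induced incentive; here every transfer is tiny, so one cannot dominate preference variation directly. The resolution — and this is the point of relaxing to rationalizability and allowing small transfers — is that in a rationalizability argument one does not need domination in a single round: one iteratively removes messages, and at each round the relevant comparison is among transfers only (outcome differences get neutralized because, conditional on surviving messages, the outcome is already determined by the consistency structure). So the scaling needs to be set up so that $\tau^2$ (scoring) dominates $\tau^1, \tau^5$ at the $\varepsilon$ scale, $\tau^4$-type penalties dominate $\tau^2$, all while every transfer stays below the prescribed bound — achievable by taking all of them proportional to a common small constant. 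A secondary subtlety is correlation: since types may be correlated, the neighbor-prediction scoring rule elicits the conditional belief $q_i(t_i)$ over $t_{-i}$ correctly, but one must verify that the push-forward to the finite-order hierarchy $\hat\pi_i^{\le K}$ is what the mechanism actually needs to cross-check, and that collective coherence guarantees these truncated reports are mutually consistent across agents on the equilibrium path. I would handle this by having agents report the belief hierarchy data directly (not the abstract type) so that consistency is checkable level by level, and cite \cite{MZ85} for coherence and \cite{DFM2007} for the fact that ICR is a function of the hierarchy alone.
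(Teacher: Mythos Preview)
Your necessity sketch is essentially right and close to the paper's: the paper invokes Proposition~2 of \cite{P2008} for the invariance of ICR to belief hierarchies (the same content as your appeal to \cite{DFM2007}), and obtains evidence incentive compatibility by a revelation-principle step applied to a BNE of the implementing mechanism, then taking $\varepsilon \to 0$.

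The sufficiency proposal has a genuine gap. You correctly identify the obstacle --- in Theorem~\ref{MAIN} the penalties $\underline\tau,\bar\tau$ exceed $1$ precisely so they dominate any change in outcome --- but your proposed resolution does not work. The claim that ``outcome differences get neutralized because, conditional on surviving messages, the outcome is already determined by the consistency structure'' is circular: to reach a stage of the rationalizability iteration where the outcome is pinned down, some transfer must first overpower the order-$1$ incentive an agent has to manipulate the outcome by making or breaking consistency; if every transfer is bounded by $\varepsilon$, no such step is available, and the Theorem~\ref{MAIN} architecture scaled down by a common small constant admits rationalizable misreports that shift the outcome. The paper instead abandons that architecture for the outcome-determining part and uses the Abreu--Matsushima device of \cite{AM92B}: after eliciting beliefs via scoring rules (your idea survives here, since the reports $t_i^{0,k}$ affect only $\tau^2$ and not the outcome), agents submit $J$ further type reports $(t_i^j)_{j=1}^J$, the outcome is the lottery $\tfrac{1}{J}\sum_j f(t^j)$, and a small penalty $\bar\tau^3$ with $\min_i\bar\beta_i>\bar\tau^3>\tfrac{1}{J}$ is imposed on the first agent to deviate from her elicited type $t_i^{0,\bar k+1}$; evidence incentive compatibility then makes truthtelling the unique best response in each round. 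Choosing $J$ large makes both the per-round outcome incentive $\tfrac{1}{J}$ and the penalty fit under any prescribed $\varepsilon$, yet the penalty still dominates round by round. What your construction is missing is this dilution of outcome incentives via $J$; without it, small transfers cannot discipline the outcome-relevant reports, and neither cross-checks nor bets at the $\varepsilon$ scale repair this.
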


The sufficiency of HOM is demonstrated through the construction of an
implementing mechanism which is presented in Appendix \ref%
{VirtualImplementation}. Here, we describe the intuition behind the
construction, which follows ideas from \cite{AM92B}. The first step is to
use a small incentive to extract the evidence from each agent. Irrespective
of the other reports they make, it will be a strictly dominant strategy to
present all their evidence. In the second step, we ask each agent to predict
the evidence presentation of the other agents and score their predictions
using a quadratic scoring rule. The input to the scoring rule is the agent's
own type, since common knowledge of the type allows the designer to compute
the posterior distribution $q_{i}(t_{i})$ and score the prediction
appropriately. Since each agent expects every other agent will have
presented all their evidence, she presents her true belief about the
evidence presentations of the other agent in order to maximize the returns
from this scoring rule. This elicits the other agent's true first order
belief. In the third step, we extract the agents true second order belief by
asking them to predict the first order beliefs extracted in the previous
step and scoring their predictions using a quadratic scoring rule. HOM
assures us that if we repeat this process enough times, we will have
extracted the true types of the agents. This mechanism performs the role of
the \emph{dictatorial choice function} in \cite{AM92B}. Each transfer here
can be made small enough so that the overall transfer can be bounded below
any preselected $\varepsilon $. In what follows, we refer to this extracted
profile of types as $t^{0}$ and to agent $i$'s component of $t^{0}$ as $%
t_{i}^{0}$.

After achieving extraction, we proceed towards the goal of implementing the
social choice function. The steps taken are essentially identical to those
in \cite{AM92B}. Agents are asked for $J$ additional reports of their type,
denoted $(t_{i}^{j})_{j=1,2,..,J}$ in what follows. Each profile of reports
determines the outcome with a probability of $\frac{1}{J}$, but the
following penalties are applied. To the first agent $i$ who deviates in
their report $t_{i}^{j}$ from $t_{i}^{0}$, a penalty is applied which is
large enough to dominate the incentive from affecting the outcome with a
probability of $\frac{1}{J}$ but not so large as to incentivize lying in $%
t^{0}$. That such a balance can be achieved can be deduced from the fact
that the designer can choose $J$ to be as large as required to reduce the
incentive to affect the outcome. Then, no agent wants to be the first to
deviate from the truthful report in $t^{0}$ and implementation obtains while
the penalties remain switched off.

This mechanism however inherits some of the well known critiques of the
Abreu-Matsushima type of mechanisms, most notably that it requires a high
depth of reasoning, involves a fairly complex message space, and is more
vulnerable to renegotiation (than say the mechanism proposed in the proof of
Theorem \ref{MAIN}) owing to the use of small transfers.

\section{\label{RL}Related Literature}

In this section, we discuss our paper's results alongside some other related
papers to provide the reader a comparison of our results against the
literature. First, this paper's primary contribution is to the literature on
implementation with hard evidence, pioneered by \cite{BL2012} and \cite%
{KT2012} and extended by \cite{BCS2020}. We make two main contributions
here: we incorporate uncertainty in both the evidence dimension and, to a
limited but nontrivial extent, in the payoff-relevant state-types.\footnote{%
\cite{P2019} also studies a variation of this problem but restricts
attention only to direct mechanisms. We study the rich class of indirect
mechanisms which nevertheless do not depend on integer or modulo games.}

In contrast to the deterministic evidence setting in \cite{BCS2020}, a
measurability condition is insufficient due to the possibility of perfect
deceptions. While under deterministic evidence, implementation only requires
that some agent can separate states using evidence, with uncertain evidence
the key factor becomes agents' beliefs about evidence distributions rather
than actual evidence endowments. This fundamental difference necessitates
novel techniques including the construction of test allocations using the
Farkas lemma, scoring rules to extract agents' beliefs about evidence
distributions, and reflection transfers that leverage on higher-order
thinking. These techniques require careful design to avoid activation in
equilibrium - challenges that do not arise in the deterministic case.%
\footnote{%
We conjecture that techniques similar to the ones outlined in this paper may
find use in tackling the problem of creating bounded mechanisms in the style
of \cite{MAM} for the classical incomplete information implementation
problem studied in \cite{J1991}, potentially under assumptions such as
nonexclusive information or information smallness.}

Second, focusing on the classical bayesian implementation literature
pioneered in \cite{J1991} and later developed in \cite{SV2010}, our main
contribution lies in extending these results to include evidence, allowing
for implementation with state-independent preferences. This enables
solutions to important practical problems such as litigation, budget
allocation etc. However, while previous articles on bayesian implementation
allow for full bayesian uncertainty in state-types, we work with either
nonexclusive information or informationally small agents. Further, the
mechanisms in \cite{J1991} use modulo games and those in \cite{SV2010} use
integer games. We present in this paper, to our knowledge, the first
mechanism in the literature that achieves implementation in a bounded
mechanism in a Bayesian setting which explicitly accounts for mixed-strategy
equilibria.

Third, compared to the literature on information smallness (e.g. \cite%
{MP2002}, \cite{MP2004}), which study the conflict between incentive
compatibility and efficiency, we study the implementation of general SCFs in
such information environments. Worth noting is that while informationally
small agents are compatible with incentive compatibility, the implementation
of SCFs in such environments require that all but one agents be incentivized
to tell the truth so that incentive compatibility can then be used to
crosscheck the remaining agent. In the setting we study, this requires novel
techniques such as bets and directly incorporating higher order beliefs into
the message space - methodological innovations which provide mechanism
designers with new tools.

\cite{KT2012} provides an excellent survey of the mechanism design with
evidence literature. An early reference in the field of mechanism design
with evidence is \cite{GL86} who study the principal-agent problem when the
agent cannot manipulate the truth arbitrarily. This corresponds to a notion
of evidence - an agent, by presenting messages which are only available to
her in a certain set of states, can prove that the true state is within that
set. Early contributions in the field of game theory and evidence include 
\cite{M81}, \cite{G81} and \cite{D85}. More papers involving mechanism
design and evidence include \cite{BW2007}, \cite{DS2008}, \cite{HKP2017},
and \cite{BDL2019}. \cite{BDL2021} study a setup involving stochastic
evidence. While we assume that agents are endowed with evidence to begin
with, \cite{BDL2021} study the process of evidence acquisition and the value
of commitment. In addition, they focus on partial implementation whereas our
focus is on full implementation.

\section{Conclusion}

This paper addresses the problem of full implementation with uncertain
evidence where agents draw evidence from state-dependent distributions. The
paper's main result is a necessary and sufficient condition for
implementation, termed No Perfect Deceptions (NPD). We also provide a
generalized version, GNPD, which is sufficient for implementation when
agents are informationally small. The mechanisms developed are notable for
avoiding integer games, applying to two or more agents, and using novel
techniques like belief elicitation via competing scoring rules and the
endogenous creation of test allocations from variation in the evidence
distribution.

We have focused on environments where the designer cannot guarantee that
preferences will vary across states, and hence must rely solely on evidence
variation to implement different outcomes in different states. The
mechanisms constructed are therefore robust to preference variation. Two
directions for future research present themselves. First, in the spirit of 
\cite{KT2012}, it is imperative to consider a combination of preference
variation and (potentially costly) uncertain evidence alongside natural and
bounded (but possibly indirect) mechanisms. This paper is a necessary first
step in that direction. Second, the techniques detailed in this paper may
find use towards constructing simpler implementing mechanisms for
specialized settings or social choice rules, an approach in line with recent
articles such as \cite{EN2025} which proposes a natural and bounded
mechanism for efficient implementation.

\pagebreak \appendix

\section{Appendix}

\subsection{\label{ERD}Evidence, Refutation and Deceptions}

Consider any two states $s$ and $s^{\prime }$. Given an evidence structure,
the following table captures the possibilities regarding refutations and
deceptions, with the final column describing whether the states are
separable in terms of a mechanism or not. When states are not separable by a
mechanism, an SCF must prescribe the same social choice outcomes if it is
implementable. With some abuse of notation, $s\downarrow s^{\prime }$
denotes that $s$ is refutable at $s^{\prime }$ and $s\rightarrow s^{\prime }$
denotes that there is a perfect deception from $s$ to $s^{\prime }$.

\begin{equation*}
\begin{tabular}{|c|c|c|c|c|l|}
\hline
Case & $s\downarrow s^{\prime }$ & $s^{\prime }\downarrow s$ & $s\rightarrow
s^{\prime }$ & $s^{\prime }\rightarrow s$ & Description \\ \hline
1 & 0 & 0 & 0 & 0 & No perfect deceptions: states are separable \\ 
2 & 0 & 0 & 0 & 1 & $s^{\prime }$ nonrefutable at $s$ and $s^{\prime
}\rightarrow s$: not separable \\ 
3 & 0 & 0 & 1 & 0 & $s$ nonrefutable at $s^{\prime }$ and $s\rightarrow
s^{\prime }$: not separable \\ 
4 & 0 & 0 & 1 & 1 & $s\rightarrow s^{\prime }$ and $s^{\prime }\rightarrow s$
and no refutations: not separable \\ \hline
5 & 0 & 1 & 0 & 0 & No perfect deceptions: states are separable \\ 
6 & 0 & 1 & 0 & 1 & \textbf{Impossible, see below} \\ 
7 & 0 & 1 & 1 & 0 & $s$ nonrefutable at $s^{\prime }$ and $s\rightarrow
s^{\prime }$: not separable \\ 
8 & 0 & 1 & 1 & 1 & \textbf{Impossible.} \\ \hline
9 & 1 & 0 & 0 & 0 & No perfect deceptions: states are separable \\ 
10 & 1 & 0 & 0 & 1 & $s^{\prime }$ nonrefutable at $s$ and $s^{\prime
}\rightarrow s$: not separable \\ 
11 & 1 & 0 & 1 & 0 & \textbf{Impossible} \\ 
12 & 1 & 0 & 1 & 1 & \textbf{Impossible} \\ \hline
13 & 1 & 1 & 0 & 0 & Mutual refutation: states are separable \\ 
14 & 1 & 1 & 0 & 1 & \textbf{Mutual refutation: Impossible} \\ 
15 & 1 & 1 & 1 & 0 & \textbf{Mutual refutation: Impossible} \\ 
16 & 1 & 1 & 1 & 1 & \textbf{Mutual refutation: Impossible} \\ \hline
\end{tabular}%
\end{equation*}

Cases 6, 8, 11 and 12: If $s^{\prime }$ is refutable at $s$ but $s$ is not
refutable at $s^{\prime }$, there are articles of evidence available at $s$
which are not available at $s^{\prime }$, so that a perfect deception cannot
exist from $s^{\prime }$ to $s$.

Cases 14, 15, and 16 are impossible because under a mutual refutation, there
are articles of evidence in either state which do not exist in the other
state so that a perfect deception cannot exist.

\subsection{\label{Conn}Connection to \protect\cite{BL2012}}

\cite{BL2012} use a notion of nomenclature to associate articles of evidence
with subsets of the state space. In essence, an article of evidence is
associated with the set of states in which it occurs. A natural
generalization to the setting we study here would be to associate an article
of evidence with the set of states in which it occurs \emph{with positive
probability}. When probability distributions are degenerate (so that we have
a deterministic evidence setting), it is clear that only one collection may
occur with positive (and indeed unit) probability. We denote this collection
by $E_{i}(s)$. With this notion of nomenclature, we state the analogous
conditions from \cite{BL2012} below.

\begin{enumerate}
\item (e1)$\ \forall e_{i}\in E_{i}(s)$, $s\in e_{i}$

\item (e2) $e_{i}\in E_{i}(s)\implies e_{i}\in E_{i}(s^{\prime })$ for every 
$s^{\prime }\in e_{i}$.
\end{enumerate}

The following proposition proves that (se1) and (se2) are indeed
generalizations of (e1) and (e2).

\begin{proposition}
If $p$ is degenerate, then (se1) and (se2) are identical to (e1) and (e2)
respectively.
\end{proposition}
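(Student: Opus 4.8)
The plan is to unwind the definitions on both sides under the assumption that $p$ is degenerate, and check the two equivalences one at a time. The key observation to record first is that when $p(s)$ is degenerate, $p_i(s)$ places unit mass on a single collection, which we have named $E_i(s)$; hence for any collection $E_i'$ we have $p_i(s)(E_i') > 0$ if and only if $E_i' = E_i(s)$. I would also note that, under the nomenclature, an article $e_i$ belongs to the collection $E_i(s)$ precisely when $s$ lies in the subset-label of $e_i$, i.e. $e_i \in E_i(s) \iff s \in e_i$; this is really just the definition of the labelling. These two facts are the only machinery needed.

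For (se1) $\Leftrightarrow$ (e1): First I would rewrite (se1) using the definition of refutation. We have $E_i \downarrow s$ iff $p_i(s)(E_i') = 0$ for every $E_i' \supseteq E_i$, which under degeneracy means $E_i \not\subseteq E_i(s)$ (equivalently, $E_i(s)$ is not a superset of $E_i$). So (se1) says: if $E_i \not\subseteq E_i(s)$ then $p_i(s)(E_i) = 0$, which under degeneracy is automatic unless $E_i = E_i(s)$, and $E_i = E_i(s)$ is a subset of itself; so (se1) carries no content beyond the tautology that $E_i(s) \subseteq E_i(s)$. I would then show (e1) is the same tautological content phrased articlewise: (e1) says every $e_i \in E_i(s)$ has $s \in e_i$, which by the labelling fact is exactly the statement $E_i(s) \subseteq E_i(s)$. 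So both reduce to the same triviality; more usefully, I would phrase the argument as: (se1) restricted to the degenerate case holds iff for the unique collection $E_i(s)$ carried at $s$, every article in it is available at $s$, which is (e1).

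For (se2) $\Leftrightarrow$ (e2): (se2) says that if $E_i \not\downarrow s'$ then some superset $E_i' \supseteq E_i$ has $p_i(s')(E_i') > 0$; under degeneracy this is: $E_i \not\downarrow s' \implies E_i \subseteq E_i(s')$. Now $E_i \not\downarrow s'$ itself, under degeneracy, already means $E_i \subseteq E_i(s')$ (negating the characterization of $\downarrow$ above), so the degenerate (se2) is again automatically true, and its content is the implication read at the level of the maximal relevant collection. I would then match it to (e2) by taking $E_i = E_i(s)$: then $E_i \not\downarrow s'$ becomes $E_i(s) \subseteq E_i(s')$, and specializing further to singletons $E_i = \{e_i\}$ with $e_i \in E_i(s)$, the hypothesis $\{e_i\} \not\downarrow s'$ is (by the labelling fact) exactly $s' \in e_i$, and the conclusion $\{e_i\} \subseteq E_i(s')$ is exactly $e_i \in E_i(s')$; that is precisely (e2). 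Conversely, assuming (e2), an arbitrary $E_i$ with $E_i \not\downarrow s'$ satisfies $E_i \subseteq E_i(s)$ for some $s$ with... — here I would instead argue directly: $E_i \not\downarrow s'$ gives $E_i \subseteq E_i(s')$ at once under degeneracy, so (se2) holds, and I only need (e2) to recover the articlewise reading. The cleanest writeup is to prove each articlewise condition equivalent to its collectionwise counterpart via the two boxed facts, rather than chase the logically-vacuous implications.

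The main obstacle, such as it is, is purely expository: both (se1) and (se2) become vacuously true once $p$ is degenerate (since the only collection with positive probability is $E_i(s)$ itself, and a set is always a superset of itself), so "identical to (e1) and (e2)" has to be read as "reduces to the same content, article by article, under the nomenclature." I would therefore be careful to state explicitly the translation $p_i(s)(E_i') > 0 \iff E_i' = E_i(s)$ and the labelling identity $e_i \in E_i(s) \iff s \in e_i$ at the outset, then show (se1)$\Rightarrow$(e1), (e1)$\Rightarrow$(se1), and likewise for (se2)/(e2), each in one or two lines. No step requires a genuine calculation; the care is entirely in bookkeeping the quantifiers over supersets versus over articles.
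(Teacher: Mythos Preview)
Your proposal is correct and rests on the same two ingredients the paper uses: the degenerate characterization $p_i(s)(E_i')>0 \iff E_i'=E_i(s)$, and the nomenclature identity $e_i\in E_i(s)\iff s\in e_i$. The difference is one of presentation rather than substance. The paper writes out four separate implications---$(\mathrm{se1})\Rightarrow(\mathrm{e1})$, $(\mathrm{e1})\Rightarrow(\mathrm{se1})$, and likewise for the second pair---chasing the definitions step by step; you instead observe up front that, once the nomenclature is in force, both the collectionwise conditions (se1)/(se2) and the articlewise conditions (e1)/(e2) collapse to tautologies, and then match the two readings by specializing to singletons $E_i=\{e_i\}$. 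Your observation that (se1) and (se2) are in fact immediate from the definition of $\downarrow$ even without degeneracy is correct and sharpens the picture. The paper's derivations also lean on the nomenclature identity at each step (e.g.\ the move ``$s\notin e_i$ implies any $E_i\ni e_i$ refutes $s$''), just less explicitly, so the two arguments are doing the same bookkeeping in different orders.
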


\begin{proof}
First, consider some $e_{i}$ such that $s\notin e_{i}$. Then, any collection 
$E_{i}$ containing $e_{i}$ refutes $s$. From se1 therefore, $%
p_{i}(s)(E_{i})=0$ for any $E_{i}\ni e_{i}$. Then, $E_{i}\neq E_{i}(s)$,
whereupon we have obtained that if $s\notin e_{i}$, then $e_{i}\notin
E_{i}(s)$, which is (e1). Thus, (se1) implies (e1).

Next, suppose $E_{i}\downarrow s$. Then, $\exists e_{i}\in E_{i}$ such that $%
s\notin e_{i}$. From (e1), $e_{i}\notin E_{i}(s)$, so that $%
p_{i}(s)(E_{i})=0 $, which establishes (se1). Therefore, (e1) implies (se1)
as well.

Now, we turn to (se2). Consider a pair of states $s$ and a collection $E_{i}$
such that $E_{i}\not\downarrow s^{\prime }$. Notice that under degenerate
probabilities, $p_{i}(s^{\prime })(E_{i}^{\prime })>0\implies E_{i}^{\prime
}=E_{i}(s^{\prime })$ and $p_{i}(s)(E_{i})>0\implies E_{i}=E_{i}(s)$, so
that (se2) translates to the following: If $\forall e_{i}\in E_{i}(s)$, if $%
s^{\prime }\in e_{i}$, then $E_{i}(s^{\prime })\supseteq E_{i}(s)$. Now,
consider any $e_{i}\in E_{i}(s)$ such that $s^{\prime }\in e_{i}$. Then, $%
E_{i}(s^{\prime })\supseteq E_{i}(s)\implies e_{i}\in E_{i}(s^{\prime })$,
which establishes (e2).

Finally, suppose $E_{i}\not\downarrow s^{\prime }$ and $p_{i}(s)(E_{i})>0$.
Under degenerate probabilities, this yields $E_{i}(s)\not\downarrow
s^{\prime }$, so that $\forall e_{i}\in E_{i}(s)$, $s^{\prime }\in e_{i}$.
From (e2), $e_{i}\in E_{i}(s^{\prime })$. Since this is true for all $e_{i}$%
, we have $E_{i}(s^{\prime })\supseteq E_{i}(s)$, which is identical to
(se2) under degenerate probabilities. We have, therefore established that
(se1) and (se2) correspond to (e1) and (e2) under degenerate probabilities.
\end{proof}

Next, we consider the following question: Since articles can be named
according to the subset of states in which they occur with positive
probability, does it suffice to allow an agent to reveal a \textquotedblleft
most informative\textquotedblright\ article of evidence rather than
presenting an entire collection? More precisely, consider the following
example evidence structure\footnote{%
Please note that this example varies from the running example in its
structure. It is an abstract example intended for illustrative purposes.}:

\begin{equation*}
\begin{tabular}{|l|l|l|}
\hline
State/Agent & A & B \\ \hline
$U$ & $\{\{L,M,H,U\}\}$ & $\{\{L,M,H,U\}\}$ \\ \hline
$H$ & $0.3\times \{\{L,M,H,U\},\{M,H\}\}+0.7\times \{\{L,M,H\},\{M,H\}\}$ & $%
\{\{L,M,H,U\}\}$ \\ \hline
$M$ & $0.5\times \{\{L,M,H,U\},\{M,H\}\}+0.5\times \{\{L,M,H\},\{M,H\}\}$ & $%
\{\{L,M,H,U\}\}$ \\ \hline
$L$ & $\{\{L,M,H\},\{L,M,H,U\}\}$ & $\{\{L,M,H,U\}\}$ \\ \hline
\end{tabular}%
\end{equation*}

In both states $M$ and $H$, agent A, the only agent with evidence variation
always has the article $\{M,H\}$ and this is the most informative article
which is available to her. If we were to only allow her to present one
article of evidence, then there would be a perfect deception in both
directions between the states $M$ and $H$ so that we would not be able to
distinguish them using any mechanism. However, if we were to allow the
submission of entire collections of evidence, then there is no perfect
deception from state $M$\ to $H$ (the article $\{L,M,H\}$ is not available
often enough) and no perfect deception from state $H$ to $M$ (the article $%
\{L,M,H,U\}$ is not available often enough). Therefore, we can distinguish
between these states. Thus, there is a loss of generality when restricting
attention to the most informative article of evidence. In a deterministic
evidence setting though, this restriction is without loss of generality, as
can be seen in both \cite{BL2012} and \cite{BCS2020}.

\subsection{\label{PS_SM_NECC&SUFF}Proof of Theorem \protect\ref{PS_SM}}

The necessity of NPPD\ stems from arguments similar to those for the proof
of necessity of NPD for mixed implementation. Sufficiency is established by
referring to the mechanism used in the proof of Theorem 1. With the
exception of the bets, the rest of the mechanism remains identical. Note
that Claims 1 and 2 make no use of either the condition NPD or nontrivial
randomization by agents. Further, Claims 1 and 2 imply that there is no
equilibrium where agents present refutable lies. This leaves us with
profiles consistent with nonrefutable lies.

So suppose agents present a profile consistent with a nonrefutable lie $%
s^{\prime }$. In contrast to the arguments in Section 4.2, where it was
sufficient for an agent to inform the designer of the true state, in this
paradigm, depending on the strategy profile, it is possible for different
evidence collections to be low or high in probability relative to the
expected probabilities in state $s^{\prime }$. Therefore, it is impossible
for the designer to place an appropriate bet if she is informed only of the
true state. Since $S$ and $\mathbb{E}$ are finite, the number of pure
deceptions is finite, since any pure deception profile is a map $\alpha
:S\times \mathbb{E}\rightarrow $ $S\times \mathbb{E}$. Therefore, they can
be identified with numbers from $1$ through $Z$ where $Z$ is the total
number of possible pure deceptions.

We augment the message space of the base mechanism by adding the strategy
identifier $z\in (1,2,..,Z)$. Once informed of the deception identifier, the
designer can identify an agent $i$ and a pair of collections $E_{i}^{\prime
} $ and $E_{i}^{\prime \prime }$ such that $p_{\alpha _{i}(s)}(E_{i}^{\prime
})<p_{i}(s^{\prime })(E_{i}^{\prime })$ and $p_{\alpha
_{i}(s)}(E_{i}^{\prime \prime })>p_{i}(s^{\prime })(E_{i}^{\prime \prime })$
where $\alpha _{i}$ is the pure deception being played by agent $i$. Then,
there are numbers $\gamma <0$ and $\delta >0$ such that (i) $\gamma \times
p_{\alpha _{i}(s)}(E_{i}^{\prime })+\delta \times p_{\alpha
_{i}(s)}(E_{i}^{\prime \prime })>0$ and (ii) $\gamma \times p(s^{\prime
})(E_{i}^{\prime })+\delta \times p(s^{\prime })(E_{i}^{\prime \prime })>0$.
These numbers $\gamma $ and $\delta $ form the bets in this setting. It is
profitable to place a bet because of inequality (i). The rest of the
analysis remains the same as earlier, in that the bet induces evidence
submission and triggers a realignment through Lemma 2 in the main text.
Therefore, agents must present truthful reports and we obtain
implementation. Note that there are no bets in equilibrium as betting
against the truth yields losses (from inequality (ii)).

\subsection{\label{NOPUREDECEPTION}NPPD vs NPD and Stochastic Measurability}

In this section, we demonstrate that NPPD is strictly stronger than
Stochastic Measurability and strictly weaker than NPD. To see that NPPD is
strictly stronger than Stochastic Measurability, consider the example below.

\begin{equation*}
\begin{tabular}{|l|l|l|}
\hline
State/Agent & A & B \\ \hline
$U$ & $0.1\times \{\{L,M,H,U\}\}$ & $\{\{L,M,H,U\}\}$ \\ 
& $+0.9\times \{\{L,M,H,U\},\{M,H,U\},\{H,U\}\}$ &  \\ \hline
$H$ & $0.1\times \{\{L,M,H,U\},\{M,H,U\}\}$ & $\{\{L,M,H,U\}\}$ \\ 
& $+0.9\times \{\{L,M,H,U\},\{M,H,U\},\{H,U\}\}$ &  \\ \hline
$M$ & $0.5\times \{\{L,M,H,U\},\{M,H,U\}\}+0.5\times \{\{L,M,H,U\}\}$ & $%
\{\{L,M,H,U\}\}$ \\ \hline
$L$ & $\{\{L,M,H,U\}\}$ & $\{\{L,M,H,U\}\}$ \\ \hline
\end{tabular}%
\end{equation*}%
Clearly, $U$ is nonrefutable at $H$. Consider the deception at state $H$
given by 
\begin{equation*}
\alpha _{A}(H,\{\{L,M,H,U\},\{M,H,U\}\})=(U,\{\{L,M,H,U\}\})\text{;}
\end{equation*}%
\begin{equation*}
\alpha
_{A}(H,\{\{L,M,H,U\},\{M,H,U\},\{H,U\}\})=(U,\{\{L,M,H,U\},\{M,H,U\},\{H,U\}%
\}).
\end{equation*}%
While the SCF satisfies stochastic measurability, this is a pure-perfect
deception.

To see that NPPD is strictly weaker than NPD, we refer back to the example
in Section 2 of the main paper. We reproduce the evidence structure below:

\begin{equation*}
\begin{tabular}{|l|l|}
\hline
State & Evidence (Firm A) \\ \hline
$H$ & $0.6\times \{\{M,H\},\{L,M,H\}\}+0.4\times \{\{L,M,H\}\}$ \\ \hline
$M$ & $0.4\times \{\{M,H\},\{L,M,H\}\}+0.6\times \{\{L,M,H\}\}$ \\ \hline
$L$ & $\{\{L,M,H\}\}$ \\ \hline
\end{tabular}%
\end{equation*}

It is easy to see that there is no pure-perfect deception in the above
example - at $H$, the type $\{\{M,H\},\{L,M,H\}\}$ must present all their
evidence, as type $\{\{L,M,H\}\}$ does not have enough evidence to mimic $%
\{\{M,H\},\{L,M,H\}\}$ at $M$, but then we induce too much probability (0.6
instead of 0.4) on $\{\{M,H\},\{L,M,H\}\}$. This demonstrates that NPPD is
strictly weaker than NPD.

\subsection{\label{app:UST}Proof of Theorem \protect\ref{THM_GTS}}

\subsubsection{Proof of Necessity}

Here we prove that any implementable SCF $f$ must satisfy GNPD. For
contradiction suppose $f(s)\not=f(s^{\prime })$ for a pair of states $s$ and 
$s^{\prime }$ such that (i) there is a perfect deception $\alpha $ for $%
s^{\prime }$ at $s$; and (ii) if $\bar{S}(s)$ is the smallest belief-closed
set containing $s$, then $\alpha ^{s}(\bar{s})$ is nonrefutable at $\bar{s}$
for any $\bar{s}\in \bar{S}(s)$. Then, for each agent $i$, given type $\bar{s%
}_{i}\in \bar{S}_{i}(s)$, the belief induced by $\alpha $ on the messages of
other agents in a direct mechanism is the same as that induced by $\sigma
^{\ast }$ (the truthtelling strategy) if agent $i$ were of state type $%
\alpha _{i}^{s}(\bar{s}_{i})$. Denote the belief of such a type $\bar{s}_{i}$
by $b_{i}^{d}(\bar{s}_{i},\alpha )=(\sum_{\tilde{s}_{-i}\times \tilde{E}%
_{-i}}p^{s}(\bar{s}_{i},s_{-i})p_{-i}^{e}(\bar{s}_{i},s_{-i})(\tilde{E}%
_{-i})\alpha (\tilde{s}_{-i},\tilde{E}_{-i})(\hat{s}_{-i},\hat{E}_{-i}))_{(%
\hat{s}_{-i},\hat{E}_{-i})}$, where the superscript $d$ represents a direct
mechanism. Consider any mechanism $\mathcal{M}=(M,g,\tau )$ which implements
such an SCF, and consider any equilibrium $\sigma $ of the game induced by
this mechanism.

Consider the strategies $\sigma \circ \alpha $ which denote the following
strategy - each agent first plays the deception $\alpha $, and then plays
the equilibrium strategy ($\sigma $) of the type that results from $\alpha $%
. We claim that this strategy is an equilibrium of the game but yields the
outcome $f(s^{\prime })$ in state $s$.

To demonstrate this, we will first consider the incentives of an arbitrarily
chosen type $(\bar{s}_{i},\bar{E}_{i})$ such that $\bar{s}_{i}\in \bar{S}%
_{i}(s)$.\footnote{%
The set of types in $\bar{S}_{i}(s)$ are the only ones which are relevant
for equilibrium analysis, since each agent expects types outside this set to
appear with probability zero at any order of belief.} Suppose $\alpha
_{i}^{s}(\bar{s}_{i})=\hat{s}_{i}$. First, we note that the collection of
evidence $\bar{E}_{i}$ (or a superset thereof) must also occur with positive
probability when the state is given by $\hat{s}$, since otherwise this
collection would refute $\hat{s}$ at $\bar{s}$. That is, the type $(\hat{s}%
_{i},\hat{E}_{i})$ for some $\hat{E}_{i}\supseteq \bar{E}_{i}$ occurs with
positive probability when the state is $\hat{s}$. In this setting, this
forms the analogue of condition (*) in the necessity proof of Theorem \ref%
{MAIN}. The beliefs of the type $(\bar{s}_{i},\bar{E}_{i})$ over opposing
messages in $\mathcal{M}$ under $\sigma \circ \alpha $ are given by%
\begin{equation*}
b_{i}^{\mathcal{M}}(\bar{s}_{i},\sigma \circ \alpha )=\sum_{\tilde{s}%
_{-i}\times \tilde{E}_{-i}}p^{s}(\bar{s}_{i},\tilde{s}_{-i})p_{-i}^{e}(\bar{s%
}_{i},\tilde{s}_{-i})(\tilde{E}_{-i})\sigma (\alpha (\tilde{s}_{-i},\tilde{E}%
_{-i}))\text{.}
\end{equation*}

By the assumed properties of $\alpha $, $b_{i}^{\mathcal{M}}(\bar{s}%
_{i},\sigma \circ \alpha )=b_{i}^{\mathcal{M}}(\hat{s}_{i},\sigma )$, i.e.
agent $i$ of type $(\bar{s}_{i},\bar{E}_{i})$ has the same beliefs on the
opposing messages (in the possibly indirect mechanism $\mathcal{M}$) as she
would if her true state type were $\hat{s}_{i}$ and every other agent were
playing according to $\sigma $. This type $(\bar{s}_{i},\bar{E}_{i})$ can
obtain the same distribution of outcomes by playing according to $\sigma
_{i}\circ \alpha _{i}$ as some type $(\hat{s}_{i},\hat{E}_{i})$ can by
playing according to $\sigma _{i}$, where $\hat{E}_{i}\in $supp$\alpha
_{i}^{e}(\bar{s}_{i},\bar{E}_{i})$. If there is a profitable deviation for $(%
\bar{s}_{i},\bar{E}_{i})$, then (under constant preferences) this deviation
is also profitable for the type $(\hat{s}_{i},\hat{E}_{i})$ which must exist
with positive probability by (*) when the others play $\sigma _{-i}$. This
would contradict the claim that $\sigma $ is an equilibrium. Since $(\bar{s}%
_{i},\bar{E}_{i})$ is arbitrarily chosen, this covers the incentives of any
type that any agent believes exists with positive probability when the true
state is $s$. Thus, $\sigma \circ \alpha $ is an equilibrium of the game
induced by the mechanism $\mathcal{M}$.

Further, in this equilibrium, type $s_{i}\in \bar{S}_{i}(s)$ plays the
equilibrium strategy of some type $(s_{i}^{\prime },\sim )$, so that the
outcome in state $s$ is $f(s^{\prime })$ (since $\mathcal{M}$ implements $f$%
, and $\sigma $ is an equilibrium of the game induced by $\mathcal{M}$, $%
g(\sigma (s^{\prime }))=f(s^{\prime })$), which contradicts implementation.

\paragraph{Comparisons with Bayesian Monotonicity}

Reinterpreted into our setting (with constant preferences), Bayesian
monotonicity requires that for any deception $\alpha $ such that $f\left(
\alpha (s)\right) \neq f(s)$ for some $s$, there is an agent $i$, a type $%
(s_{i},E_{i})$, and an alternate SCF (potentially including a transfer) $%
f^{\prime }$, such that $f^{\prime }\left( \alpha \left( s\right) \right)
\succ _{i}f\left( \alpha \left( s\right) \right) $ while $f\left( s^{\prime
}\right) \succsim _{i}f^{\prime }(\alpha _{i}(s_{i}^{\prime
}),s_{-i}^{\prime })$. If no such SCF $f^{\prime }$ exists, then for $f$ to
be implementable, it is necessary that $f\left( \alpha (s)\right) =f(s)$. We
will show that if a deception $\alpha $ exists which is perfect for $%
s^{\prime }$ at $s$, it is not possible for such an $f^{\prime }$ to exist.
Suppose for contradiction that it does. That is, there is a profitable
deviation for type $(s_{i},E_{i})$ when the agents have played strategies
which result in the outcome $f\left( \alpha \left( s\right) \right)
=f(s^{\prime })$, i.e. $(s_{i},E_{i})$ can obtain an outcome she prefers to $%
f\left( \alpha \left( s\right) \right) =f(s^{\prime })$. Since $s^{\prime }$
is nonrefutable at $s$, $E_{i}$ or a superset exists when the profile of
state types is $s^{\prime }$. Then, this type can also profitably deviate
from the desired outcome $f\left( s^{\prime }\right) $, which contradicts $%
f\left( s^{\prime }\right) \succsim _{i}f^{\prime }(\alpha
_{i}(s_{i}^{\prime }),s_{-i}^{\prime })$.

\subsubsection{Proof of Sufficiency}

The proof utilizes the following strategy - first, we prove the part of
Theorem 3 related to information smallness, and then present the
modifications to the proof which yield the result under nonexclusive
information.

\paragraph{A Challenge Scheme}

The mechanism involves agents placing bets to signal that a deception has
been played through a challenge scheme. First, denote by $\mathcal{A}_{i}$
the set of all possible deceptions $\alpha _{i}$ for some agent $i$, and let
the set $\mathcal{A}=\Pi _{i\in \mathcal{I}}\mathcal{A}_{i}$ denote the set
of all possible deceptions. Fix a state type $s_{i}$ of some agent $i$.
Given $s_{i}$, let $p_{\alpha _{-i},s_{i}}(s_{-i},E_{-i})$ denote the
probability that agent $i$ of state type $s_{i}$ ascribes to other agents
(denoted $-i$) presenting a message profile $(s_{-i},E_{-i})$ in a direct
mechanism if they are playing the profile of deceptions $\alpha _{-i}$.
Mathematically,

\begin{equation*}
p_{\alpha
_{-i},s_{i}}^{e}(s_{-i},E_{-i})=\sum_{s_{-i}}p^{s}(s_{i},s_{-i})%
\sum_{E_{-i}^{\prime }\in \mathbb{E}_{-i}}p_{i}^{e}(s_{i},s_{-i}^{\prime
})(E_{-i}^{\prime })\alpha _{-i}(s_{-i}^{\prime },E_{-i}^{\prime
})(s_{-i},E_{-i})\text{.}
\end{equation*}

Let $p_{\alpha _{-i},s_{i}}=(p_{\alpha
_{-i},s_{i}}^{e}(s_{-i},E_{-i}))_{_{(s_{-i},E_{-i})\in S_{-i}\times \mathbb{E%
}_{-i}}}$ be the probabilities that agent $i$ of state type $s_{i}$ ascribes
to each message profile of agents $-i$ when agents $-i$ play the deception $%
\alpha _{-i}$. $P_{s_{i}}=(p_{\alpha _{-i},s_{i}})_{\alpha _{-i}\in \mathcal{%
A}_{-i}}$ represents the set of all possible vectors of probabilities as
described above for any deception the other agents can play. Finally, let $%
p_{s_{i}}^{\ast }$ be the vector of probabilities that agent $i$ of state
type $s_{i}$ ascribes to each message profile of agents $-i$ when agents $-i$
play truthfully.

We remind the reader that we write $s_{i}\not\rightarrow s_{i}^{\prime }$
when there is no deception $\alpha $ such that the type $s_{i}$ has the same
beliefs under $\alpha $ as she would if agents were playing according to the
truthtelling strategy $\sigma ^{\ast }$, and her true type were $%
s_{i}^{\prime }$. The challenge scheme is presented in the following lemma.

\begin{lemma}
\label{SEP_HYP_GEN}For any agent $i$, there is a finite set $%
B_{i}=\{b_{i}(s_{i},s_{i}^{\prime }):s_{i}\not\rightarrow s_{i}^{\prime }\}$
such that $b_{i}(s_{i},s_{i}^{\prime })\cdot p_{s_{i}}^{\ast }<0$ and $%
b_{i}(s_{i},s_{i}^{\prime })\cdot p>0~\forall p\in P_{s_{i}}$.
\end{lemma}

\begin{proof}
Since $s_{i}\not\rightarrow s_{i}^{\prime }$, $p_{s_{i}^{\prime }}^{\ast
}\notin P_{s_{i}}$. Further, $P_{s_{i}}$ is closed and convex. Then, from
the separating hyperplane theorem, there is a hyperplane $b_{i}$ such that $%
b_{i}\cdot p_{s_{i}^{\prime }}^{\ast }<0$ and $b_{i}\cdot p>0$ $\forall p\in
P_{s_{i}}$.

For a pair of state types $s_{i}$ and $s_{i}^{\prime }$ one such $b_{i}$ is
sufficient. Iterating over all possible pairs of state types $%
s_{i},s_{i}^{\prime }$ for which which $s\not\rightarrow s^{\prime }$ yields
the set $B_{i}$. Therefore, each $B_{i}$ is finite.
\end{proof}

Here we say that an agent $i$ of type $s_{i}$ cannot be \textquotedblleft
deceived\textquotedblright\ by the others in terms of her belief about the
message profiles of other agents. Since state types permit arbitrary
misreporting, if such a perfect deception is not possible, it must be
because the other agents are not possessed of the required deception in the
evidence dimension.

\paragraph{Mechanism}

Now, we construct a mechanism to implement social choice functions which
satisfy GNPD. First, we define the message space of the mechanism.

\subparagraph{Message Space}

The message space of the mechanism is given by $M_{i}=S_{i}\times
S_{i}\times E_{i}\times \lbrack 0,1]\times ...\times \lbrack 0,1]\times
S_{i} $, with a typical message being represented by $%
m_{i}=(s_{i}^{1},s_{i}^{2},E_{i},c_{i}^{1},c_{i}^{2},...,c_{i}^{k},c_{i}^{k+1},c_{i}^{k+2},s_{i}^{b}) 
$.\footnote{%
While this choice of message space yields an infinite mechanism, the
methodology of the proof is not similar to the use of integer or modulo
games. Indeed, this is for expositional simplicity, and a technique similar
to that used in the treatment of costly evidence in \cite{BCS2020} can be
used to discretize the values of $c_{i}^{1}$ and $c_{i}^{2}$.} The reader is
reminded that $k$ is the minimum number of steps in which it is possible to
reach any state type $s_{i}$ from any other state type $s_{i}^{\prime }$
unless it is not possible to do so in any number of steps. The profile of
messages $(m_{i})_{i\in \mathcal{I}}$ is denoted by $m$.

\subparagraph{Outcome}

The outcome is given by $f((s_{i}^{1})_{i\in \mathcal{I}})$ when the profile
of state types $s^{1}$ is such that $p^{s}(s^{1})>1-\varepsilon _{I}$.%
\footnote{%
The reader is reminded that $\varepsilon _{I}>0$ is the information size of
agents.} Otherwise, it is a preselected arbitrary outcome $a\in A$.

\subparagraph{Transfers}

The transfers involved in this mechanism are described below. First, we
define some preliminaries. We derive a profile $\bar{s}$ from $%
s^{2}=(s_{i}^{2})_{i\in \mathcal{I}}$ as follows: if $p^{s}(s^{2})=0$,
choose an arbitrary $\bar{s}\in S$ such that $p_{j,\bar{s}%
_{i}}^{e}=p_{j,s_{i}^{2}}^{e}$ $\forall j$ if it is feasible, otherwise let $%
\bar{s}=s^{2}$. If , $p^{s}(s^{2})>0$, let $\bar{s}=s^{2}$.

A message profile $m$ is said to be \emph{inconsistent} if $%
p^{s}((s_{i}^{1})_{i\in \mathcal{I}})<1-\varepsilon _{I}$ OR $s^{1}\neq \bar{%
s}$, otherwise it is \emph{consistent}. Given a profile of state types $%
s_{-i}$ which is consistent with the prior $p^{s}$, $\tilde{s}_{i}(s_{-i})$
represents the unique $s_{i}$ such that $p^{s}(s_{i},s_{-i})>1-\varepsilon
_{I}$.\footnote{%
Informational smallness with information size $\varepsilon _{I}$ guarantees
the existence of such $\tilde{s}_{i}$.}

A message profile $m$ is said to \emph{involve a refutation} if some $%
E_{i}\downarrow s^{1}$. A message profile $m$ is said to \emph{involve a bet}
if $s_{i}^{b}\neq s_{i}^{1}$ for some $i$. Further, given a message profile $%
m$, we say that \emph{only} agent $i$ has placed a bet if $s_{i}^{b}\neq
s_{i}^{1}$ while $s_{j}^{b}=s_{j}^{1}$ for each $j\neq i$. We say that an
agent $i$ of type $s_{i}$ \emph{expects inconsistency, refutation or a bet}
on some message $m_{i}$ if given the strategy of the agents, and her beliefs
about the types of the other agents, she expects with positive probability a
profile of opposing messages $m_{-i}$ such that $m=(m_{i},m_{-i})$ is
inconsistent, involves a refutation or a bet respectively. We now define the
transfers.

As in the base mechanism, the first transfer provides an agent the incentive
to submit evidence when there are issues with the submitted profile of
messages - i.e. there is inconsistency, or an active bet, or if a profile
has been refuted by an agent (or if these issues are expected at an order of
belief up to $k$). However, the incentive structure requires some precise
design which is formally defined below:

\begin{equation*}
\tau _{i}^{1}\left( m\right) =\left\{ 
\begin{tabular}{ll}
$\frac{\delta \zeta _{-i}(c)}{I}\times |E_{i}|$, & if $m$ involves
inconsistency, refutation, a bet, \\ 
& or if $c_{j}^{l}>0$ for any $l$ and some $j\neq i$; \\ 
$0$ & otherwise;%
\end{tabular}%
\right.
\end{equation*}

where $\delta >0$ is a small positive number, and $\zeta _{-i}(c)=\Pi
_{j\neq i}\Pi _{\hat{k}=1}^{k+2}c_{j}^{\hat{k}}$, i.e. the product of every
other agent's $c$ values.\footnote{%
The mechanism also implements with $\hat{c}=1$, but we add this component to
make the mechanism fully continuous, a point which is discussed later.}

When active, $\tau ^{1}$ provides an incentive for the agent to provide more
evidence. If any of the conditions in the first part of the definition are
met, i.e. if agent $i$ has refuted $s^{1}$ (when $s^{1}$ is consistent) or $%
m $ involves inconsistency, then for agent $i$, $\tau ^{1}$ provides this
incentive.

The second transfer extracts the first order beliefs of agents regarding
inconsistency, bets or refutations. First, define a random variable $%
R_{i}^{1}$as follows.

\begin{equation*}
R_{i}^{1}(m)=\left\{ 
\begin{tabular}{ll}
$1$, & if $m$ is inconsistent, involves a refutation by another agent $j\neq
i$, \\ 
& or a bet placed by another agent $j\neq i$ \\ 
$0$ & otherwise.%
\end{tabular}%
\right.
\end{equation*}

We remind the reader that we use the notation $Q(p,x)=[2p(x)-p\cdot p]$ to
score a prediction $p$ against a realization $x$ of a random variable $X$.
When $X$ is a Bernoulli random variable, i.e. when it takes values in the
set $\{0,1\}$, the scoring rule can be normalized such that the
prediction-realization pair $(0,0)$ (which stands for the prediction that
the probability of $X=1$ is zero with the realization $X=0$) is scored as $0$
in the following way - $\tilde{Q}(p,x)=Q(p,x)-Q(0,0)$, where with some abuse
of notation, $p$ is the predicted probability of the realization $x$. Then,
we define $\tau ^{2}$ as follows:

\begin{equation*}
\tau _{i}^{2,1}\left( m\right) =\frac{\delta }{I}\tilde{Q}%
(c_{i}^{1},R^{1}(m))\text{.}
\end{equation*}

Note that in equilibrium, this yields no transfer since each $c_{i}^{1}=0$
(agents do not expect $R_{i}^{1}=1$), and $R_{i}^{1}=0$.

The next transfer extracts agents' beliefs about $c_{i}^{k-1}$, or
(equivalently) their $k^{th}$ order beliefs about inconsistency, active bets
(by others), or refutations (also by others). Define a random variable $%
R^{k-1}$ which is set to $1$ if the message profile $m$ involves any $%
c_{.}^{k-1}>0$ and $0$ otherwise. Then,

\begin{equation*}
\tau _{i}^{2,k}\left( m\right) =\frac{\delta }{I}\tilde{Q}%
(c_{i}^{k},R^{k-1}(m))\text{.}
\end{equation*}

Once again, this transfer is designed to be off in equilibrium. This
transfer $\tau ^{2}$ is analogous to the reflection transfers used earlier
-- it brings beliefs about (potentially) low probability events into the
message space to enable the designer to incentivize evidence presentation
contingent on these events.

The next transfer asks agents to predict the evidence presentation of other
agents using a scoring rule, and has a term appended to ensure no
equilibrium transfers. $Q(s_{i},E_{-i})$ gives the value of the scoring rule
when other agents have presented the evidence profile $E_{-i}$ and agent $i$%
's prediction is $p_{s_{i}}^{e\ast }$.\footnote{%
The reader is reminded that $p_{s_{i}}^{e\ast }$ is the vector of
probabilities that agent $i$ of state type $s_{i}$ ascribes to each evidence
profile $E_{-i}$ of agents $-i$ when agents $-i$ reveal their type
truthfully.} With $\tau \geq 1$, define

\begin{equation*}
\tau _{i}^{3}(m)=\frac{\delta }{I}[Q(s_{i}^{2},E_{-i})-Q(\tilde{s}%
_{i}(s_{-i}^{1}),E_{-i})]\text{.}
\end{equation*}

The next transfer is a cross check between the other agents' second reports
and agent $i$'s first report. Recall that $\bar{s}$ represents a consistent
version of any report profile $s^{2}$ if such a profile is possible. Define

\begin{equation*}
\tau _{i}^{4}(m)=\left\{ 
\begin{tabular}{ll}
$-1$, & if $s_{i}^{1}\neq \tilde{s}_{i}(\bar{s}_{-i})$ \\ 
$0$ & otherwise.%
\end{tabular}%
\right.
\end{equation*}

Note that for each $\bar{s}$, there exists a unique $\tilde{s}_{i}(\bar{s}%
_{-i})$ under information smallness with size $\varepsilon _{I}$.

Finally, the following transfer allocates the gains or losses from any bets
which are placed.

\begin{equation*}
\tau _{i}^{5}(m)=\left\{ 
\begin{tabular}{ll}
$\frac{\delta }{I}b_{i}(s_{i}^{1},s_{i}^{b})(s_{-i},E_{-i})$, & if $%
s_{i}^{b}\neq s_{i}^{1}$ and $s_{i}^{b}\not\rightarrow s_{i}^{1}$ \\ 
$0$ & otherwise.%
\end{tabular}%
\right.
\end{equation*}

We remind the reader that the bet $b_{i}(s_{i}^{1},s_{i}^{b})$ is defined in
Lemma \ref{SEP_HYP_GEN}.

Transfers $\tau ^{1}$, $\tau ^{2,.}$, $\tau ^{3}$ and $\tau ^{5}$ are at a
scale of $\frac{\delta }{I}$, so that a transfer of $1$ dollar can dominate
both the outcome and any losses from these transfers.

Before the main proof, we draw the reader's attention to two critical points
- (i) an agent's evidence submission only affects their utility through the
transfer $\tau ^{1}$; and (ii) the agent's choice of $c^{1}$, $c^{2}$ etc.
only affects their utility through $\tau ^{2}$ since $\tau ^{1}$ only
depends on the $c^{.}$ submissions of \emph{other }agents.

\paragraph{Proof of Continuity}

\begin{claim}
Under constant preferences, the proposed mechanism yields fully continuous
utilities for the agents under the discrete metric on the state type space
and evidence space and the usual metric on $c_{i}^{k}$ and transfers.
\end{claim}

\begin{proof}
When the preferences are constant over the space of outcomes, agent's
utilities are fully determined by the transfers. Consider the message space
for agent $i$ which is given by $%
(s_{i}^{1},s_{i}^{2},E_{i},c_{i}^{1},c_{i}^{2},...,c_{i}^{k},c_{i}^{k+1},c_{i}^{k+2},s_{i}^{b}) 
$. Components other than $c_{i}^{.}$ are either state type or evidence
reports on which any function is trivially continuous under the discrete
metric. It remains to show that the transfers are continuous in $c_{i}^{.}$.
The message components $c_{i}^{.}$ affect agent utilities only through $\tau
^{1}$ and $\tau ^{2}$. The second transfer $\tau ^{2}$ is a quadratic
scoring rule, which is continuous in its inputs. Further, an agent $i$, when
making a small change to any $c_{i}^{.}$ cannot affect her own utility
through $\tau ^{1}$. She can only affect the utilities of the others.
Consider an agent $j\neq i$ such that $\tau _{j}^{1}=\frac{\delta \hat{c}_{i}%
}{I}\times |E_{i}|$ where $\hat{c}=\Pi _{l\neq i}\Pi _{\hat{k}%
=1}^{k+2}c_{l}^{k}$. Since $\tau ^{1}$ is a product of the $c$ values of the
other agents multiplied with a constant, it is clear that this function too
is continuous in $c_{i}^{.}$.
\end{proof}

\paragraph{Proof of Implementation}

Fix an arbitrary profile of state types given by $s=(s_{i})_{i\in \mathcal{I}%
}$. The proof explicitly involves belief hierarchies, so that it is
convenient to coin the phrase \emph{+expects} to denote the phrase
\textquotedblleft expects with positive probability\textquotedblright . For
instance,\textquotedblleft every opposing state type $s_{j}$ that type $%
s_{i} $ believes exists with positive probability\textquotedblright\ may
alternately be written as \textquotedblleft every $s_{j}$ which $s_{i}$
+expects\textquotedblright . On the other hand, the word \textquotedblleft
expects\textquotedblright\ is used to denote \textquotedblleft believes with
probability 1\textquotedblright . First we prove two lemmas which find use
in multiple places in the proof.

\begin{lemma}
\label{Lemma_UST}If an agent $i$ of state type $s_{i}$ expects
inconsistency, refutation or an active bet in some message $m_{i}$ on the
support of her mixed strategy, then (i) each type $s_{k}$ which is +expected
by any $s_{j}$ which (in turn) is +expected by $s_{i}$ presents maximal
evidence, (ii) $s_{i}$ sets either $c_{i}^{1}>0$ or $c_{i}^{2}>0$ in $m_{i}$.
\end{lemma}

\begin{proof}
If agent $i$ of state type $s_{i}$ expects inconsistency or a refutation by
another agent or a bet by another agent on a message $m_{i}$, it is optimal
for her to set $c_{i}^{1}>0$ on $m_{i}$ (owing to $\tau ^{2,1}$). Every type 
$s_{j}$ of every other agent $j$ which she +expects in turn +expect her and
thus set $c_{j}^{2}>0$ (for optimality under $\tau ^{2,2}$). Further, these
types $s_{j}$ also present maximal evidence since they +expect $s_{i}$ and
thus expect $c_{i}^{1}>0$. Further, agent $i$ expects that any types $s_{k}$
of agents $k\neq j$ which these $s_{j}$ +expect in turn expect $c_{j}^{2}>0$
and thus provide maximal evidence (owing to $\tau ^{1}$).

If instead agent $i$ of type $s_{i}$ performs a refutation or places a bet
using $s_{i}^{b}$, then each type $s_{j}$ which $s_{i}$ +expects in turn
+expects the refutation or the bet and hence sets $c_{j}^{1}>0$\ for
optimality under $\tau ^{2}$. Each type $s_{k}$ which is +expected by these
types $s_{j}$ in turn expect $c_{j}^{1}>0$ and thus present maximal evidence
for optimality under $\tau ^{1}$. Further, since the types $s_{j}$ which are
+expected by $s_{i}$ set $c_{j}^{1}>0$, $s_{i}$ sets $c_{i}^{2}>0$ for
optimality under $\tau ^{2}$.
\end{proof}

\begin{figure}[h!]
	\centering
	\includegraphics[width=\linewidth]{mech_des_v2.eps}
	\caption{An illustration of Lemmas 4 and 5.}
\end{figure}


\begin{lemma}
\label{Lemma_UST1}If an agent $i$ of state type $s_{i}$ (i) expects each
type $s_{k}$ which is +expected by any $s_{j}$ which (in turn) is +expected
by $s_{i}$ to present maximal evidence and (ii) sets $c_{i}^{1}>0$ or $%
c_{i}^{2}>0$ in some of her messages, then type $s_{i}$ presents the truth
in $s_{i}^{1}$ in every message.
\end{lemma}

\begin{proof}
Since agent $i$ of type $s_{i}$ expects every $s_{j}$ which she +expects to
(in turn) expect that all opposing evidence $E_{-j}$ is maximal, she expects
that $s_{j}^{2}$ must be such that $p_{k,s_{j}^{2}}^{e}=p_{k,s_{j}}^{e}$ $%
\forall k\neq j$ for optimality under $\tau ^{3}$. Further, type $s_{i}$
also chooses $s_{i}^{2}$ such that $p_{j,s_{i}^{2}}^{e}=p_{j,s_{i}}^{e}$ $%
\forall j\neq i$ since each opposing type she +expects also +expects $s_{i}$
and thus expects that either $c_{i}^{1}>0$ or $c_{i}^{1}>0$ (by hypothesis)
and thus presents maximal evidence. Then, from information smallness, type $%
s_{i}$ expects with probability greater than $1-\varepsilon _{I}$ that there
is a unique choice of $\bar{s}$ available, and for this $\bar{s}$, $\tilde{s}%
_{i}(\bar{s}_{-i})=s_{i}$.\footnote{%
Recall that while other profiles may happen with small proabbility,
consitency is defined as having a probability of $1-\varepsilon _{I}$ or
more.} Therefore, it is a profitable deviation to to switch any untrue $%
s_{i}^{1}$ to $s_{i}$, i.e. the truth.
\end{proof}

\begin{remark}
An alternate view of Lemmas \ref{Lemma_UST} and \ref{Lemma_UST1} is that
Lemma \ref{Lemma_UST} tells us that for any type $s_{i}$, the expectation of
inconsistency, refutation or an active bet has two consequences - (i) it
incentivizes her to set $c_{i}^{1}>0$, and (ii) it causes her to expect that
types twice removed from her in her hierarchy of beliefs present maximal
evidence. Lemma \ref{Lemma_UST1} tells us that these two consequences in
turn have the effect of inducing agent $i$ to report truthfully in $%
s_{i}^{1} $. Visually, Figure \ref{L45OA} shows these interactions as an
input output map.
\end{remark}


 \begin{figure}[h!]
	 \centering
	 \includegraphics[width=\linewidth]{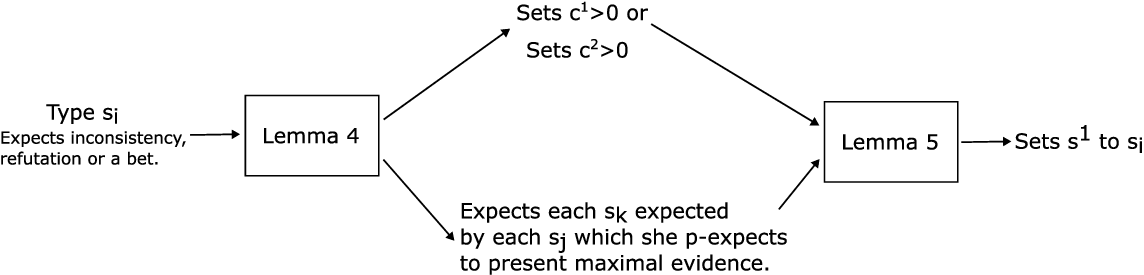}
	 \caption{Input Output Map: Lemmas 4 and 5}
	 \label{L45OA}
 \end{figure}

\begin{claim}
\label{Nomix}In any equilibrium, no agent mixes in $s^{1}$.
\end{claim}

\begin{proof}
Suppose for contradiction that agent $i$ of type $s_{i}$ mixes in $s_{i}^{1}$%
. From information smallness, there is a unique profile of opposing state
types $s_{-i}$ which type $s_{i}$ expects with a probability at least $%
1-\varepsilon _{I}$. Then, she expects that at most one of her state type
reports is consistent with the opposing messages so that type $s_{i}$
expects inconsistency in some of her messages. Lemmas \ref{Lemma_UST} and %
\ref{Lemma_UST1} in turn yield that she sets $s_{i}^{1}=s_{i}$ in every
message, so that it is not possible that she mixes, which is a contradiction.
\end{proof}

\begin{claim}
\label{Cons}In every profile of messages $m$ on the support of any
equilibrium, $s^{1}$ is such that $p^{s}(s^{1})>0$.
\end{claim}

\begin{proof}
Suppose not, that is, there is a message profile $m$ on the support of some
equilibrium such that $p^{s}(s^{1})=0$. Suppose that this message profile
occurs with positive probability when the profile of state types is given by 
$s=(s_{i})_{i\in \mathcal{I}}$. Consider an arbitrary agent $i$ of type $%
s_{i}$. This type $s_{i}$ +expects the profile $s$, and thus expects
inconsistency in some of her messages (since $1-\varepsilon _{I}>0$).%
\footnote{%
Note that $p^{s}(s^{1})=0$ implies \emph{inconsistency} under our definition.%
} Therefore she sets $s_{i}^{1}=s_{i}$ in every message (Lemmas \ref%
{Lemma_UST} and \ref{Lemma_UST1}). Since $i$ and $s_{i}$ are arbitrarily
chosen, each agent presents the truth in $s^{1}$, which contradicts $%
p^{s}(s^{1})=0$, since $p^{s}(s)>0$.
\end{proof}

\begin{claim}
\label{Noref}In every profile of messages $m$ on the support of any
equilibrium, $s^{1}$ is not a refutable lie.
\end{claim}

\begin{proof}
Suppose for contradiction that there is a message profile $m$ on the support
of some equilibrium in which $s^{1}$ a refutable lie, and that this message
profile $m$ occurs with positive probability when the profile of state types
is given by $s=(s_{i})_{i\in \mathcal{I}}$. Without loss of generality,
suppose type $s_{i}$ may have an article of evidence $E_{i}$ with positive
probability such that $E_{i}\downarrow s^{1}$. Optimality under $\tau ^{1}$
implies that $s_{i}$ presents this evidence when endowed with it. Then, for
each agent $j\neq i$, the type $s_{j}$ expects a refutation, sets $%
c_{j}^{1}>0$, and from Lemmas \ref{Lemma_UST} and \ref{Lemma_UST1} presents
the truth.

Further, for optimality under $\tau ^{2}$, $s_{i}$ sets $c_{i}^{2}>0$. This
causes each type $s_{j}^{\prime }$ for each agent $j\neq i$ which is
+expected by $s_{i}$ to set $c_{j}^{3}>0$ which in turn causes each type $%
s_{k}$ which is (in turn) +expected by these $s_{j}$ to present maximal
evidence. This satisfies the antecedent of Lemma \ref{Lemma_UST1} for type $%
s_{i}$, so that it may only present the truth in $s^{1}$. In combination
with the fact that each $s_{j}$ must present the truth, this yields a
contradiction - if $s_{i}$ and each $s_{j}$ for $j\neq i$ presents the truth
in $s^{1}$, then $s^{1}$ cannot be a refutable lie.
\end{proof}

\begin{claim}
\label{Nononref}In every profile of messages $m$ on the support of any
equilibrium, in state $s=(s_{i})_{i\in \mathcal{I}}$, $s^{1}$ is not a
nonrefutable lie such that $f(s^{1})\neq f(s)$.
\end{claim}

\begin{proof}
Suppose to the contrary that when the true profile of state types is $s$, $%
s^{1}$ is a nonrefutable lie such that $f(s^{1})\neq f(s)$ for some profile
of messages $m$ in the support of an equilibrium. From the fact that $f$
satisfies GNPD, we have that there is a type $\tilde{s}_{i}$ of some agent $%
i $ which can be reached in $k$ or fewer steps starting from some $s_{j}$ ($%
j\in \mathcal{I}$) such that the belief of this type on the opposing message
components $(s_{-i}^{1},E_{-i})$ under the strategies being played is
different from its true beliefs $p_{\tilde{s}_{i}}^{\ast }$.\footnote{%
We remind the reader that $k$ is the minimum number of steps in which it is
possible to reach any state type $s_{i}$ from any other state type $%
s_{i}^{\prime }$ unless it is not possible to do so in any number of steps.}
It is optimal for this agent to place a bet by setting $s_{i}^{b}=\tilde{s}%
_{i}$ since $s_{i}^{b}$ influences type $\tilde{s}_{i}$'s utility only
through $\tau ^{1}$ and $\tau ^{5}$ and $\tau ^{1}$ yields a reward from
betting, and $\tau ^{5}$ also yields an expected reward. Since $\tilde{s}%
_{i} $ appears in the $k^{\text{th}}$ order belief of $s_{j}$, $c_{j}^{k}>0$%
. Then, each type $s_{l}$ that $s_{j}$ +expects presents maximal evidence
and sets $c_{l}^{k+1}$ and each type that these $s_{l}$ in turn +expect
present maximal evidence, so that each such $s_{l}$ presents messages in $%
s^{2}$ which satisfy $p_{n,s_{l}^{2}}^{e}=p_{n,s_{l}}^{e}$ $\forall n\neq l$%
. Further, since each $s_{l}$ presents maximal evidence, $s_{j}$ presents
messages in $s^{2}$ which satisfy $p_{l,s_{j}^{2}}^{e}=p_{l,s_{j}}^{e}$ $%
\forall l\neq j$. Together, this implies that for optimality under $\tau
^{4} $, $s_{j}^{1}$ is the truth.

Every other $s_{i}$ for $i\neq j$ sets $c_{i}^{k+1}>0$ since $c_{j}^{k}>0$,
and each $s_{j}^{\prime }$ which is +expected by $s_{i}$ presents maximal
evidence and sets $c_{j}^{k+2}>0$, so that each each $s_{l}$ which is, in
turn +expected by these $s_{j}^{\prime }$ present maximal evidence. This
implies that each $s_{j}^{\prime }$ presents messages in $s^{2}$ which
satisfy $p_{l,s_{j}^{2}}^{e}=p_{l,s_{j}^{\prime }}^{e}$ $\forall l\neq j$.
Further, $s_{i}$ also presents messages in $s^{2}$ which satisfy $%
p_{j,s_{i}^{2}}^{e}=p_{j,s_{i}}^{e}$ $\forall j\neq i$. Together, this
implies that for optimality under $\tau ^{4}$, $s_{i}^{1}$ is the truth.

However, this means that each $s_{i}$ in the state-type profile $s$ presents
the truth, so that $s$ cannot be a nonrefutable lie.
\end{proof}

\begin{claim}
The mechanism yields the outcome $f(s)$, and the expected transfers are
vanishing in equilibrium in a large economy.
\end{claim}

\begin{proof}
From the previous claims, in any state $s$, $s^{1}$ is such that $%
f(s^{1})=f(s)$, since $s^{1}$ is consistent (Claim \ref{Cons}) but not with
either a refutable or a non-refutable lie which induces a different outcome
(Claims \ref{Noref} and \ref{Nononref}). It remains to show that the
expected transfer is zero in any equilibrium when $I\rightarrow \infty $. It
is easy to see that each transfer other than $\tau ^{4}$ vanishes when $%
I\rightarrow \infty $.\footnote{%
In each case, the numerator is a finite number while the denominator is $I$.}

To show that $\tau ^{4}$ yields a zero expected transfer, we will first show
that for any realization of type profiles, $\tau ^{4}$ is active with a
probability of at most $\varepsilon _{I}$. If $\tau ^{4}$ is active on a
message profile $m$ which occurs on the support of an equilibrium in some
state $s$, then it must be the case that for some type $s_{i}$ of some agent 
$i$, $s_{i}^{1}\neq \tilde{s}_{i}(\bar{s}_{-i})$, i.e. there is
inconsistency. Consider an arbitrary type $s_{j}$ of some arbitrary agent $j$%
. Type $s_{j}$ +expects the profile $s$ and thus +expects inconsistency and
from Lemma \ref{Lemma_UST}, presents maximal evidence and sets $c_{j}^{1}>0$
or $c_{j}^{2}>0$. Further, $s_{j}$ expects all opposing types to +expect
her, and thus +expect $c_{j}^{1}>0$ or $c_{j}^{2}>0$, and thus present
maximal evidence. Type $s_{j}$ therefore sets $s_{j}^{2}$ such that $%
p_{k,s_{j}^{2}}^{e}=p_{k,s_{j}}^{e}$ $\forall k\neq j$ for optimality under $%
\tau ^{3}$. But, $j$ and $s_{j}$ are arbitrarily chosen and if all types of
all agents report a type in their second report which induces the same
belief on the opposing evidence as does the truth, then $\tau _{i}^{4}$ can
be active (nonzero) only if the realized type profile $s$ is such that for
some agent $i$, $s_{i}\neq \tilde{s}_{i}(s_{-i})$. Consider a partition of $%
S $ into $S_{1}$ and $S_{2}$ such that $S_{2}=\{s\in S:\exists i$ s.t. $%
s_{i}\neq \tilde{s}_{i}(s_{-i})\}$, and $S_{1}=S\backslash S_{2}$. The
expected transfer from $\tau ^{4}$ can then be written as%
\begin{equation*}
E[\tau ^{4}]=p^{s}(S_{1})\times 0+\sum_{s\in S_{2}}p^{s}(s)\sum_{E\in 
\mathbb{E}}p^{e}(s)(E)\sum_{m\in M}\sigma (s,E)(m)\tau ^{4}(m)\text{.}
\end{equation*}

Note that $\sum_{s\in S_{2}}p^{s}(s)<\varepsilon _{I}$ since agents are
informationally small with information size $\varepsilon _{I}$ and $%
\varepsilon _{I}\rightarrow 0$ as $I\rightarrow \infty $ in a large economy.
Since $\tau ^{4}(m)$ is finite for each $m$, $\lim_{I\rightarrow \infty
}E[\tau ^{4}]=0$, and thus $\lim_{I\rightarrow \infty }E[\tau ]=0$.
\end{proof}

\begin{claim}
When information is nonexclusive, the mechanism yields the outcome $f(s)$,
and there are no transfers in equilibrium.
\end{claim}

\begin{proof}
Under the stronger assumption of nonexclusive information, Lemmas \ref%
{Lemma_UST} and \ref{Lemma_UST1}, and Claims \ref{Nomix} through \ \ref%
{Nononref} remain true. This yields that $s^{1}$ is such that $f(s^{1})=f(s)$
so that the mechanism yields the outcome $f(s)$ when the profile of state
types is $s$. It remains to establish that there are no transfers. First, we
establish that there can be no inconsistency.\footnote{%
Recall that a message profile $m$ is said to be \emph{inconsistent} if $%
p^{s}((s_{i}^{1})_{i\in \mathcal{I}})=0$ OR $s^{1}\neq s^{2}$.} If $%
s_{i}^{1}\neq \tilde{s}_{i}(\bar{s}_{-i})$ for some $i$ on a message profile 
$m$ in the support of some equilibrium, then since type $s_{i}$ expects the
profile $s$, and hence expects inconsistency, by Lemma \ref{Lemma_UST}, she
expects every type $s_{j}$ of other agents $j$ which she +expects to present
maximal evidence, so that (by $\tau ^{3}$) $s_{i}^{2}$ is such that $%
p_{j,s_{i}^{2}}^{e}=p_{j,s_{i}}^{e}$ $\forall j\neq i$. Consider any other
agent $j$ and the type $s_{j}$. This type +expects the profile $s$ and
therefore +expects inconsistency. By Lemma \ref{Lemma_UST}, $s_{j}$ expects
every type of other agents which she +expects to present maximal evidence,
so that $s_{j}^{2}$ is such that $p_{k,s_{j}^{2}}^{e}=p_{k,s_{j}}^{e}$ $%
\forall k\neq j$. This yields that there is a unique $\bar{s}$ ($=s$) which
is consistent with $s^{2}$, so that $s_{i}^{1}=\tilde{s}_{i}(\bar{s}_{-i})$,
a contradiction. Thus, agents are consistent in any message $m$. This yields
that $\tau ^{3}$ and $\tau ^{4}$ are inactive in equilibrium. If, in state $%
s $, any agent $i$ places a bet in equilibrium, other agents expect the bet
with positive probability (since they +expect the profile $s$) and present
maximal evidence. Therefore, it is not optimal to place a bet, so that $\tau
^{5}$ is inactive in equilibrium. Further, since no refutations or bets are
possible, agents have a strict incentive to report $%
c_{i}^{1}=c_{i}^{2}=...=c_{i}^{k}=0$. Therefore, $\tau ^{1}$, $\tau ^{2,.}$,
and $\tau ^{3}$ are also inactive in equilibrium.
\end{proof}

\subsection{\label{VirtualImplementation}Proof of Theorem \protect\ref%
{rat_imp}}

The reader will recall that higher-order measurability requires that the
social choice function $f$ be measurable on a partition of $T$ where each
cell of the partition is such that if $t$ and $t^{\prime }$ belong to the
same cell, then each agent has the same belief hierarchy in both $t$ and $%
t^{\prime }$. We will show that higher-order measurability is necessary for
rationalizable implementation with arbitrarily small transfers regardless of
the mechanism in use. Accordingly, fix a mechanism $\mathcal{M}=(M,g,\tau )$.

A general type space model (see for instance \cite{P2008}) is defined as $%
\mathcal{T}=(T_{i},\hat{\theta}_{i},q_{i})_{i\in \mathcal{I}}$. The function 
$\hat{\theta}_{i}$ maps each type to a payoff type. A payoff type consists
of any constituent component of an agent's type which affects how she
evaluates an allocation. In general, such models assume that all the actions
are available to the agent in each state. In our case though, evidence is
involved, which may vary in its availability with the state. Recalling that
we work with utilities which are bounded by $1$ dollar, we associate the
following cost structure with evidence in order to model evidence
availability. Suppose the mechanism $\mathcal{M}$ induces a maximum transfer
of $\bar{U}$. If an agent $i$ of type $t_{i}$ presents only such articles as
are available to her in order to obtain an allocation $a$ and transfer $\tau
_{i}$, then her utility is given by $u_{i}(a,\tau _{i},t)$. If an agent
\textquotedblleft presents\textquotedblright\ an unavailable article of
evidence to do the same, her utility is given by $u_{i}(a,\tau _{i},t)-2\bar{%
U}-1$. Since it is never worthwhile to obtain an outcome if it involves a $2%
\bar{U}-1$ dollar cost in order to do so, an agent never does so, so that we
have effectively modelled evidence availability. Further, we avoid depending
on preference variation for implementation, so that $u_{i}(a,\tau
_{i},t)=u_{i}(a,\tau _{i},t^{\prime })$ $\forall t,t^{\prime }\in T$, i.e.
preferences are constant. With these two restrictions then, only evidence is
payoff relevant, so that in our case, $\hat{\theta}_{i}:T_{i}\rightarrow 
\mathbb{E}_{i}$, so that $\hat{\theta}_{i}(t_{i})=\hat{E}_{i}(t_{i})$. Then,
a general type space model yields our model $\mathcal{T=(}T_{i},\hat{E}%
_{i},q_{i})_{_{i\in \mathcal{I}}}$.

That higher-order measurability is necessary for implementation in
rationalizable strategies in then immediate from Proposition 2 in \cite%
{P2008} - if the set of rationalizable strategies is identical for each type
of each agent, then different outcomes cannot be achieved in different
states. Alternately, any profile of strategies that is rationalizable at $t$
is also rationalizable at $t^{\prime }$ if $\hat{\pi}_{i}^{k}(t_{i})=\hat{\pi%
}_{i}^{k}(t_{i}^{\prime })$ for all $k$.

Now, we establish that the setup we worked with earlier can be obtained as a
(further) special case of the model in Section 7. First, we set $%
T_{i}=S\times \mathbb{E}_{i}$ so that the type of each agent is defined. $%
\hat{E}_{i}(s,E_{i})$ is therefore simply given by $E_{i}$. Note that $p$
was defined as a prior in the previous setting, whereas here $q_{i}$
represents a posterior belief once an agent is informed of her type. Since
evidence is uncorrelated, we have

\begin{equation*}
q_{i}(s_{i},E_{i})((s_{j},E_{j}))_{j\neq i}=\left\{ 
\begin{tabular}{ll}
$\Pi _{j\neq i}p_{j}(s)(E_{j})$, & if $s=s_{i}=s_{j}\forall j\neq i$ \\ 
$0$ & otherwise.%
\end{tabular}%
\right.
\end{equation*}

The map $\hat{\theta}_{i}$ now maps $S\times \mathbb{E}_{i}$ to $\mathbb{E}%
_{i}$ as we continue to assert constant preferences and with the
aforementioned cost structure, only evidence is payoff relevant. The social
choice function is defined as follows.%
\begin{equation*}
f((s_{i},E_{i})_{i\in \mathcal{I}})=\left\{ 
\begin{tabular}{ll}
$f(s)$, & if $s=s_{i}$ $\forall i\in \mathcal{I}$ \\ 
$f(s_{1})$ & otherwise.%
\end{tabular}%
\right.
\end{equation*}%
This definition does not lead to difficulties since only such profiles occur
with positive probability where $s_{i}=s$ for all $i$. Therefore, we have
derived the previous setup as a simplification of this one.

\bigskip

Now we will prove that evidence incentive compatibility is necessary for
rationalizable implementation with arbitrarily small transfers. If $f$ is
rationalizably implementable with arbitrarily small transfers, there is a
mechanism $\mathcal{M}=(M,g,\tau )$ which implements $f$ such that for any
profile of rationalizable messages $m\in R^{G}\left( t\right) $ of the game $%
G\left( \mathcal{M},\mathcal{T},u\right) $, we have $g(m)=f(t)$. Further, $%
\mathcal{M}$ can be chosen so that $|\tau _{i}(m)|\leq \frac{\varepsilon }{2}
$ for every message profile $m$. The game $G\left( \mathcal{M},\mathcal{T}%
,u\right) $ must also have a Bayesian Nash equilibrium. Such an equilibrium
also must have outcome $f(t)$ (transfers notwithstanding), since every Nash
Equilibrium survives the iterated elimination of strictly dominated
strategies and thus constitutes a rationalizable message profile. Suppose
this equilibrium is given by $\sigma $. \ Consider an alternate mechanism $%
\mathcal{M}^{\prime }$ where agents merely report their type $t_{i}$
including the evidence and the designer plays their strategy from $\sigma $
on their behalf. The outcome and transfers are chosen to be identical to
those induced by $\mathcal{M}$. Such a mechanism must have a truth-telling
equilibrium, say $\sigma ^{\prime }$. That is, for each agent $i$, and each
type $t_{i}$, this type must have the incentive to truthfully report her
type $t_{i}$ if she knows others are reporting their type accurately. That
is,

\begin{equation*}
\sum_{t_{-i}\in T_{-i}}q_{i}(t_{i})(t_{-i})u_{i}(f(t_{i},t_{-i}),\tau
_{i}(t_{i},t_{-i}),(t_{i},t_{-i}))\geq \sum_{t_{-i}\in
T_{-i}}q_{i}(t_{i})(t_{-i})u_{i}(f(t_{i}^{\prime },t_{-i}),\tau
_{i}(t_{i}^{\prime },t_{-i}),(t_{i},t_{-i}))\text{.}
\end{equation*}

The above must be satisfied with for all $\varepsilon >0$. Then, taking
limits with $\varepsilon \rightarrow 0$, we obtain

\begin{equation*}
\sum_{t_{-i}\in T_{-i}}q_{i}(t_{i})(t_{-i})\bar{u}%
_{i}(f(t_{i},t_{-i}),(t_{i},t_{-i}))\geq \sum_{t_{-i}\in
T_{-i}}q_{i}(t_{i})(t_{-i})\bar{u}_{i}(f(t_{i}^{\prime
},t_{-i}),(t_{i}^{\prime },t_{-i}))\text{,}
\end{equation*}

\noindent which shows that $f$ satisfies evidence incentive compatibility.

The sufficiency proof is by construction of an implementing mechanism.
Consider the following mechanism:

\textbf{Message Space:} $M_{i}=\mathbb{E}_{i}\times T_{i}\times T_{i}\times
...\times T_{i}$ ($\bar{k}+J+1$) times, where $\bar{k}$ is the minimum order
of belief at which any pair of states with different $f$-optimal outcomes
can be separated by some agent. A typical message for agent $i$ is denoted
by \thinspace $m_{i}=(E_{i},(t_{i}^{0,k})_{k=1}^{\bar{k}%
+1},(t_{i}^{j})_{j=1}^{J})$. Accordingly, $m=(m_{i})_{i\in \mathcal{I}}$ is
a profile of messages.

\textbf{Outcome:} The outcome is given by $\frac{1}{J}\sum_{j=1}^{J}f(t^{j})$%
.

\textbf{Transfers:}

The mechanism uses the following transfers, which we document below. Let $%
\beta $ is a small positive number. Define

\begin{equation*}
\tau _{i}^{1}\left( m\right) =\beta \times |E_{i}|\text{.}
\end{equation*}

This transfer incentivizes agents to submit all their evidence. Further,
define

\begin{equation*}
\tau _{i}^{2,1}\left( m\right) =\beta \times \lbrack 2\hat{\pi}%
_{i}^{1}(t_{i})(E_{-i})-\hat{\pi}_{i}^{1}(t_{i})\cdot \hat{\pi}%
_{i}^{1}(t_{i})]\text{;}
\end{equation*}%
\begin{equation*}
\tau _{i}^{2,k}\left( m\right) =\beta \times \lbrack 2\hat{\pi}%
_{i}^{k}(t_{i}^{k})(\hat{\pi}_{-i}^{k-1}(t_{-i}^{k-1}),\hat{\pi}%
_{-i}^{k-2}(t_{-i}^{k-1}),...,\hat{\pi}_{-i}^{0}(t_{-i}^{k-1}))-\hat{\pi}%
_{i}^{k}(t_{i}^{k})\cdot \hat{\pi}_{i}^{k}(t_{i}^{k})],\forall k=2,...,\bar{k%
}+1.
\end{equation*}

This transfer is a set of proper scoring rules which incentivizes agents to
predict other agent's lower order beliefs and their types using the report $%
t_{i}^{k}$. Given the above transfer definitions, $\beta $ can be chosen to
be as small as required to meet any prescribed transfer bound $\varepsilon $%
. This is critical, since these transfers are active for any profile of
rationalizable strategies.

From the above transfers, it is easy to see that the strategy that maximizes
agent $i$'s utility from $\tau ^{1}$ and $\tau ^{2}$ is to present all her
evidence and report truthfully in each $t_{i}^{0,k}$.

Given $\beta $, define as $\bar{\beta}_{i}$ the minimum loss in utility over
any type profile that agent $i$ endowed with any evidence can induce owing
to $\tau ^{1}$ and $\tau ^{2}$ by changing either her evidence report or her
first $\bar{k}+1$ type reports to be different from the truth.

Now, we define the transfers used to discipline the last $J$ reports, which
are used to set the outcome. The following transfer penalizes an agent for
being among the first deviants from $t^{0,\bar{k}+1}$.

\begin{equation*}
\tau _{i}^{3}\left( m\right) =\left\{ 
\begin{tabular}{ll}
$-\bar{\tau}^{3}$, & if $\exists j\in \{1,..,J\}$ s.t. $t_{i}^{j}\neq t_{i}^{%
\bar{k}+1}$ and $t^{h}=t^{0,\bar{k}+1}$ $\forall h<j;$ \\ 
0 & otherwise.%
\end{tabular}%
\right.
\end{equation*}

Finally, we define the following check between an agent's report $t_{i}^{0,%
\bar{k}+1}$ and $t_{i}^{j}$.

\begin{equation*}
\tau _{i,j}^{4}\left( m\right) =\left\{ 
\begin{tabular}{ll}
$-\bar{\tau}^{4}$, & if $t_{i}^{0,\bar{k}+1}\neq t_{i}^{j};$ \\ 
0 & otherwise.%
\end{tabular}%
\right.
\end{equation*}

where $\bar{\tau}^{4}>0$. This transfer is repeated for every round.

We scale the transfers as follows. First, notice that for any given $%
\varepsilon $, $\beta >0$ can be chosen so that the overall value of the
transfers $\tau ^{1}$ and $\tau ^{2}$ does not exceed $\varepsilon $. This
fixes each $\bar{\beta}_{i}$, and thereby $\min_{i}\bar{\beta}_{i}$. We now
choose a large enough number of rounds ($J$) such that $\min_{i}\bar{\beta}%
_{i}>\frac{1}{J}$. This allows us to choose $\bar{\tau}^{3}$ such that $%
\min_{i}\bar{\beta}_{i}>\bar{\tau}^{3}>\frac{1}{J}$. Now, since $\bar{\tau}%
^{4}$ is only required to be positive, a sufficiently small value can be
chosen so that $\min_{i}\bar{\beta}_{i}>\bar{\tau}^{3}+J\bar{\tau}^{4}>\frac{%
1}{J}$.

Now, we will prove that this mechanism implements $f$ in rationalizable
strategies with arbitrarily small transfers. In the following proof, we
denote the true state by $t^{\ast }=(t_{i}^{\ast })_{i\in \mathcal{I}}$.

\begin{claim}
\label{allevidence}Each agent presents all her evidence, i.e. $\hat{E}%
_{i}(t_{i}^{\ast })$.
\end{claim}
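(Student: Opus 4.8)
The plan is to show that, for every agent $i$, presenting the full endowment $\hat{E}_i(t_i^\ast)$ in the evidence coordinate is a strictly dominant action, so that it is played in every rationalizable message profile. This is precisely the ``small incentive to extract evidence'' step flagged in the intuition, and it should not require higher-order measurability or any coordination across agents.

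First I would isolate how agent $i$'s payoff depends on her own evidence report $E_i$, holding fixed all her other reports and all messages of the other agents. Going through the transfer families: $\tau_i^1(m)=\beta\times|E_i|$ depends on $E_i$; $\tau_i^{2,1}$ scores $i$'s prediction $\hat{\pi}_i^1(t_i^{0,1})$ of the \emph{others'} evidence profile against the realized $E_{-i}$, and hence depends on $E_{-i}$ and on $i$'s own type report but not on $E_i$; $\tau_i^{2,k}$ for $k\ge 2$, $\tau_i^3$, and $\tau_{i,j}^4$ depend only on the type reports $(t_i^{0,k})_k$ and $(t^j)_j$; and the outcome $g(m)=\frac1J\sum_{j=1}^J f(t^j)$ likewise depends only on the type reports. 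Therefore, as a function of $E_i$ alone, agent $i$'s utility equals $\beta|E_i|$ plus a constant, provided every article in $E_i$ is actually available to her.

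Next I would invoke the cost structure built into the embedding into a general type space: ``presenting'' an unavailable article lowers utility by $2\bar{U}-1$ (in the notation of Appendix \ref{VirtualImplementation}), which dominates every possible gain from transfers and outcomes combined, so no rationalizable strategy ever has an agent present an article she does not hold. Combining the two observations, the feasible evidence reports for agent $i$ of true type $t_i^\ast$ are exactly the subsets of $\hat{E}_i(t_i^\ast)$, and among these the report $\hat{E}_i(t_i^\ast)$ strictly maximizes $\beta|E_i|$ while leaving every other term of $i$'s payoff unchanged. Hence $\hat{E}_i(t_i^\ast)$ strictly dominates every other evidence report, and so appears in the evidence coordinate of every rationalizable message of agent $i$, which is the claim.

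The one point I would be careful to pin down — and the only place an error could creep in — is the reading of $\tau_i^{2,1}$: one must confirm that it scores $i$'s \emph{prediction of others'} evidence, so that $i$'s own evidence report is payoff-irrelevant to her (it matters only to the agents predicting her). Once this is settled, the argument is immediate, type-by-type and message-by-message, and does not use the later claims or the higher-order measurability condition.
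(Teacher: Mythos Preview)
Your proposal is correct and takes essentially the same approach as the paper: the paper's proof simply notes that $\tau^1$ rewards additional evidence and that ``there are no other incentives which are affected by the provision of additional evidence,'' which is exactly your payoff-decomposition argument spelled out in more detail. Your careful check that $\tau_i^{2,1}$ depends on $E_{-i}$ rather than $E_i$ is the right thing to verify, and it is indeed the case.
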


\begin{proof}
If not, the agent can improve her payoff by providing additional articles of
evidence due to the reward from $\tau ^{1}$. There are no other incentives
which are affected by the provision of additional evidence.
\end{proof}

\begin{claim}
\label{truthaboutothers}Each agent $i$ only presents such reports in $%
(t_{i}^{0,k})_{k=1}^{\bar{k}+1}$ which induce the same $k^{th}$ order
beliefs as $t_{i}$ for any $k\leq \bar{k}+1$. Further, $f(t^{0,\bar{k}%
+1})=f(t)$.
\end{claim}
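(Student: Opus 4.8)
The plan is to induct on the belief order $k=1,\dots,\bar{k}+1$, interleaving the argument with the rounds of the rationalizability operator, and then to invoke higher-order measurability; throughout, write $t^{\ast}$ for the true state as in the surrounding proof. Three observations set things up. First, by Claim~\ref{allevidence} every rationalizable message of each agent $j$ presents $\hat{E}_{j}(t_{j}^{\ast})$, since doing so is (strictly) dominant. Second, inspecting the transfers, for $k\le\bar{k}$ the report $t_{i}^{0,k}$ enters only agent $i$'s own term $\tau_{i}^{2,k}$: it affects neither the outcome (which depends only on the reports $(t^{j})_{j}$) nor $\tau^{1}$, $\tau^{3}$, $\tau^{4}$, nor the scoring rules $\tau_{i}^{2,\ell}$ with $\ell\ne k$; the report $t_{i}^{0,\bar{k}+1}$ additionally enters $\tau_{i}^{3}$ and the $\tau_{i,j}^{4}$, but their combined swing is at most $\bar{\tau}^{3}+J\bar{\tau}^{4}$. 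Third, since the $\hat{\pi}_{i}^{k}$ are cumulative (collectively coherent) hierarchies, equality at order $k$ forces equality at all lower orders by marginalization, so it suffices to control the top reported order at each level.

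For the base case $k=1$: because every $j\ne i$ presents $\hat{E}_{j}(t_{j}^{\ast})$, agent $i$'s interim belief $q_{i}(t_{i}^{\ast})$ makes the realized evidence profile $E_{-i}$ distributed exactly as $\hat{\pi}_{i}^{1}(t_{i}^{\ast})$; since $\tau_{i}^{2,1}$ is the strictly proper quadratic score of the report $t_{i}^{0,1}$ against $E_{-i}$, every best response has $\hat{\pi}_{i}^{1}(t_{i}^{0,1})=\hat{\pi}_{i}^{1}(t_{i}^{\ast})$. For the inductive step, suppose that in every rationalizable message each $j$'s report $t_{j}^{0,k-1}$ induces $j$'s true hierarchy through order $k-1$. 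Then the object $\bigl(\hat{\pi}_{-i}^{k-1}(t_{-i}^{0,k-1}),\dots,\hat{\pi}_{-i}^{0}(t_{-i}^{0,k-1})\bigr)$ that $\tau_{i}^{2,k}$ scores $t_{i}^{0,k}$ against is a deterministic function of the \emph{true} profile $t_{-i}$, hence is distributed under $q_{i}(t_{i}^{\ast})$ precisely as $\hat{\pi}_{i}^{k}(t_{i}^{\ast})$. Properness of $\tau_{i}^{2,k}$ then forces $\hat{\pi}_{i}^{k}(t_{i}^{0,k})=\hat{\pi}_{i}^{k}(t_{i}^{\ast})$: for $k\le\bar{k}$ because $t_{i}^{0,k}$ influences nothing else in $i$'s payoff, and for $k=\bar{k}+1$ because the scaling $\min_{i}\bar{\beta}_{i}>\bar{\tau}^{3}+J\bar{\tau}^{4}$ makes the scoring-rule loss from any belief-altering deviation in $t_{i}^{0,\bar{k}+1}$ strictly exceed whatever reconfiguring $\tau^{3}$ and $\tau^{4}$ could return. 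Iterating the rationalizability operator propagates these conclusions to all $k\le\bar{k}+1$, which is the first assertion.

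For the second assertion, each $t_{i}^{0,\bar{k}+1}$ now induces the same belief hierarchy through order $\bar{k}$ as $t_{i}^{\ast}$. If $f(t^{0,\bar{k}+1})\ne f(t^{\ast})$, then by higher-order measurability together with the definition of $\bar{k}$ as the minimal order separating states with distinct $f$-values, there would be an agent $i$ and some $k\le\bar{k}$ with $\hat{\pi}_{i}^{k}(t_{i}^{0,\bar{k}+1})\ne\hat{\pi}_{i}^{k}(t_{i}^{\ast})$, a contradiction; hence $f(t^{0,\bar{k}+1})=f(t^{\ast})$.

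The step I expect to be the main obstacle is the bookkeeping inside the inductive step: making fully precise that, along any rationalizable path, the object against which the order-$k$ scoring rule is evaluated is a function of the \emph{other agents' true types only} — so that its law under $i$'s interim belief is genuinely $\hat{\pi}_{i}^{k}(t_{i}^{\ast})$, which is exactly what makes truthful prediction the unique optimum — and then disposing of the lone place where $t_{i}^{0,\bar{k}+1}$ also feeds $\tau^{3}$ and $\tau^{4}$, which must be handled via the scaling inequalities rather than by a clean ``affects nothing else'' argument.
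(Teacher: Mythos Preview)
Your proposal is correct and follows essentially the same approach as the paper: the scoring rules $\tau^{2,k}$ pin down truthful $k$th-order reports for $k\le\bar{k}$ since nothing else depends on them, the scaling $\min_{i}\bar{\beta}_{i}>\bar{\tau}^{3}+J\bar{\tau}^{4}$ handles $k=\bar{k}+1$, and higher-order measurability delivers $f(t^{0,\bar{k}+1})=f(t^{\ast})$. You are simply more explicit than the paper about the inductive structure (the order-$k$ scoring rule is evaluated against the other agents' order-$(k-1)$ reports, so truthfulness must be established level by level), which the paper's proof leaves implicit.
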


\begin{proof}
From Claim \ref{allevidence}, each agent presents all her evidence. Any
report $t_{i}^{0,k}$ for $k<\bar{k}+1$, only controls the agent's payoff
from $\tau ^{2}$ and therefore the only rationalizable report for any agent
is to present a report $t_{i}^{0,k}$ which induces the same belief as the
agent's true type $t_{i}$. Any other report leads to a strictly lower payoff
because $\tau ^{2}$ is a quadratic scoring rule and yields a unique maximum.

Consider the agent's report $t_{i}^{0,\bar{k}+1}$. It controls the agent's
payoff from $\tau ^{2}$, $\tau ^{3}$, and $\tau ^{4}$ Changing $t_{i}^{0,%
\bar{k}+1}$ leads to a loss of at least $\min_{i}\bar{\beta}_{i}$, which by
construction dominates gains from $\tau ^{3}$ and $\tau ^{4}$ (recall that $%
\min_{i}\bar{\beta}_{i}>J\bar{\tau}^{4}+\bar{\tau}^{3}$), so that it is not
profitable to lie in $t_{i}^{0,\bar{k}+1}$. Higher-order measurability then
yields that the reports in $t^{0,\bar{k}+1}$ yield the correct cell of the
partition of $T$ to which $t$ belongs.
\end{proof}

\begin{claim}
For any $j$, $t_{i}^{j}$ is the truth. Further, the mechanism implements and
for any profile of rationalizable messages $m$, $\tau _{i}^{3}\left(
m\right) =0$ and $\tau _{i}^{4}\left( m\right) =0$ for all $i$.
\end{claim}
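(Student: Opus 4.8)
The plan is to finish the proof of Theorem~\ref{rat_imp} by showing that, at every rationalizable message profile, the $J$ outcome-determining reports collapse onto the already-disciplined report $t^{0,\bar{k}+1}$, and then reading off implementation and the vanishing of $\tau^3$ and $\tau^4$ from this. Claims~\ref{allevidence} and~\ref{truthaboutothers} have done the heavy lifting on the first block of reports: every rationalizable message has agent $i$ submitting all of $\hat{E}_i(t_i^\ast)$, has each $t_i^{0,k}$ inducing the same $k$-th order belief as $t_i^\ast$ for every $k\le \bar{k}+1$, and --- by higher-order measurability, which is exactly what makes $\bar{k}$ well defined --- has $f(t^{0,\bar{k}+1})=f(t^\ast)$. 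So what remains is to pin down the reports $(t_i^j)_{j=1}^J$.

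First I would run the Abreu--Matsushima-style iterated-deletion argument on the last $J$ rounds. The relevant inequalities are the ones fixed in the scaling, $\bar{\tau}^3>1/J$ and $\min_i\bar{\beta}_i>\bar{\tau}^3+J\bar{\tau}^4>1/J$. Since the report $t_i^j$ enters agent $i$'s payoff only through the $j$-th lottery slot $\tfrac1J f(t^j)$, through $\tau^3$, and through the round-$j$ cross-check $\tau^4_{i,j}$, and since the normalization bounds the utility spread across outcomes below $1$, the outcome effect of any unilateral change of $t_i^j$ is strictly below $1/J$. Proceeding round by round from $j=1$ and maintaining the inductive hypothesis that every rationalizable message is truthful (equal to $t_i^{0,\bar{k}+1}$) in rounds $1,\dots,j-1$, an agent who reports $t_i^j\neq t_i^{0,\bar{k}+1}$ is then among the first to deviate from $t^{0,\bar{k}+1}$ and therefore incurs $-\bar{\tau}^3$ (together with $-\bar{\tau}^4$ from $\tau^4_{i,j}$), which strictly outweighs the sub-$1/J$ gain available in the $j$-th lottery slot; hence $t_i^j=t_i^{0,\bar{k}+1}$ is forced and the induction advances. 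I would also note that, by the scaling $\min_i\bar{\beta}_i>\bar{\tau}^3+J\bar{\tau}^4$, the bundle $\tau^3+\sum_j\tau^4_{i,j}$ is too small to upset the earlier conclusions of Claims~\ref{allevidence}--\ref{truthaboutothers}, so the two stages of the argument do not interfere.

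Once $t_i^j=t_i^{0,\bar{k}+1}$ for every $i$ and every $j$ at an arbitrary rationalizable profile $m$, the rest is immediate: the outcome is $\tfrac1J\sum_{j=1}^J f(t^{0,\bar{k}+1})=f(t^{0,\bar{k}+1})=f(t^\ast)$ by Claim~\ref{truthaboutothers}, so the mechanism implements; $\tau^3_i(m)=0$ because no agent ever deviates from $t^{0,\bar{k}+1}$; and $\tau^4_{i,j}(m)=0$ because $t_i^{0,\bar{k}+1}=t_i^j$ in every round. Finally, the bound $|\tau_i(m)|\le\varepsilon$ for \emph{every} profile $m$ (not just rationalizable ones) follows from having chosen $\beta$ small enough that $\tau^1+\tau^2$ is bounded by $\varepsilon$ while $\bar{\tau}^3+J\bar{\tau}^4<\min_i\bar{\beta}_i<\varepsilon$.

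I expect the genuinely delicate step to be making the round-by-round iteration airtight under interim correlated rationalizability rather than complete-information iterated dominance --- in particular, handling the case of an agent who already plans to deviate in some later round and therefore regards an extra deviation in round $j$ as nearly costless, since the flat $-\bar{\tau}^3$ fine has in effect already been ``spent.'' Resolving this is precisely where I would lean on the structure of \cite{AM92B} (and on Proposition~2 of \cite{P2008} for the belief-hierarchy bookkeeping): choosing the right extremal deviation round and the right order in which strategies are deleted, so that whenever a round-$j$ deviation has not yet been excluded one can still exhibit an alternative report that strictly improves against every surviving belief.
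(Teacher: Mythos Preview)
Your overall architecture matches the paper's: an Abreu--Matsushima-style induction on rounds $j=1,\dots,J$, after Claims~\ref{allevidence} and~\ref{truthaboutothers} have disciplined the first block of reports. But there is one ingredient the paper uses that is absent from your proposal, and it is exactly what plugs the gap you yourself flag at the end: \emph{evidence incentive compatibility}.

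The paper's inductive step splits on what the other agents do in round $j$. If all agents $k\neq i$ report truthfully in $t^{j}$, then evidence incentive compatibility says truth-telling is among agent~$i$'s best responses for the round-$j$ outcome slot, and the strictly positive $\tau^{4}$ penalty makes it the unique best response. If instead some agent $k\neq i$ lies in $t^{j}$, then by being truthful in round $j$ agent $i$ avoids $\tau^{3}$ entirely---even if $i$ deviates in a later round $j'$, the condition ``$t^{h}=t^{0,\bar{k}+1}$ for all $h<j'$'' fails at $h=j$ because of $k$'s deviation---so the marginal saving really is $\bar{\tau}^{3}>1/J$. Your argument, by contrast, asserts that a round-$j$ deviation costs $\bar{\tau}^{3}$; that is true, but it does not establish that the \emph{marginal} cost of deviating in round $j$ is $\bar{\tau}^{3}$. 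If agent $i$ is already deviating in some later round and the conjecture puts weight on others being truthful in round $j$, the $\bar{\tau}^{3}$ fine may be incurred under both the deviation and the truthful alternative, leaving only $\bar{\tau}^{4}$ as the deterrent---and the scaling does not guarantee $\bar{\tau}^{4}>1/J$. Leaning on ``the structure of \cite{AM92B}'' does not by itself close this; in \cite{AM92B} the analogue of EIC is built in, whereas here it is a hypothesis of Theorem~\ref{rat_imp} and must be invoked explicitly. Once you add the EIC case, the rest of your write-up (reading off $g(m)=f(t^\ast)$ and $\tau^{3}=\tau^{4}=0$) goes through as you have it.
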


\begin{proof}
The proof is by induction. Consider any report $t_{i}^{j}$, $j\geq 1$ and
suppose agents report truthfully all the way up to the ($j-1$)$^{th}$
message. If all agents $k\neq i$ report truthfully in $t^{j}$, then evidence
incentive compatibility ensures that truth-telling is among the best
responses for agent $i$. The small penalty from $\tau ^{4}$ makes it the
unique best response since from Claim \ref{truthaboutothers}, $t^{0,\bar{k}%
+1}$ is true. If other agents $k\neq i$ also lie in $t^{j}$, then the truth
yields the agent a profit of $\bar{\tau}^{3}$ from $\tau ^{3}$, which
dominates any gains from affecting the outcome with a probability $\frac{1}{J%
}$, since $\bar{\tau}^{3}>\frac{1}{J}$. Therefore, such a message is not
rationalizable. Therefore, it is rationalizable to present the truth in $%
t^{j}$. Then, the transfers $\tau ^{3}$ and $\tau ^{4}$ are clearly
inactive. Further, if every $t^{j}$ is true, then the outcome is chosen
correctly.
\end{proof}

\pagebreak 
\bibliographystyle{econometrica}
\bibliography{acompat,se}

\end{document}